\newcommand{\blind}{1}
\DeclareMathAlphabet{\mathsf}{OT1}{cmss}{m}{n}
\SetMathAlphabet{\mathsf}{bold}{OT1}{cmss}{bx}{n}
\newcommand{\EE}{\mathbb{E}}
\newcommand{\NN}{\mathbb{N}}
\newcommand{\PP}{\mathbb{P}}
\newcommand{\RR}{\mathbb{R}}
\newcommand{\bh}{\mathbf{h}}
\newcommand{\bk}{\mathbf{k}}
\newcommand{\bp}{\mathbf{p}}
\newcommand{\bx}{\mathbf{x}}
\newcommand{\by}{\mathbf{y}}
\newcommand{\bz}{\mathbf{z}}
\newcommand{\cD}{\mathcal{D}}
\newcommand{\cF}{\mathcal{F}}
\newcommand{\cH}{\mathcal{H}}
\newcommand{\cM}{\mathcal{M}}
\newcommand{\cN}{\mathcal{N}}
\newcommand{\cO}{\mathcal{O}}
\newcommand{\EI}{\mathop{\mathrm{EI}}}
\newcommand{\HEI}{\mathop{\mathrm{HEI}}}
\newcommand{\argmin}{\mathop{\mathrm{argmin}}}
\newcommand{\argmax}{\mathop{\mathrm{argmax}}}
\newcommand{\distas}[1]{\mathbin{\overset{#1}{\kern\z@\sim}}}%
\newcommand{\cbl}[1]{{{#1}}}
\providecommand{\norm}[1]{\|#1\|}
\newtheorem{assumption}{Assumption}
\newtheorem{theorem}{Theorem}
\newtheorem{lemma}{Lemma}
\newtheorem{proposition}{Proposition}
\newtheorem{condition}{Condition}
\begin{document}

 \def\spacingset#1{\renewcommand{\baselinestretch}%
{#1}\small\normalsize} \spacingset{1}

\if1\blind
{
\title{A hierarchical expected improvement method for Bayesian optimization}
\author{Zhehui Chen\thanks{Joint first author, H. Milton Stewart School of Industrial and Systems Engineering, Georgia Institute of Technology}, Simon Mak\thanks{Joint first author, Department of Statistical Science, Duke University}, C. F. Jeff Wu\thanks{Corresponding author, H. Milton Stewart School of Industrial and Systems Engineering, Georgia Institute of Technology} \footnote{This research is supported by ARO W911NF-17-1-0007, NSF DMS-1914632, NSF CSSI Frameworks 2004571, and NSF DMS 2210729.}}

% \email{zhchen@gatech.edu \cmtS{don't need I think}}

% \email{sm769@duke.edu}
 \maketitle
} \fi

\if0\blind
{
  \bigskip
  \bigskip
  \bigskip
  \begin{center}
    {\LARGE\bf A hierarchical expected improvement method for Bayesian optimization}
\end{center}
  \medskip
} \fi

\bigskip

\begin{abstract}
The Expected Improvement (EI) method, proposed by \cite{jones1998efficient}, is a widely-used Bayesian optimization method, which makes use of a fitted Gaussian process model for efficient black-box optimization. However, one key drawback of EI is that it is overly greedy in exploiting the fitted Gaussian process model for optimization, which results in suboptimal solutions even with large sample sizes. To address this, we propose a new hierarchical EI (HEI) framework, which makes use of a hierarchical Gaussian process model. HEI preserves a closed-form acquisition function, and corrects the over-greediness of EI by encouraging exploration of the optimization space. We then introduce hyperparameter estimation methods which allow HEI to mimic a fully Bayesian optimization procedure, while avoiding expensive Markov-chain Monte Carlo sampling steps. We prove the global convergence of HEI over a broad function space, and establish near-minimax convergence rates under certain prior specifications. Numerical experiments show the improvement of HEI over existing Bayesian optimization methods, for synthetic functions and a semiconductor manufacturing optimization problem.

\end{abstract}

\noindent%
{\it Keywords:} Black-box Optimization, Experimental Design, Hierarchical Modeling, Gaussian Process, Global Optimization, Uncertainty Quantification.
\vfill
\newpage
\spacingset{1.45} % DON'T change the spacing!

\section{Introduction}

Bayesian optimization (BO) is a widely-used optimization framework, which has broad applicability in a variety of problems, including rocket engine design \citep{mak2018efficient}, nanowire yield optimization \citep{dasgupta2008statistical}, and surgery planning \citep{chen2020function}. BO aims to solve the following black-box optimization problem:
\begin{align}\label{eqn:go}
 \bx^* = \argmin_{\bx\in \Omega} f(\bx).
\end{align}
Here, $\bx\in\RR^d$ are the input variables, and $\Omega\subset \RR^d$ is the feasible domain for optimization. The key challenge in \eqref{eqn:go} is that the objective function $f(\cdot): \Omega \rightarrow \RR$ is assumed to be black-box: it admits no closed-form expression, and evaluations of $f$ ({which we presume to be noiseless in this work}) typically require expensive simulations or experiments. For such problems, an optimization procedure should find a good solution to \eqref{eqn:go} given {limited} function evaluations. BO achieves this by first assigning to $f$ a \textit{prior model} capturing prior beliefs on the objective function, then sequentially querying $f$ at points which maximize the \textit{acquisition function} -- the posterior expected utility of a new point. This provides a principled way {to perform} the so-called \textit{exploration-exploitation trade-off} \citep{kearns2002near}: {exploring} the black-box function over $\Omega$, and {exploiting} the fitted function when appropriate for optimization.

% architecture search in neural network structure, policy search in

% , because of its modeling flexibility and appealing theoretical properties

Much of the literature on BO can be categorized by (i) the prior stochastic model assumed on $f$, and (ii) the utility function used for sequential sampling. For (i), the most popular stochastic model is the Gaussian process (GP) model~\citep{santner2003design}. Under a GP model, several well-known BO methods have been derived using different utility functions for (ii). These include the expected improvement (EI) method \citep{mockus1978application,jones1998efficient}, the upper confidence bound (UCB) method~\citep{Srinivas:2010:GPO:3104322.3104451}, and the Knowledge Gradient method~\citep{Frazier:2008:KPS:1461633.1461641,scott2011correlated}. Of these, EI is arguably the most popular method, since it admits a simple \textit{closed-form} acquisition function, which can be efficiently optimized to yield subsequent query points on $f$. EI has been subsequently developed for a variety of black-box optimization problems, including multi-fidelity optimization \citep{zhang2018variable}, constrained optimization \citep{feliot2017bayesian}, and parallel/batch-sequential optimization \citep{marmin2015differentiating}.

Despite the popularity of EI, it does have key limitations. One such limitation is that it is too \textit{greedy} \citep{qin2017improving}: EI focuses nearly all sampling efforts near the optima of the \textit{fitted} GP model, and does not sufficiently explore other regions. In terms of the exploration-exploitation trade-off \citep{kearns2002near}, EI over-exploits the fitted model on $f$, and under-explores the domain; this causes the procedure to get stuck in local optima and not converge to any global optimum $\bx^*$ \citep{bull2011convergence}. {In recent work, an effective way to address this greediness is by \textit{integrating uncertainty on model parameters} within the EI acquisition function. \cite{snoek2012practical} proposed a fully Bayesian EI, where GP model parameters are sampled using Markov chain Monte Carlo (MCMC); this incorporates parameter uncertainty within EI via a fully Bayesian framework, which enables improved optimization by encouraging exploration}. \cite{chen2017sequential} proposed a variation of EI under an additive Bayesian model, which encourages exploration by increasing model uncertainty. {However, by integrating parameter uncertainty, a \textit{significant} bottleneck of these methods is that it requires \textit{expensive} MCMC sampling, which can be very costly to integrate within the maximization of the acquisition function.} This computational burden diminishes a key advantage of EI: efficient queries via a \textit{closed-form} criterion.  

Another remedy, proposed by \cite{bull2011convergence}, is to artificially inflate the maximum-likelihood estimator of the GP variance, and use this inflated estimate within the EI acquisition function for sequential sampling. This, in effect, encourages exploration of the black-box function by inflating the uncertainty of the fitted model. The work further employs an ``$\epsilon$-greedy'' modification of the EI, where at each sampling iteration, one selects the next point uniformly-at-random with probability $\epsilon \in (0,1)$. With a sufficiently large $\epsilon$, this again forces the procedure to explore the feasible domain. {While the EI with such modifications (we call this the $\epsilon$-EI later) allows for appealing theoretical properties, such an approach may yield suboptimal optimization performance for black-box problems with limited function evaluations, as we show later in numerical experiments. One reason is that such adjustments, while indeed encouraging exploration, does so in a rather \textit{ad-hoc} manner from a modeling perspective. As such, when evaluations are limited, the $\epsilon$-EI can be suboptimal for striking a good balance between exploration and exploitation, particularly for higher-dimensional problems.}

% This is demonstrated later in a suite of simulation experiments.

% this {\it ad-hoc} method lacks principle and performs bad when the budget is limited and the dimension of the problem is high, where a random selected point almost wastes the resources.
% \citep{sutton2018reinforcement}

{To address these limitations, we propose a novel Hierarchical EI (HEI) framework for effective Bayesian optimization with limited function evaluations. Similar to \cite{snoek2012practical} and work thereafter, the HEI integrates parameter uncertainty on model parameters, but does so via the \textit{hierarchical} GP model in \cite{handcock1993bayesian}, which we show yields a \textit{closed-form} acquisition function for Bayesian optimization. This closed-form enables \textit{efficient} sequential sampling, without the need for integrating expensive MCMC samples in optimizing subsequent queries. With this hierarchical framework, we further introduce hyperparameter estimation methods, which allow the HEI to mimic a fully Bayesian optimization procedure (the ``gold standard'' for uncertainty quantification) while avoiding expensive MCMC steps. Under certain prior specifications, we then prove that the HEI indeed converges to a global optimum $\bx^*$ over a broad function space for $f$ and achieves a near-minimax convergence rate. These theoretical guarantees are similar to those achieved by the $\epsilon$-EI \citep{bull2011convergence} and more recent results (see, e.g., \citealp{wynne2020convergence}), but are achieved under a principled hierarchical Bayesian framework, without the need for ad-hoc variance inflation or $\epsilon$-greedy sampling. As such, the proposed HEI provides a principled balance between exploration and exploitation guided by the underlying hierarchical GP model, which in turn yields improved practical performance given limited function evaluations. We finally demonstrate the sample-efficient performance of HEI over existing methods in a suite of numerical experiments and an application to semi-conductor manufacturing. }

We note that a special case of HEI, called the Student EI (SEI), was proposed earlier in \cite{benassi2011robust}.   The proposed HEI has several key advantages over the SEI: the HEI incorporates uncertainty on process nonstationarity, and can mimic a fully Bayesian optimization procedure via hyperparameter estimation. We also show that the HEI has provable global convergence and convergence rates for optimization, whereas the SEI (with the recommended hyperparameter specification) can fail to converge to a global optimum $\bx^*$. Numerical experiments and a semiconductor manufacturing application show the improved performance of the HEI over existing BO methods, including {the SEI approach in \cite{benassi2011robust} and the $\epsilon$-EI approach in \cite{bull2011convergence}}.

The paper is organized as follows: Section~\ref{sec:pro} reviews the GP model and the EI method. Section~\ref{sec:model} presents the HEI method and contrasts it with existing methods. Section~\ref{sec:hyper} provides methodological developments on hyperparameter specification and basis selection. Section~\ref{sec:the} proves the global convergence for HEI and its associated convergence rates.  Sections~\ref{sec:exp} and \ref{sec:app} compare HEI with existing methods in a suite of numerical experiments and for a semiconductor manufacturing problem, respectively. Concluding remarks are given in~Section~\ref{sec:conc}.

%
%
%\noindent{Goal: solve a black box optimization problem using a GP based model.}
%
%
%\noindent{Existing results: EI with Simple Kriging and Plug-in, what's their problems.}
%
%
%\noindent{Our proposed method: Hierarchical EI with a unknown $\beta$ and unknown $\beta$ and $\sigma$.}
%

\section{Background and Motivation}\label{sec:pro}
We first introduce the GP model, then review the EI method and its deficiencies, which will help motivate the proposed HEI method.

% \vspace{0.05in}
\subsection{Gaussian Process Modeling}

We first model the black-box objective function $f$ as the following Gaussian process model:
\begin{align}\label{eqn:uk}
f(\bx)= \mu(\bx) + Z(\bx),\quad  \mu(\bx) = \bp^\top(\bx) \boldsymbol{\beta}, \quad Z(\bx)\sim \mathrm{GP}(0,\sigma^2 K),
\end{align}
where $\mu(\bx)$ is the mean function of the process, $\bp(\bx) = [p_1(\bx),\cdots,p_q(\bx)]^\top$ are the $q$ basis functions for $\mu(\bx)$, and $\boldsymbol{\beta}\in\RR^q$ are its corresponding coefficients. Here, we assume the residual term $Z(\bx)$ follows a stationary Gaussian process prior \citep{santner2003design} with mean zero, process variance $\sigma^2$ and correlation function $K(\cdot,\cdot)$, which we denote as $Z(\bx) \sim \mathrm{GP}(0,\sigma^2 K)$. The model \eqref{eqn:uk} is known as the \textit{universal kriging} (UK) model in the geostatistics literature~\citep{wackernagel1995multivariate}. When \cbl{the modeler does not have prior information on appropriate basis functions to use}, one can employ the so-called \textit{ordinary kriging} (OK) model \citep{olea2012geostatistics}, which sets a constant mean for $\mu(\bx)$, i.e., $p_1(\bx)=1$ and $q=1$. {In what follows, we assume that the kernel $K$ is radial with length-scale parameters $\boldsymbol{\theta}$; this is more formally defined later in Section \ref{sec:the} (see equation \eqref{eq:ker}).}
%\begin{align}
% K(\bx,\bx') = \frac{2^{-3/2}}{\Gamma(5/2)} \big(\sqrt{5}\sum_{i=1}^d \frac{\lvert x_i-x_i'\rvert}{\theta_i }\big)\cB_{5/2}\big(\sqrt{5}\sum_{i=1}^d \frac{\lvert x_i-x_i'\rvert}{\theta_i }\big).
%\end{align}
%Here, $\theta_1, \cdots, \theta_d$'s are scale parameters, $\cB_{5/2}$ is the modified Bessel function  with parameter $5/2$.

Suppose {noiseless} function values $y_i = f(\bx_i)$ are observed at inputs $\bx_i$, yielding data $\cD_n = \{(\bx_i,y_i)\}_{i=1}^n$. Let $\by_n = (y_i)_{i=1}^n$ be the vector of observed function values, $\bk_n(\bx) = (K(\bx,\bx_i))_{i=1}^n$ be the correlation vector between the unobserved response $f(\bx)$ and observed responses $\by_n$, $\mathbf{K}_n = {(K(\bx_i,\bx_j))_{i,j=1}^n}$ be the correlation matrix for observed points, and $\mathbf{P}_n = [\bp(\bx_1),\cdots,\bp(\bx_n)]^\top$ be the model matrix for observed points.  %Then, assuming the parameters $(\boldsymbol{\beta}, \sigma^2)$ are known, the posterior mean of $f(\bx)$ given data $\mathcal{D}_n$ can be shown to be:
% \begin{equation}
% f_n(\bx) = \mathbb{E}[f(\bx)|\mathcal{D}_n] =  \bp^{\top}(\bx)\boldsymbol{\beta} + \bk^{\top}_n(x) K_n^{-1} (\by_n - P_n \boldsymbol{\beta} ).
% \end{equation}
% Plugging in the maximum likelihood estimator (MLE) $ \hat{\boldsymbol{\beta}}_n = G_n^{-1} P_n^\top K_n^{-1}\by_n$,
% the posterior mean $f_n(\bx)$ can be estimated as:
% \begin{equation}
% \hat{f}_n(\bx) = \bp^{\top}(\bx)\hat{\boldsymbol{\beta}}_n + \bk^{\top}_n(x) K_n^{-1} (\by_n - P_n \hat{\boldsymbol{\beta}}_n ).
% \end{equation}
%In practice, the parameters $\boldsymbol{\beta}$ and $\sigma^2$ are usually {\it unknown}. The standard approach is to use the maximum likelihood estimates (MLEs) for these parameters:
% \begin{align}%\label{eqn:esti}
%  \hat{\boldsymbol{\beta}}_n &= G_n^{-1} P_n^\top K_n^{-1}\by_n\quad \textrm{and} \quad \hat{\sigma}^2_n  = \frac{(\by_n -P_n \hat{\boldsymbol{\beta}}_n)^\top K_n^{-1} (\by_n -P_n \hat{\boldsymbol{\beta}}_n)}{n},
% \end{align}
%where $G_n = P^\top_n K_n^{-1} P_n$. 
Then the posterior distribution of $f(\bx)$ at an unobserved input $\bx$ has the closed form expression \citep{santner2003design}:
\begin{align}\label{eqn:model}
[f(\bx)\big|\cD_n] \sim \cN\Big( \hat{f}_n(\bx), \sigma^2 s_n^2(\bx)\Big)~.
\end{align}
Here, the posterior mean $\hat{f}_n(\bx)$ is given by:
\begin{align}
\label{eqn:hat}
    \hat{f}_n(\bx)=\bp^{\top}(\bx)\hat{\boldsymbol{\beta}}_n + \bk^{\top}_n(\bx) \mathbf{K}_n^{-1} \left(\by_n - \mathbf{P}_n \hat{\boldsymbol{\beta}}_n \right),
\end{align}
the posterior variance $\sigma^2 s_n^2(\bx)$ is given by:
\begin{align}
\label{eqn:post_var}
    \sigma^2 s_n^2(\bx) = \sigma^2 \left(K(\bx,\bx)-\bk_n^\top(\bx) \mathbf{K}_n^{-1}\bk_n(\bx)+\bh^{\top}_n(\bx) \mathbf{G}_n^{-1} \bh_n(\bx)\right),
\end{align}
where $\hat{\boldsymbol{\beta}}_n = \mathbf{G}_n^{-1} \mathbf{P}_n^\top \mathbf{K}_n^{-1}\by_n$ is the maximum likelihood estimator (MLE) for $\boldsymbol{\beta}$, $\mathbf{G}_n = \mathbf{P}^\top_n \mathbf{K}_n^{-1} \mathbf{P}_n$ and $\bh_n(\bx) = \bp(\bx) -  \mathbf{P}_n^\top \mathbf{K}_n^{-1}\bk_n(\bx)$. These expressions can be equivalently viewed as the best linear unbiased predictor of $f(\bx)$ {under the presumed GP model} and its variance (see Section 3 of \citealp{santner2003design} for further details).

\subsection{Expected Improvement}

The EI method \citep{jones1998efficient} then makes use of the above closed-form equations to derive a closed-form acquisition function. Let $y_n^*=\min_{i=1}^n y_i$ be the current best objective value, and let $(y_n^*- f(\bx))_+ = \max\{y_n^*- f(\bx),0\}$ be the \textit{improvement} utility function. Given data $\cD_n$, the expected improvement acquisition function can be written as:
\begin{equation}
\EE_{f|\cD_n}(y_n^*- f(\bx))_+ = I_n(\bx)\Phi\left(\frac{I_n(\bx)}{\sigma s_n(\bx)}\right) + \sigma s_n(\bx) \phi\left(\frac{I_n(\bx)}{\sigma s_n(\bx)}\right).
\label{eqn:eidef}
\end{equation}
Here, $\phi(\cdot)$ and $\Phi(\cdot)$ denote the probability density function (p.d.f.) and cumulative density function (c.d.f.) of the standard normal distribution, respectively, and $I_n(\bx) = y_n^*-\hat{f}_n(\bx)$. For an unobserved point $\bx$, $\EE_{f|\cD_n}(y_n^*- f(\bx))_+$ can be interpreted as the expected improvement to the current best objective value over $\bx$, if the next query is at point $\bx$.

In order to compute the acquisition function in Equation \eqref{eqn:eidef}, we would need to know the process variance $\sigma^2$. In practice, however, this is typically unknown and needs to be estimated from data. A standard approach \citep{bull2011convergence} is to compute its MLE:
\begin{align}
    \hat{\sigma}^2_n  = \frac{1}{n}(\by_n -\mathbf{P}_n \hat{\boldsymbol{\beta}}_n)^\top \mathbf{K}_n^{-1} (\by_n -\mathbf{P}_n \hat{\boldsymbol{\beta}}_n),
\end{align}
then plug-in this into the acquisition function \eqref{eqn:eidef}. This gives the following plug-in estimate of the EI acquisition function:
\begin{align}\label{eqn:EI}
\textstyle \EI_n(\bx) = \underbrace{I_n(\bx)\Phi\left(\frac{I_n(\bx)}{\hat{\sigma}_n s_n(\bx)}\right)}_{\textbf{Exploitation}} + \underbrace{\hat{\sigma}_n s_n(\bx) \phi\left(\frac{I_n(\bx)}{\hat{\sigma}_n s_n(\bx)}\right)}_{\textbf{Exploration}}.
\end{align}
{In practice, the length-scale parameters $\boldsymbol{\theta}$ of the GP are typically estimated via MLE and plugged into the estimated EI criterion \eqref{eqn:EI} as well. }

With \eqref{eqn:EI} in hand, the next query point $\bx_{n+1}$ is obtained by maximizing the EI acquisition function $\textrm{EI}_n(\bx)$:
\begin{equation}
\bx_{n+1} \leftarrow \argmax_{\bx\in\Omega} \textstyle \EI_n(\bx).
\label{eq:eiopt}
\end{equation}
This maximization of \eqref{eqn:EI} implicitly captures the aforementioned \textit{exploration-exploitation trade-off}: {exploration} of the feasible region and {exploitation} near the current best solution. The maximization of the first term in \eqref{eqn:EI} encourages \textit{exploitation}, since larger values are assigned for points $\bx$ with smaller predicted values $\hat{f}_n(\bx)$. The maximization of the second term in \eqref{eqn:EI} encourages \textit{exploration}, since larger values for points $\bx$ indicate larger (estimated) predictive standard deviation $\hat{\sigma}_n s_n(\bx)$.

% given standard deviation $s_n(\bx)$, the exploitation part can be increased by decreasing the posterior mean function $\hat{f}_n(\bx)$; given the posterior mean function $\hat{f}_n(\bx)$, the exploration part can be increased by increasing $s_n(\bx)$. Therefore, maximizing EI essentially finds some points with small mean value and large variance, which is a balance between exploration and exploitation. Moreover, this also indicates that a large variance $s_n(\bx)$ encourages EI to explore more. 
%The standard EI approach is to plug-in these MLEs within $\textrm{EI}_n(\bx)$ to optimize the next query point $\bx_{n+1}$. 
However, one drawback of EI is that it fails to capture the full uncertainty of model parameters within the acquisition function $\mathrm{EI}_n(\bx)$. This results in an \textit{over-exploitation} of the fitted GP model for optimization, and an \textit{under-exploration} of the black-box function over $\Omega$. This over-greediness has been noted in several works, e.g., \cite{bull2011convergence} and \cite{qin2017improving}. In particular, Theorem~3 of \cite{bull2011convergence} showed that, for a common class of correlation functions for $K$ (see Assumption \ref{ass:K} later), there always exists some smooth function $f$ within a function space $\mathcal{H}_{\boldsymbol{\theta}}(\Omega)$ (defined later in Section~\ref{sec:the}) such that EI fails to find a global optimum of $f$. This is stated formally below:

\begin{proposition}[Theorem 3, \citealp{bull2011convergence}]
\label{prop:ei}
Suppose Assumption \ref{ass:K} holds with $\nu<\infty$. Suppose initial points are sampled according to some probability measure $F$ over $\Omega$. Let  $(\bx_i)_{i=1}^\infty$ be the points generated by maximizing $\textup{EI}_n$ in \eqref{eqn:EI}{, with plug-in MLEs for GP length-scale parameters $\boldsymbol{\theta}$ satisfying Assumption \ref{ass:theta} (introduced later)}. Then, for any $\epsilon>0$, there exist some $f\in \cH_{\boldsymbol{\theta}}(\Omega)$ and some constant $\delta>0$ such that
\begin{equation*}%\label{eqn:eibad}
\PP_F\left( \lim_{n\rightarrow \infty}y_n^* - \min_{\bx\in\Omega}f(\bx)\geq \delta\right)>1-\epsilon.    
\end{equation*}
\end{proposition}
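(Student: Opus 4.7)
The plan is to produce an adversarial construction: a function $f\in\cH_{\boldsymbol{\theta}}(\Omega)$ whose global minimum is hidden in a small region, together with an event of large $F$-probability on which the EI iterates never enter that region. The core mechanism driving the failure is that the MLE plug-in $\hat{\sigma}_n$ collapses when observed responses are nearly constant, which in turn shuts down the exploration term of $\EI_n$ in \eqref{eqn:EI}. So the strategy is to exploit this feedback loop between a greedy EI iterate and a shrinking $\hat{\sigma}_n$.

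First, I would set up the construction. Fix a small ball $B\subset\Omega$ (the ``hidden'' region) whose $F$-measure is at most $\epsilon/2$, so that with probability at least $1-\epsilon/2$ the random initial design misses $B$. Then build $f$ so that (i) $f\equiv c$ on $\Omega\setminus B$ for some constant $c$, (ii) $f$ dips to a minimum value $c-\delta_0$ inside $B$ for some fixed $\delta_0>0$, and (iii) $f\in\cH_{\boldsymbol{\theta}}(\Omega)$ with controlled norm. Under the Matérn-type assumption (Assumption \ref{ass:K} with $\nu<\infty$), such bump constructions are standard: one can take a smooth bump supported in $B$ with RKHS norm controlled by the bump width and height. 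This gives the candidate adversarial $f$ and the probability-$(1-\epsilon/2)$ ``good'' event $E$ that all initial design points lie outside $B$.

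Next, I would show that on $E$, the EI iterates remain in $\Omega\setminus B$ forever. This is the heart of the argument and the main obstacle. On $E$, every observed response equals $c$, so the posterior mean $\hat{f}_n$ from \eqref{eqn:hat} is identically $c$, the current best is $y_n^*=c$, hence $I_n(\bx)=c-\hat{f}_n(\bx)=0$ for all $\bx$. The MLE becomes $\hat{\sigma}_n=0$ exactly, which makes $\EI_n(\bx)\equiv 0$ on the ambiguity set. To push past the trivial limit $0/0$ issue in \eqref{eqn:EI}, I would treat the first iteration at which a point in $B$ would break this degeneracy and argue, via continuity of the $\argmax$, that EI is free to pick any maximizer of $\sigma s_n(\bx)$, which in general will still lie in $\Omega\setminus B$ (for instance by symmetry, or by a tiny perturbation of the construction that breaks ties outside $B$). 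A cleaner variant, which is what Bull uses, is to take $f$ nearly constant but not exactly constant outside $B$: then $\hat{\sigma}_n>0$ but very small, the exploitation term dominates the exploration term by a factor that grows with $n$, and a direct computation shows that the EI maximizer lies in a shrinking neighborhood of the current best outside $B$.

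Finally, I would close the loop by iterating this trapping argument. Since the EI iterates never enter $B$, the observed minimum satisfies $y_n^*\geq c$ for all $n$, while $\min_{\bx\in\Omega}f(\bx)=c-\delta_0$. Setting $\delta=\delta_0$ and combining with $\PP_F(E)\geq 1-\epsilon/2>1-\epsilon$ yields
\begin{equation*}
\PP_F\left(\lim_{n\to\infty}y_n^*-\min_{\bx\in\Omega}f(\bx)\geq \delta\right)\geq \PP_F(E)>1-\epsilon,
\end{equation*}
which is the claim. The two places requiring care are (a) verifying $f\in\cH_{\boldsymbol{\theta}}(\Omega)$ with bounded norm under Assumption \ref{ass:K}, which reduces to standard bump-function estimates in Matérn RKHSs of finite smoothness $\nu$, and (b) the degeneracy handling in the trapping step, which I would resolve by the ``almost-constant'' perturbation trick rather than an exact plateau, so that $\hat{\sigma}_n$ is strictly positive but decays faster than $s_n$ on $B$ can grow.
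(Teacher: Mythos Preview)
The paper does not itself prove Proposition~\ref{prop:ei} (it is cited from \cite{bull2011convergence}), but the proof of the analogous Proposition~\ref{prop:sei} in Appendix~\ref{pro:prop:sei} reproduces Bull's construction and reveals the gap in your plan. The problem is the direction of the bump: you place a \emph{valley} directly in $B$. On the event $E$ you then need $\EI_n$ to be strictly smaller on $B$ than somewhere outside, but a valley gives the opposite tendency: for $\bx\in B$ the quantity $(y_n^*-f(\bx))/s_n(\bx)$ is nonnegative, and by Lemma~\ref{lem:err} so is $I_n(\bx)/s_n(\bx)$ up to an RKHS-norm shift, so nothing suppresses $\EI_n$ there. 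Your ``$\hat\sigma_n$ collapses'' route does not close this either: as $\hat\sigma_n\downarrow 0$ one has $\EI_n(\bx)\to(I_n(\bx))_+$, and with $\hat f_n\approx c$ everywhere the argmax is indeterminate; your symmetry/tiebreak suggestion actually points the wrong way, since $s_n$ is largest where there is no data, i.e., inside $B$. The claim that ``$\hat\sigma_n$ decays faster than $s_n$ on $B$ can grow'' is unsupported and is not even the relevant comparison (on an unvisited $B$, $s_n$ stays bounded between fixed positive constants).

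Bull's actual construction, mirrored in Appendix~\ref{pro:prop:sei}, is two-stage and uses a \emph{hill}. First build $h$ with $h\equiv 0$ on the large set $\mathcal Y_0$ and $h\equiv 1$ on a closed $\mathcal X_1\subset\mathcal X_0$. On the event that the initial design lands in $\mathcal Y_0$ (so $y_n^*=0$), argue by contradiction: if $\mathcal X_1$ were visited infinitely often, Lemma~\ref{lem:upper} forces $s_n(\bx_{n+1})\to 0$ along that subsequence, hence $T_n(\bx_{n+1})=-1/s_n(\bx_{n+1})\to-\infty$ and, via Lemma~\ref{lem:err}, $I_n/s_n\to-\infty$; this drives the acquisition value at $\bx_{n+1}$ below its value at some $\bz\in\mathcal Y_0$ furnished by Lemma~\ref{lemma:sup}, contradicting the argmax property. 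So $\mathcal X_1$ is visited only finitely often. Only \emph{then} is a second smooth bump $g$, with minimum $-2$, planted inside a never-visited open $\mathcal W\subset\mathcal X_1$; the adversarial target is $\tilde h=h+g$, which agrees with $h$ at every sampled point and is therefore indistinguishable to the algorithm. The hill creates the trap; the hidden valley is installed only after the trap is certified.
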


\noindent This proposition shows that, even for relatively simple objective functions $f$, the EI may fail to converge to a global minimum due to its over-exploitation of the fitted GP model. This is concerning not only from an asymptotic perspective, but also in practical problems with limited samples. As we show later, this overexploitation can cause the EI to get stuck on suboptimal solutions, resulting in poor empirical performance compared to the HEI.

\subsection{Existing solutions and limitations}

{As mentioned in the introduction, there are several notable methods in the literature which aim to tackle this over-exploitation. One approach which has proven effective is to \textit{integrate uncertainty on GP model parameters} within the EI criterion. An early work in this direction is \cite{snoek2012practical}, which made use of a fully Bayesian formulation of the EI with a full quantification of uncertainty in all GP model parameters. This integration of parameter uncertainty has subsequently been developed in recent works; see, e.g., \cite{chen2017sequential}. While effective, a key limitation with these methods is that this integration of uncertainty requires not only MCMC sampling of model parameters, but also the use of such samples within the optimization of the estimated EI criterion \eqref{eq:eiopt}. Particularly in higher dimensions, where many MCMC samples are required, this can greatly slow down and even worsen the optimization performance of subsequent points via \eqref{eq:eiopt}.}

{Another approach, as recommended in \cite{bull2011convergence}, is to encourage further exploration via an artificial inflation of the MLE for the variance parameter. In particular, in place of the MLE $\hat{\sigma}^2_n$, one instead uses the inflated estimator $n \hat{\sigma}^2_n$ within the estimated EI criterion \eqref{eqn:EI}. Along with an $\epsilon$-greedy modification, where points are added uniformly-at-random with probability $\epsilon$ at each iteration, \cite{bull2011convergence} proved this modified EI approach (which we call the $\epsilon$-EI) can indeed achieve convergence and a near-minimax convergence rate for global optimization, thus addressing the earlier limitation from Proposition \ref{prop:ei}. Despite its appealing theoretical properties, the $\epsilon$-EI may yield a suboptimal trade-off between exploration and exploitation, particularly with limited function evaluations. This may be due to the use of an artificially inflated variance parameter (which is ad-hoc from a modeling perspective), or the use of uniformly-random points within $\epsilon$-greedy sampling (which may be inefficient). This then results in suboptimal optimization performance given limited function evaluations, which we show later in numerical experiments.}

\section{Hierarchical Expected Improvement}\label{sec:model}

{To address the aforementioned limitations, we present a new Hierarchical EI (HEI) method, which integrates parameter uncertainty within a \textit{closed-form} acquisition function, thus allowing for efficient optimization of sequential queries. In employing a hierarchical Bayesian framework for expected improvement, the HEI can be shown to enjoy similar theoretical convergence guarantees as the $\epsilon$-EI in \cite{bull2011convergence} for global optimization, thus addressing the limitation from Proposition \ref{prop:ei} for the standard EI. In contrast to the $\epsilon$-EI, however, the HEI achieves this purely via a hierarchical Bayesian model, without a need for an ad-hoc variance inflation or $\epsilon$-greedy sampling. As such, the HEI can provide an improved trade-off between exploration and exploitation in practice, which then translates to better optimization performance given limited function evaluations, as we show later.}

A key ingredient for the HEI is a \textit{hierarchical} GP model on $f(\bx)$. Let us adopt the universal kriging model \eqref{eqn:uk}, but now with the following hierarchical priors assigned on two \cbl{independent} parameters $(\boldsymbol{\beta},\sigma^2)$:
\begin{align}\label{eqn:gp}
\textstyle [\boldsymbol{\beta}]&\propto \mathbf{1},\quad \quad [\sigma^2]\sim \textrm{IG}(a,b).
\end{align}

In words, the coefficients $\boldsymbol{\beta}$ are assigned a flat improper ({i.e.}, non-informative) prior over $\RR^q$, and the process variance $\sigma^2$ is assigned a conjugate inverse-Gamma prior with shape and scale parameters $a$ and $b$, respectively. The idea is to leverage this hierarchical structure on model parameters to account for estimation uncertainty, while preserving a closed-form criterion for efficient sequential sampling. {In what follows, we first introduce the HEI for \textit{fixed} GP length-scale parameters $\boldsymbol{\theta}$ for ease of exposition; Section \ref{sec:alg} then presents the full HEI procedure with \textit{estimated} length-scales, with corresponding theoretical analysis in Section \ref{sec:the}.}

The next lemma provides the posterior distribution of $f(\bx)$ under this hierarchical model:

\begin{lemma}\label{thm:post}
Assume the universal kriging model \eqref{eqn:uk} with hierarchical priors \eqref{eqn:gp} and $n > q$. Given data $\mathcal{D}_n$, we have
\begin{align}
\big[\sigma^2 \big|\cD_n  \big]\sim \mathrm{IG}\big(a_n,b_n \big)\quad \textrm{and}\quad \big[\boldsymbol{\beta} \big| \cD_n  \big]\sim T_q \big(2a_n,\hat{\boldsymbol{\beta}}_n , \tilde{\sigma}_n^2 \mathbf{G}_n^{-1}\big),
\end{align}
where $a_n = a+(n-q)/2$, $b_n = b+n\hat{\sigma}_n^2/2$, $\tilde{\sigma}_n^2 = b_n/a_n$, and $T_q(\nu,\boldsymbol{\mu},\boldsymbol\Sigma)$ is a $q$-dimensional non-standardized t-distribution with degrees of freedom $\nu$, location vector $\boldsymbol{\mu}$ and scale matrix $\boldsymbol\Sigma$. Furthermore, the posterior distribution of $f(\bx)$ is 
\begin{align}\label{eqn:post}
\big[ f(\bx)\big| \cD_n \big] \sim T_1\big(2a+n-q, \hat{f}_n(\bx), \tilde{\sigma}_n s_n(\bx) \big).
\end{align}
\end{lemma}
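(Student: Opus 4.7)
The plan is to derive the joint posterior $[\boldsymbol{\beta},\sigma^2 \mid \cD_n]$ via conjugacy, then marginalize in two stages to obtain each of the three claimed distributions. First I would write out the likelihood
\[
[\cD_n \mid \boldsymbol{\beta},\sigma^2] \propto (\sigma^2)^{-n/2} \exp\!\left\{-\tfrac{1}{2\sigma^2}(\by_n-\mathbf{P}_n\boldsymbol{\beta})^\top \mathbf{K}_n^{-1}(\by_n-\mathbf{P}_n\boldsymbol{\beta})\right\},
\]
combine it with $[\boldsymbol{\beta}]\propto 1$ and $[\sigma^2]\sim\textrm{IG}(a,b)$, and complete the square in $\boldsymbol{\beta}$ around $\hat{\boldsymbol{\beta}}_n = \mathbf{G}_n^{-1}\mathbf{P}_n^\top\mathbf{K}_n^{-1}\by_n$. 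This yields the decomposition
\[
(\by_n-\mathbf{P}_n\boldsymbol{\beta})^\top\mathbf{K}_n^{-1}(\by_n-\mathbf{P}_n\boldsymbol{\beta}) = n\hat{\sigma}_n^2 + (\boldsymbol{\beta}-\hat{\boldsymbol{\beta}}_n)^\top\mathbf{G}_n(\boldsymbol{\beta}-\hat{\boldsymbol{\beta}}_n),
\]
which is the main algebraic step and makes the subsequent integrals transparent.

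Next I would obtain $[\sigma^2\mid\cD_n]$ by integrating out $\boldsymbol{\beta}$. The Gaussian integral over $\mathbb{R}^q$ produces a factor proportional to $(\sigma^2)^{q/2}$ (since $\mathbf{G}_n$ does not depend on $\sigma^2$), so the marginal density in $\sigma^2$ has exponent $-n/2 + q/2 - a - 1$ and residual $-(b+n\hat{\sigma}_n^2/2)/\sigma^2$; reading off parameters gives $\textrm{IG}(a_n,b_n)$ with $a_n=a+(n-q)/2$ and $b_n=b+n\hat{\sigma}_n^2/2$. Then $[\boldsymbol{\beta}\mid\sigma^2,\cD_n]\sim\cN(\hat{\boldsymbol{\beta}}_n,\sigma^2\mathbf{G}_n^{-1})$ follows directly from the completed square, and marginalizing over $\sigma^2\sim\textrm{IG}(a_n,b_n)$ produces the multivariate $t$-distribution $T_q(2a_n,\hat{\boldsymbol{\beta}}_n,\tilde{\sigma}_n^2\mathbf{G}_n^{-1})$ by the standard normal/inverse-gamma scale mixture identity.

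For the posterior of $f(\bx)$, I would condition first on $\sigma^2$ alone. Integrating $\boldsymbol{\beta}$ out of the joint $[f(\bx),\boldsymbol{\beta}\mid\sigma^2,\cD_n]$ (using that the predictive mean and variance of $f(\bx)$ given $(\boldsymbol{\beta},\sigma^2,\cD_n)$ are $\bp^\top(\bx)\boldsymbol{\beta}+\bk_n^\top(\bx)\mathbf{K}_n^{-1}(\by_n-\mathbf{P}_n\boldsymbol{\beta})$ and $\sigma^2(K(\bx,\bx)-\bk_n^\top(\bx)\mathbf{K}_n^{-1}\bk_n(\bx))$) reproduces exactly the universal kriging expressions \eqref{eqn:hat}--\eqref{eqn:post_var}, so $[f(\bx)\mid\sigma^2,\cD_n]\sim\cN(\hat{f}_n(\bx),\sigma^2 s_n^2(\bx))$. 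The same normal/inverse-gamma mixture step then gives $[f(\bx)\mid\cD_n]\sim T_1(2a_n,\hat{f}_n(\bx),\tilde{\sigma}_n s_n(\bx))$, and since $2a_n = 2a+n-q$ this matches \eqref{eqn:post}.

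The routine parts are the Gaussian integration and the scale-mixture identity; the main bookkeeping obstacle is verifying that the exponent in $\sigma^2$ after marginalizing $\boldsymbol{\beta}$ produces precisely the shift $-q/2$ in shape and the residual $n\hat{\sigma}_n^2/2$ in scale, so that $a_n$ and $b_n$ come out as stated. A minor subtlety worth flagging is the use of the improper prior on $\boldsymbol{\beta}$: one needs $n>q$ (the hypothesis of the lemma) to ensure $\mathbf{G}_n$ is invertible and that the marginalized posterior is proper, which I would note briefly before invoking the Gaussian integral.
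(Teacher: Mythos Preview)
Your proposal is correct and is precisely the standard conjugacy argument that the paper invokes: the paper does not give a detailed proof but simply states that the lemma follows from Chapter~4.4 of \cite{santner2003design}, which contains exactly the likelihood--complete-the-square--marginalize derivation you outline. Your write-up is more explicit than the paper's, but the approach is the same.
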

\noindent The proof of this lemma follows from Chapter 4.4 of \cite{santner2003design}. Lemma~\ref{thm:post} shows that under the universal kriging model~\eqref{eqn:uk} with hierarchical priors~\eqref{eqn:gp}, the posterior distribution of $f(\bx)$ is now a non-standarized $t$-distribution, with closed-form expressions for its location and scale parameters $\hat{f}_n(\bx)$ and $\tilde{\sigma}_n s_n(\bx)$. 

Comparing the predictive distributions in \eqref{eqn:model} and  \eqref{eqn:post}, there are several differences which highlight the increased predictive uncertainty from the hierarchical GP model. First, the new posterior \eqref{eqn:post} is now $t$-distributed, whereas the earlier posterior \eqref{eqn:model} is normally distributed, which suggests that the hierarchical model imposes heavier tails. 
Second, the scale term $\tilde{\sigma}^2_n$ in \eqref{eqn:post} can be decomposed as:
\begin{align}\label{eqn:bnan}
\tilde{\sigma}^2_n = (2b+n \hat{\sigma}_n^2)/(2a+(n-q)) > n/(2a+(n-q)) \cdot \hat{\sigma}_n^2.
\end{align}
When $a < q/2$ (which is satisfied via a weakly informative prior on $\sigma^2$), $\tilde{\sigma}^2_n$ is larger than the MLE $\hat{\sigma}_n^2$, which again increases predictive uncertainty.

Similar to the EI criterion \eqref{eqn:eidef}, we now define the HEI acquisition function as:
\begin{equation}
\textstyle\HEI_n(\bx) = \EE_{f|\cD_n}(y_n^*- f(\bx))_+,
\label{eqn:heidef}
\end{equation}
where the conditional expectation over $[f(\bx)|\cD_n]$ is under the hierarchical GP model. The proposition below gives a \textit{closed-form} expression for $\textstyle\HEI_n(\bx)$:
\begin{proposition}\label{thm:nei}
Assume the universal kriging model \eqref{eqn:uk} with hierarchical priors \eqref{eqn:gp} and $n>q$. Then:
\begin{align}\label{eqn:hei}
\textstyle\HEI_n(\bx)\hspace{-0.025in} =\hspace{-0.025in} \underbrace{I_n(\bx)\Phi_{\nu_n}\hspace{-0.025in} \left(\hspace{-0.025in}\frac{I_n(\bx)}{\tilde{\sigma}_ns_n(\bx)}\hspace{-0.025in}\right)}_{\mathbf{Exploitation}} \hspace{-0.025in}+\hspace{-0.025in} \underbrace{m_n\tilde{\sigma}_ns_n(\bx)\phi_{\nu_n-2}\hspace{-0.025in}\left(\hspace{-0.025in}\frac{I_n(\bx)}{m_n\tilde{\sigma}_ns_n(\bx)}\hspace{-0.025in}\right)}_{\mathbf{Exploration}} ,
\end{align}
where $m_n =\sqrt{\nu_n/(\nu_n-2)}$, $\nu_n = 2a_n$, and $\phi_{\nu_n}(x)$, $\Phi_{\nu_n}(x)$ denote the p.d.f. and c.d.f. of a Student's t-distribution with $\nu_n$ degrees of freedom, respectively. 
\end{proposition}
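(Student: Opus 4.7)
The plan is to reduce the expectation to a one-dimensional integral against a Student's $t$ density using Lemma~\ref{thm:post}, evaluate that integral in closed form, and then re-parametrize it so that the second term takes the advertised shape with $\phi_{\nu_n-2}$ rather than $\phi_{\nu_n}$.

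First I would fix notation. By Lemma~\ref{thm:post}, the posterior $[f(\bx)\mid\cD_n]$ is location-scale Student's $t$ with $\nu_n = 2a_n$ degrees of freedom, so I can write $f(\bx) = \hat{f}_n(\bx) + \tilde{\sigma}_n s_n(\bx) T$, with $T$ having density $\phi_{\nu_n}$. Setting $I = I_n(\bx)$ and $S = \tilde{\sigma}_n s_n(\bx)$, the improvement becomes $(y_n^* - f(\bx))_+ = (I - S T)_+$, which is positive precisely on $\{T < I/S\}$ (note $S > 0$). Splitting the expectation gives
\begin{equation*}
\HEI_n(\bx) = I\,\Phi_{\nu_n}\!\left(\tfrac{I}{S}\right) - S \int_{-\infty}^{I/S} t\,\phi_{\nu_n}(t)\,dt.
\end{equation*}

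Next I would evaluate the inner integral. For any $\nu > 1$, the antiderivative of $t\phi_\nu(t)$ is $-\tfrac{\nu+t^2}{\nu-1}\phi_\nu(t)$; this can be verified by differentiating using the identity $\phi_\nu'(t) = -\tfrac{(\nu+1)t}{\nu+t^2}\phi_\nu(t)$ coming directly from the Student's $t$ density $\phi_\nu(t) \propto (1+t^2/\nu)^{-(\nu+1)/2}$, and noting the boundary term at $-\infty$ vanishes since $\nu_n > 1$ once $n > q$. Plugging this in produces
\begin{equation*}
\HEI_n(\bx) = I\,\Phi_{\nu_n}\!\left(\tfrac{I}{S}\right) + S\,\tfrac{\nu_n + (I/S)^2}{\nu_n - 1}\,\phi_{\nu_n}\!\left(\tfrac{I}{S}\right).
\end{equation*}
This already matches the exploitation term; only the exploration term needs to be rewritten.

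The main obstacle, and the only nontrivial calculation, is showing the density identity
\begin{equation*}
\tfrac{\nu + z^2}{\nu-1}\,\phi_\nu(z) \;=\; m\,\phi_{\nu-2}(z/m), \qquad m = \sqrt{\tfrac{\nu}{\nu-2}},
\end{equation*}
applied at $\nu = \nu_n$ and $z = I/S$. To prove it I would separate the algebraic and the constant pieces. For the algebraic part, a direct substitution shows $1 + (z/m)^2/(\nu-2) = 1 + z^2/\nu$, so both sides have the common factor $(1+z^2/\nu)^{-(\nu-1)/2}$ after rewriting $\nu + z^2 = \nu(1+z^2/\nu)$. The remaining equality of normalizing constants reduces to
\begin{equation*}
\tfrac{1}{\nu-1}\tfrac{\Gamma((\nu+1)/2)}{\Gamma(\nu/2)} = \tfrac{1}{\nu-2}\tfrac{\Gamma((\nu-1)/2)}{\Gamma((\nu-2)/2)},
\end{equation*}
which is immediate from the recursion $\Gamma(x+1)=x\Gamma(x)$ applied to both numerator and denominator on the left.

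Finally, substituting this identity into the expression for $\HEI_n(\bx)$ converts the second summand into $m_n S\,\phi_{\nu_n-2}(I/(m_n S))$, yielding \eqref{eqn:hei}. The condition $n > q$ ensures $\nu_n > 2$ so that $m_n$ is real and the degrees of freedom for $\phi_{\nu_n-2}$ are positive; all other steps are routine integration and algebra.
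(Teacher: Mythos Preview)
Your proof is correct and follows essentially the same route as the paper: both split the expected improvement into the $\Phi_{\nu_n}$ term and an integral, evaluate the integral via the antiderivative of $t\phi_{\nu_n}(t)$, and then apply the Gamma recursion $\Gamma(x+1)=x\Gamma(x)$ to recast the result in terms of $\phi_{\nu_n-2}$. Your only cosmetic difference is that you standardize to $T\sim\phi_{\nu_n}$ before integrating, whereas the paper works directly with the non-standardized density; the underlying computations are identical.
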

\noindent Proposition~\ref{thm:nei} shows that the HEI criterion preserves the desirable properties of original EI criterion \eqref{eqn:EI}: it has an easily-computable, closed-form expression, which allows for efficient optimization of the next query point. This HEI criterion also has an equally interpretable exploration-exploitation trade-off. Similar to the EI criterion, the first term encourages exploitation near the current best solution $\bx_n^*$, and the second term encourages exploration of regions with high predictive variance.

More importantly, the differences between the HEI \eqref{eqn:hei} and the EI \eqref{eqn:EI} acquisition functions highlight how our approach addresses the over-greediness issue. There are three notable differences. First, the HEI exploration term depends on the $t$-p.d.f. $\phi_{\nu_n-2}$, whereas the EI exploration term depends on the normal p.d.f. $\phi$. Since the former has heavier tails, the HEI exploration term is inflated, which encourages exploration. Second, the larger scale term $\tilde{\sigma}^2_n$ (see \eqref{eqn:bnan}) also inflates the HEI exploration term and encourages exploration. Third, the HEI contains an additional adjustment factor $\sqrt{\nu_n/(\nu_n-2)}$ in its exploration term. Since this factor is larger than 1, HEI again encourages exploration. This adjustment is most prominent for small sample sizes, since the factor $\sqrt{\nu_n/(\nu_n-2)}\rightarrow 1$ as sample size $n \rightarrow \infty$. All three differences \textit{correct} the over-exploitation of EI via a principled hierarchical Bayesian framework. We will show later that these modifications for the HEI address the aforementioned theoretical and empirical limitations of the EI.

Finally, we note that the Student EI, proposed by \cite{benassi2011robust}, can be viewed as a special case of the HEI criterion, with (i) a constant mean function $\mu(\bx) = \mu$, and (ii) ``fixed'' hyperparameters $a$ and $b$ (in that it does not scale with sample size $n$) for the inverse-Gamma prior in \eqref{eqn:gp}. We will show later that the HEI, by generalizing (i) and (ii), can yield improved theoretical and empirical performance over the SEI. For (i), instead of the \textit{stationary} GP model used in the SEI, the HEI instead considers a broader \textit{non-stationary} GP model with mean function $\mu(\bx) = \bp^\top(\bx) \boldsymbol{\beta}$, and factors in the uncertainty on coefficients $\boldsymbol{\beta}$ for optimization. This allows HEI to integrate uncertainty on GP nonstationarity to encourage more exploration in sequential sampling. For (ii), \cite{benassi2011robust} recommended a ``fixed'' hyperparameter setting for the SEI, where the hyperparameters $a$ and $b$ do not scale with sample size $n$. However, the following proposition shows that the SEI (under such a setting) can fail to find the global optimum, under mild regularity conditions (see Assumptions \ref{ass:K} and \ref{ass:theta} later).
\begin{proposition}
\label{prop:sei}
Suppose Assumptions \ref{ass:K} and \ref{ass:theta} hold with $\nu<\infty$. Suppose initial points are sampled according to some probability measure $F$ over $\Omega$. Given fixed hyperparameters $a$ and $b$, let $(\bx_i)_{i=1}^\infty$ be the points returned by the SEI procedure. Then, for any $\epsilon>0$, there exist some $f\in \cH_{\boldsymbol{\theta}}(\Omega)$ and some constant $\delta>0$ such that
\begin{equation*}
\PP_F\left( \lim_{n\rightarrow \infty}y_n^* - \min_{\bx\in\Omega}f(\bx)\geq \delta\right)>1-\epsilon.    
\end{equation*}
\end{proposition}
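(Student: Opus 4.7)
The plan is to reduce Proposition~\ref{prop:sei} to Proposition~\ref{prop:ei} by showing that, for any fixed hyperparameters $a$ and $b$, the SEI acquisition function becomes asymptotically indistinguishable from the corresponding EI acquisition function as the sample size grows. Since EI is already known to fail on some $f\in\cH_{\boldsymbol{\theta}}(\Omega)$, the same counterexample can then be recycled to defeat the SEI in the limit.

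First, I would analyze the limits of the three correction factors that separate SEI from EI in the closed-form expression~\eqref{eqn:hei}. Because the SEI uses a constant mean ($q=1$) with $a,b$ held fixed, the degrees-of-freedom parameter $\nu_n = 2a + n - q$ diverges with $n$, so $m_n = \sqrt{\nu_n/(\nu_n-2)} \to 1$ at rate $O(1/n)$. Likewise, $\tilde{\sigma}_n^2 = (2b + n\hat{\sigma}_n^2)/\nu_n \to \hat{\sigma}_n^2$ provided $\hat{\sigma}_n^2$ stays bounded (which follows under Assumptions~\ref{ass:K} and~\ref{ass:theta} from standard kriging variance estimates), and the uniform-on-compacta convergences $\phi_{\nu_n-2}\to\phi$ and $\Phi_{\nu_n}\to\Phi$ are standard Student-$t$ asymptotics. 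Substituting these limits into~\eqref{eqn:hei} and comparing with~\eqref{eqn:EI}, the two acquisition functions differ by a quantity that tends to zero uniformly on any region where $s_n(\bx)$ is bounded away from zero; consequently, any maximizer of the SEI criterion lies arbitrarily close to a maximizer of the EI criterion for all sufficiently large $n$.

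With this asymptotic equivalence in hand, the remaining step is to invoke the counterexample underlying Proposition~\ref{prop:ei} from \cite{bull2011convergence}. The key feature of that construction is a function $f\in\cH_{\boldsymbol{\theta}}(\Omega)$ for which EI's exploration term becomes dominated by its exploitation term once a moderate number of observations have accumulated outside a ``spike'' region, trapping the sampling sequence in a neighborhood of a suboptimal point and producing a deterministic gap $\delta > 0$ in the attained minimum. Because SEI's three correction factors all collapse toward $1$, its exploration term shrinks at essentially the same rate, so the same trap closes on SEI and yields the required bound $\PP_F(\lim_n y_n^* - \min_{\bx\in\Omega} f(\bx) \ge \delta) > 1-\epsilon$. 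The delicate step — and the main obstacle — is the transfer itself: in early iterations SEI's inflated exploration can cause its sampled points to diverge from those of EI, so the two sequences need not coincide. To handle this I would exploit the \emph{quantitative} content of Bull's argument, which shows that the posterior variance $s_n(\bx)$ outside the spike region shrinks once enough evaluations have occurred nearby; since the SEI adjustments are bounded above by a vanishing factor, any sampler whose exploration term is eventually within a constant multiple of EI's is forced into the same trap, so the counterexample $f$ can be chosen (or rescaled so that its spike is narrow enough) to capture all such samplers, independent of their early exploratory history.
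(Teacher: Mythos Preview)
Your reduction strategy has a genuine gap at exactly the step you flag as delicate. The convergences $\phi_{\nu_n-2}\to\phi$ and $\Phi_{\nu_n}\to\Phi$ are only uniform on compact sets, and you correctly restrict your asymptotic-equivalence claim to regions where $s_n(\bx)$ is bounded away from zero. But the trap argument in Bull's construction requires comparing the acquisition value at a candidate point $\bx_{n+1}$ in the spike region $\mathcal{X}_1$ against a reference point $\bz_{n+1}$ in the flat region $\mathcal{Y}_0$. At $\bx_{n+1}$, Lemma~7 of \cite{bull2011convergence} forces $s_n(\bx_{n+1})\to 0$, while $I_n(\bx_{n+1})\approx -1$ stays bounded away from zero; hence the standardized argument $I_n/(\tilde{\sigma}_n s_n)\to -\infty$. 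You are therefore evaluating $\phi_{\nu_n-2}$ and $\Phi_{\nu_n}$ in their deep left tails, precisely where the Student-$t$ density (polynomial tail $\sim |x|^{-\nu_n-1}$) and the Gaussian density (exponential tail) differ most, and where uniform-on-compacta convergence tells you nothing. The ``quantitative content'' you hope to borrow from Bull concerns the Gaussian tail, and does not transfer.

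The paper does not attempt this reduction. It reruns Bull's construction directly for the SEI criterion and, at the critical ratio step, computes the Student-$t$ tail asymptotic explicitly: for $x\to-\infty$ the bracketed factor in the SEI formula behaves like $\Theta(|x|^{-\nu_n+1})$. Because the degrees of freedom $\nu_n = 2a+n-q$ grow \emph{linearly} in $n$ under fixed $(a,b)$, the resulting decay $|T_n/\tilde{\sigma}_n|^{-\nu_n+1}$ is fast enough to dominate the $n^{-\nu/d}$ lower bound on $s_n(\bz_{n+1})$ supplied by Lemma~9 of \cite{bull2011convergence}, driving the ratio $\mathrm{SEI}_n(\bx_{n+1})/\mathrm{SEI}_n(\bz_{n+1})$ to zero. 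So the proof hinges not on SEI resembling EI, but on the specific interaction between the polynomial tail exponent $\nu_n\sim n$ and the polynomial-in-$n$ reference variance. Your asymptotic-equivalence route does not capture this mechanism, and I do not see how to repair it without carrying out essentially the same tail computation the paper performs directly.
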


\noindent The proof is provided in Appendix A.2. This proposition shows that the SEI with ``fixed'' hyperparameters has the same limitation as the EI: it can fail to converge to a global minimum for relatively smooth objective functions $f$. We will show later in Section \ref{sec:the} that, under a more general prior specification which allows the hyperparameter $b$ to scale with sample size $n$ (more specifically, $b=\Theta(n)$), the HEI not only has the desired global convergence property for optimization, but also a near-minimax convergence rate.

\section{Methodology and Algorithm}\label{sec:hyper}

Using the HEI acquisition function \eqref{eqn:hei}, we now present a methodology for integrating this for effective black-box optimization. We first introduce hyperparameter estimation techniques which allow the HEI to mimic a fully Bayesian optimization procedure, then present an algorithmic framework for implementing the HEI. {For ease of exposition, Sections \ref{sec:hyp} and \ref{sec:ord} are presented with \textit{fixed} GP length-scale parameters $\boldsymbol{\theta}$; Section \ref{sec:alg} then presents the full algorithm with \textit{estimated} length-scales, with corresponding theoretical analysis in Section \ref{sec:the}.}

% , and setting hyperparameters $a$ and $b$. we provide the algorithm of HEI in detail. %Then we develop a full algorithm for our proposed HEI method.

%\vspace{-.1in}

\subsection{Hyperparameter Specification}
\label{sec:hyp}
We present below several plausible specifications for the hyperparameters $(a,b)$ in the hierarchical prior $[\sigma^2] \sim \mathrm{IG}(a,b)$ in \eqref{eqn:gp}, and discuss when certain specifications may yield better optimization performance.

\vspace{0.05in}
\noindent{\bf (i) Weakly Informative.} Consider first a \textit{weakly informative} specification of the hyperparameters $(a,b)$, with $a=b=\epsilon$ for a small choice of $\epsilon$, {e.g.}, $\epsilon= 0.1$. This reflects weak information on the variance parameter $\sigma^2$, and provides regularization for parameter inference. The limiting case of $\epsilon\rightarrow 0$ yields the non-informative Jeffreys' prior for variance parameters.

While weakly informative (and non-informative) priors are widely used in Bayesian analysis \citep{gelman2006prior}, we have found that such priors can result in poor optimization performance for HEI (Section \ref{sec:exp} provides further details). One reason is that, for many black-box problems, only a small sample size can be afforded on the objective function $f$, since each evaluation is expensive. One can perhaps address this with a carefully elicited subjective prior, but such informative priors are typically not available when the objective $f$ is black-box. We present next two specifications which may offer improved optimization performance, both in theory (Section \ref{sec:the}) and in practice (Sections \ref{sec:exp} and \ref{sec:app}).

\vspace{0.05in}
\noindent{\bf (ii) Empirical Bayes.} Consider next an empirical Bayes (EB, \citealp{carlin2000bayes}) approach, which uses the observed data on $f$ to estimate the hyperparameters $(a,b)$. This is achieved by maximizing the following marginal likelihood for $(a,b)$: \begin{align}
p(\by_n;a, b) = \int \mathcal{L}(\boldsymbol{\beta},\sigma^2;\by_n) \pi(\boldsymbol{\beta}) \pi(\sigma^2;a,b) \; d \boldsymbol{\beta} d \sigma^2.
\end{align}
Here, $\mathcal{L}(\boldsymbol{\beta},\sigma^2;\by_n)$ is the likelihood function of the universal kriging model \eqref{eqn:uk} (see \citealp{santner2003design} for the full expression), and $\pi(\boldsymbol{\beta})$ and $\pi(\sigma^2;a,b)$ are the prior densities of $\boldsymbol{\beta}$ and $\sigma^2$ given hyperparameters $a$ and $b$. The model with estimated hyperparameters via EB provides a close approximation to a fully hierarchical Bayesian model \citep{carlin2000bayes}, where additional hyperpriors are assigned on $a$ and $b$. The latter can be viewed as a ``gold standard'' quantification of model uncertainty, but typically requires MCMC sampling, which can be more expensive than optimization. Here, an EB estimate of hyperparameters $(a,b)$ would allow the HEI to closely \textit{mimic} a fully Bayesian optimization procedure (the ``gold standard''), while avoiding expensive MCMC sampling via a \textit{closed-form} acquisition function.

Unfortunately, the proposition below shows that a direct application of EB for the HEI yields unbounded hyperparameter estimates:

\begin{proposition}\label{prop:eb}
The marginal likelihood for the universal kriging model \eqref{eqn:uk} with priors \eqref{eqn:gp} is given by:
\begin{align}\label{eqn:margin}
 p(\by_n;a, b) = \det(\mathbf{G}_n \mathbf{K}_n)^{-\frac{1}{2}} \frac{b^a}{\Gamma(a)} \frac{\Gamma(a+(n-q)/2)}{(b+w_n)^{a+\frac{n-q}{2}}},
\end{align}
where $w_n=(\by_n^\top \mathbf{K}_n^{-1} \by_n - \hat{\boldsymbol{\beta}}_n^\top \mathbf{G}_n \hat{\boldsymbol{\beta}}_n)/2$. 

Furthermore, the maximization problem:
\begin{equation}
\argmax_{a>0, b>0} \; p(\by_n;a,b)
\end{equation}
is unbounded for all values of $\by_n$. 
\end{proposition}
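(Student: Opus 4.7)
My plan is to split the proof into the two claims of the proposition. For the marginal likelihood formula, I would perform the double integral $p(\by_n; a, b) = \int\int \mathcal{L}(\boldsymbol{\beta}, \sigma^2; \by_n)\,\pi(\boldsymbol{\beta})\,\pi(\sigma^2; a, b)\, d\boldsymbol{\beta}\, d\sigma^2$ by integrating out $\boldsymbol{\beta}$ first and then $\sigma^2$. Completing the square in $\boldsymbol{\beta}$ decomposes the quadratic form as
\[
(\by_n - \mathbf{P}_n\boldsymbol{\beta})^\top \mathbf{K}_n^{-1}(\by_n - \mathbf{P}_n\boldsymbol{\beta}) = (\boldsymbol{\beta} - \hat{\boldsymbol{\beta}}_n)^\top \mathbf{G}_n (\boldsymbol{\beta} - \hat{\boldsymbol{\beta}}_n) + 2 w_n,
\]
so that the Gaussian integral over $\boldsymbol{\beta}\in\RR^q$ against the flat prior yields a factor of $(2\pi\sigma^2)^{q/2}\det(\mathbf{G}_n)^{-1/2}$. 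The remaining integrand in $\sigma^2$ is an unnormalized $\mathrm{IG}(a + (n-q)/2,\, b + w_n)$ density, which integrates to $\Gamma(a + (n-q)/2)\,(b + w_n)^{-a - (n-q)/2}$. Collecting the surviving terms gives the stated expression (absorbing the $(2\pi)^{-(n-q)/2}$ factor, which does not depend on $(a,b)$, into the proportionality constant).

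For the unboundedness claim, the cleanest approach is to sandwich the marginal likelihood by the profile likelihood in $\sigma^2$. Writing $p(\by_n; a, b) = \int L(\sigma^2)\,\pi(\sigma^2; a, b)\, d\sigma^2$ with $L(\sigma^2) := \int \mathcal{L}(\boldsymbol{\beta}, \sigma^2; \by_n)\, d\boldsymbol{\beta} \propto (\sigma^2)^{-(n-q)/2}\exp(-w_n/\sigma^2)$ (which follows directly from the first integration in Part 1), elementary calculus shows $L$ has a unique maximizer $\sigma^2_\ast = 2 w_n/(n-q)$ and a finite maximum $L^\ast := L(\sigma^2_\ast)$. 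Since $\pi(\sigma^2; a, b)$ is a probability density, we immediately obtain the uniform upper bound $p(\by_n; a, b) \leq L^\ast$, and the inequality is \emph{strict} for every finite $(a, b)\in (0,\infty)^2$ because $L$ is continuous with a unique maximum and the $\mathrm{IG}(a,b)$ density is strictly positive on $(0,\infty)$.

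To show this bound is sharp, I would move along the ray $b = a\sigma^2_\ast$ with $a\to\infty$: the $\mathrm{IG}(a, a\sigma^2_\ast)$ prior has mean $a\sigma^2_\ast/(a-1) \to \sigma^2_\ast$ and variance of order $1/a \to 0$, so it concentrates on $\sigma^2_\ast$. Since $L$ is bounded and continuous, dominated convergence then gives $p(\by_n; a, a\sigma^2_\ast) \to L^\ast$. Combining these two facts, $\sup_{a, b > 0} p(\by_n; a, b) = L^\ast$, but this supremum is never attained at any interior $(a,b)$; any maximizing sequence must satisfy $a\to\infty$ and $b\to\infty$ along $b \sim a\sigma^2_\ast$, so the problem admits no finite solution for any $\by_n$.

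The main technical obstacle is the strict-inequality step: establishing that $p(\by_n; a, b) < L^\ast$ at every interior $(a, b)$ requires carefully combining continuity of $L$ near $\sigma^2_\ast$ with the full support of the $\mathrm{IG}$ prior (so that any neighborhood on which $L < L^\ast - \epsilon$ receives strictly positive $\pi$-mass). A more computational alternative bypasses this by working directly with the closed-form expression from Part 1: maximizing analytically over $b$ for each fixed $a$ yields the profile $b^\ast(a) = a w_n/((n-q)/2)$, and applying Stirling's formula to $\log p(\by_n; a, b^\ast(a))$ shows it is strictly increasing in $a$ with a finite limit, again ruling out any interior maximizer.
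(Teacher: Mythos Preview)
Your derivation of the marginal likelihood matches the paper's exactly: integrate out $\boldsymbol{\beta}$ by completing the square (producing the factor $(2\pi\sigma^2)^{q/2}\det(\mathbf{G}_n)^{-1/2}$ and the residual $2w_n$ in the exponent), then recognize the remaining $\sigma^2$-integral as an inverse-Gamma normalizing constant.

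For the unboundedness claim your argument is correct but takes a genuinely different route. You interpret $p(\by_n;a,b)$ as the expectation of the bounded continuous function $L(\sigma^2)$ under the $\mathrm{IG}(a,b)$ prior, observe that this expectation is strictly below $L^\ast=L(\sigma^2_\ast)$ for every finite $(a,b)$ (full support of the prior plus a unique maximizer of $L$), and then approach $L^\ast$ by letting the prior degenerate to $\delta_{\sigma^2_\ast}$ along $b=a\sigma^2_\ast$, $a\to\infty$; the limiting step is really weak convergence of measures against a bounded continuous integrand rather than dominated convergence, but the conclusion stands. The paper instead works directly with the closed form: it profiles over $b$ analytically to obtain $b^\ast(a)=2aw_n/(n-q)$, differentiates the profile in $a$, and shows the derivative has the sign of $\Psi\big(a+(n-q)/2\big)-\Psi(a)-\log\big(1+(n-q)/(2a)\big)$, which it proves is strictly positive via the digamma recurrence $\Psi(x+1)=\Psi(x)+1/x$ and a left-Riemann-sum-versus-integral comparison (handling even and odd $n-q$ separately). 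Your approach is more conceptual and avoids special functions; the paper's is purely computational and delivers the slightly sharper statement that the profile likelihood is \emph{monotone increasing} in $a$, not merely that its supremum is unattained. The Stirling alternative you sketch at the end is close in spirit to the paper's route, though Stirling alone yields the finite limit of $\log p(\by_n;a,b^\ast(a))$ rather than monotonicity for every $a$; the digamma inequality in the paper handles both at once.
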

\noindent The proof of Proposition~\ref{prop:eb} is provided in Appendix A.3. 

To address this issue of unboundedness, one can instead use a modified EB approach, called the marginal maximum a posteriori estimator (MMAP, \citealp{doucet2002marginal}). The MMAP adds an additional level of hyperpriors $\pi(a,b)$ to the marginal likelihood maximization problem, yielding the modified formulation:
\begin{align}
   \cbl{\argmax_{a>0, b>0} \; \tilde{p}(\by_n;a, b) :=  \argmax_{a>0, b>0} \; p(\by_n;a, b) \pi(a,b).}
   \label{eqn:mmap}
\end{align}
The MMAP approach for hyperparameter specification has been used in a variety of problems, e.g., scalable training of large-scale Bayesian networks~\citep{JMLR:v14:liu13b}. The next proposition shows that the MMAP indeed yields finite solutions for a general class of hyperpriors on $(a,b)$:
\begin{proposition}
Assume the following independent hyperpriors on $(a,b)$:
\begin{equation}
[a] \sim \textup{Gamma}(\zeta,\iota), \quad \quad [b]\propto \mathbf{1}, 
\label{eq:hyperprior}
\end{equation}
where $\zeta$ and $\iota$ are the shape and scale parameters, respectively. Then the maximization of $\tilde{p}(\by_n;a, b)$ is always finite for $(a,b)\in\RR_+^2$.
\label{prop:finite}
\end{proposition}
\noindent The proof of Proposition~\ref{prop:finite} is provided in Appendix A.4.

{In practice, we recommend a weakly-informative specification of the hyperparameters $(\zeta,\iota)$ (i.e., with $\zeta = \iota$ set to be small), which appears to yield robust optimization performance for the HEI-MMAP. We note that the specification \eqref{eq:hyperprior} is simply one which works well in our implementation; given prior knowledge, a modeler has the flexibility of specifying an alternate prior which captures such information.} By mimicking a fully Bayesian optimization procedure, this MMAP approach can consistently outperform the weakly informative specification for HEI; we will show this later in numerical studies.

\vspace{0.05in}
\noindent{\bf (iii) Data-Size-Dependent (DSD).} Finally, we consider the so-called ``data-size-dependent'' (DSD) hyperparameter specification. This is motivated from the prior specification needed for global optimization convergence of the HEI, which we present and justify in the following section. The DSD specification requires the shape parameter $a$ to be constant, and the scale parameter $b$ to grow at the same order as the sample size $n$, {i.e.}, $b=\kappa n$ for some constant $\kappa > 0$. One appealing property of this specification is that it ensures the HEI converges to a global optimum $\bx^*$ (see Theorem \ref{thm:con} later).

For the DSD specification, we can similarly use the MMAP to estimate hyperparameters $(a,\kappa)$, to mimic a fully Bayesian EI procedure. Suppose data $\cD_{n_{\textrm{ini}}}$ are collected from $n_{\textrm{ini}}$ initial design points (more on this in Section \ref{sec:alg}). Then the hyperparameters $a$ and $\kappa$ can be estimated via the MMAP optimization:
\begin{align}
\textstyle (a^*,\kappa^*)  = \underset{a>0,\kappa>0}{\argmax} \left\{ p(\by_{n_{\textrm{ini}}};a, \kappa n_{\textrm{ini}}) \pi(a,\kappa) \right\},
\end{align}
where $\pi(a,\kappa)$ is the hyperprior density on $a$ and $\kappa$. One possible setting for $\pi(a,\kappa)$ is a Gamma hyperprior on $a$ and a non-informative hyperprior $[\kappa] \propto 1$ (independent of $a$). By Proposition 6, this specification again yields a finite optimization problem for MMAP. Using these estimated hyperparameters, subsequent points are then queried using HEI with $a = a^*$ and $b = \kappa^* n$, where $n$ is the current sample size. 

\vspace{-0.1in}
\subsection{Order Selection for Basis Functions}
\label{sec:ord}

In addition to hyperparameter estimation, the choice of basis functions in $\bp(\bx)$ and the order selection of such bases are also important for an effective implementation of the HEI. In our experiments, we take these bases to be complete polynomials up to a certain order $l$. Letting $\mathcal{M}^{(l)}$ denote the polynomial model with maximum order $l$, we have $\bp(\bx) = 1$ for model $\cM^{(0)}$ (a constant model), $\bp(\bx) = [1,x_1,\cdots, x_d]^\top$ for model $\cM^{(1)}$ (a linear model), $\bp(\bx) = [1,x_1,\cdots, x_d, x_1^2,\cdots, x_d^2,x_1x_2,\cdots,x_1x_d,x_2x_3,\cdots,x_{d-1}x_d]^\top$ for model $\cM^{(2)}$ (a second-order interaction model), etc. One can also make use of other basis functions (e.g., orthogonal polynomials; \citealp{xiu2010numerical}) depending on the problem at hand.

A careful selection of order $l$ is also important: an overly small estimate of $l$ results in over-exploitation of a poorly-fit model, whereas an overly large estimate results in variance inflation and over-exploration of the domain. We found that the standard Bayesian Information Criterion (BIC) \citep{schwarz1978estimating} provides good order selection performance for the HEI.  Given initial data $\cD_{n_{\textrm{ini}}}$, the BIC selects the model $\mathcal{M}^{(l^*)}$ with order:
\begin{align}\label{eqn:bic}
l^*=    \argmin_{l\in \mathbb{N}} \big\{-2 \log \mathcal{L}(\cM^{(l)}) + q_l \log(n_{\textrm{ini}})\big\}.
\end{align}
Here, $\mathcal{L}(\mathcal{M}^{(l)})$ denotes the likelihood of model $\cM^{(l)}$ (this likelihood expression can be found in \citealp{santner2003design}), and $q_l$ denotes the number of basis functions in model $\cM^{(l)}$. With this optimal order selected, subsequent samples are then obtained using HEI with mean function following this polynomial order.

\vspace{-0.1in}

\subsection{Algorithm Statement}\label{sec:alg}

\RestyleAlgo{boxruled}
\LinesNumbered
\begin{algorithm}[t!]
	\caption{Hierarchical Expected Improvement for Bayesian Optimization}\label{alg:BO}
	    \vspace{0.2cm}
		\textbf{Initialization}
		\vspace{-0.3cm}
		\begin{itemize}
		\setlength{\itemsep}{-5pt}
		\item Generate $n_{\textrm{ini}}$ space-filling design points $\{\bx_1, \cdots, \bx_{n_{\textrm{ini}}}\}$ on $\Omega$.
		\item Evaluate function points $y_i = f(\bx_i)$, yielding the initial dataset $\cD_{n_{\textrm{ini}}} = \{(\bx_i,y_i)\}_{i=1}^{n_{\textrm{ini}}}$.
		\end{itemize}
		\vspace{-0.2cm}
		\textbf{Model selection}
		\vspace{-0.3cm}
		\begin{itemize}
		\setlength{\itemsep}{-5pt}
		    \item Select model order via BIC using \eqref{eqn:bic}.
		    \item Estimate hyperparameters $(a,b)$ via MMAP using \eqref{eqn:mmap}.
		\end{itemize}
		
	    \vspace{-0.2cm}
		\textbf{Optimization}\\
		\For{\textrm{ $n$  $\gets n_{\mathrm{ini}}$ to $n_{\mathrm{tot}}-1$}}{
		\vspace{-0.3cm}
		\begin{itemize}
		\setlength{\itemsep}{-5pt}
		    \item Given $\cD_n$, estimate length-scale parameters $\boldsymbol{\theta}$ via MAP and compute $\HEI_n(\bx)$.
		\item Obtain the next evaluation point $\bx_{n+1}$ by maximizing $\mathrm{HEI}_n(\bx)$: 
		\begin{equation}
		    \bx_{n+1} \leftarrow \argmax_{\bx\in\Omega} \mathrm{HEI}_n(\bx).
		    \label{eq:algeqn}
		\end{equation}
		\item Evaluate $y_{n+1} = f(\bx_{n+1})$, and update data $\cD_{n+1} = \cD_{n}\cup \{(\bx_{n+1},y_{n+1})\}.$
		\end{itemize}
}
		\textbf{Return: } The best observed solution $\bx_{i^*},$ where $i^* = \argmin_{i = 1}^{n_{\textrm{tot}}}f(\bx_i)$.
\end{algorithm}

Algorithm~\ref{alg:BO} summarizes the above steps for HEI. First, initial data on the black-box function $f$ are collected on a ``space-filling'' design, which provides good coverage of the feasible space $\Omega$. For the unit hypercube $\Omega = [0,1]^d$, we have found that the maximin Latin hypercube design (MmLHD, \citealp{morris1995exploratory}) works quite well in practice. For non-hypercube domains, more elaborate design methods on non-hypercube regions (e.g., \citealp{lekivetz2015fast, mak2018minimax, joseph2019designing}) can be used. The number of initial points is set as $n_{\rm ini} = 10d$, as recommended in \cite{loeppky2009choosing}. Using this initial data, the model order for the hierarchical GP is selected using \eqref{eqn:bic}. The hyperparameters $a$ and $b$ are also estimated from data (if necessary) using the methods described in Section \ref{sec:hyp}. Next, the following two steps are repeated until the sample size budget $n_{\rm tot}$ is exhausted: (i) the GP length-scale parameters $\boldsymbol{\theta}$ are fitted via maximum a posteriori (MAP) estimation\footnote{In numerical experiments, we use an independent uniform prior $\theta_l \sim^{i.i.d.} U[0,100]$ for this MAP estimate.} using the observed data points, (ii) a new sample $f(\bx)$ is collected at the point $\bx$ which maximizes the HEI criterion \eqref{eqn:hei}.

\section{Convergence Analysis}\label{sec:the}

We present next the global optimization convergence result for the HEI, then provide a near-minimax optimal convergence rate for the proposed method. In what follows, we will assume that the domain $\Omega$ is convex and compact.

Let us first adopt the following shift-invariant form for the kernel $K$:
\begin{align}
\label{eq:ker}
 K_{\boldsymbol{\theta}}(\bx,\bz) := C\left(\frac{x_1-z_1}{\theta_1},\ldots, \frac{x_d-z_d}{\theta_d}\right),
\end{align}
where $C$ is a stationary correlation function with $C(\mathbf{0}) = 1$ and length-scale parameters $\boldsymbol{\theta} = (\theta_1, \cdots, \theta_d)$. From this, we can then define a function space -- the reproducing kernel Hilbert space (RKHS, \citealp{wendland2004scattered}) -- for the objective function $f$. Given kernel $K_{\boldsymbol{\theta}}$ (which is symmetric and positive definite), define the linear space
\begin{align}
   \cF_{\boldsymbol{\theta}}(\Omega) = \left\{\sum_{i=1}^N \alpha_i K_{\boldsymbol{\theta}}(\cdot,\bx_i): N\in\NN_+, \bx_i\in \Omega, \alpha_i\in\RR\right\},
\end{align}
and equip this space with the bilinear form
\begin{align}
  \left\langle \sum_{i=1}^N \alpha_i K_{\boldsymbol{\theta}}(\cdot,\bx_i),\sum_{j=1}^M \gamma_j K_{\boldsymbol{\theta}}(\cdot,\by_j) \right\rangle_{K_{\boldsymbol{\theta}}}: =\sum_{i=1}^N\sum_{j=1}^M \alpha_i \gamma_j K_{\boldsymbol{\theta}}(\bx_i,\by_j).
\end{align}
The RKHS $\cH_{\boldsymbol{\theta}}(\Omega)$ of kernel $K_{\boldsymbol{\theta}}$ is defined as the closure of $\cF_{\boldsymbol{\theta}}(\Omega)$ under
$\langle\cdot,\cdot\rangle_{K_{\boldsymbol{\theta}}}$, with its
inner product $\langle \cdot,\cdot \rangle_{\cH_{\boldsymbol{\theta}}}$ induced by 
$\langle\cdot,\cdot\rangle_{K_{\boldsymbol{\theta}}}$ \citep{wendland2004scattered}.

Next, we make the following two regularity assumptions. The first is a smoothness assumption on the correlation function $C$: 
\begin{assumption} \label{ass:K}
$C$ is continuous, integrable, and satisfies:
\begin{equation*}
 |C(\bx) - Q_r(\bx) | = \mathcal{O}\left(\norm{\bx}_2^{2\nu}(-\log \norm{\bx}_2)^{2\alpha}\right) \quad \text{as} \quad \norm{\bx}_2\rightarrow 0,
 %\label{eq:assK1}
\end{equation*}
for some constants $\nu>0$ and $\alpha\geq 0$. Here, $r = \lfloor 2\nu \rfloor$ and $Q_r(\bx)$ is the $r$-th order Taylor approximation of $C(\bx)$. Furthermore, its Fourier transform 
$$\widehat{C}(\boldsymbol{\xi}):=\int_{\bx}e^{-2\pi i \langle \boldsymbol{\xi},\bx \rangle}C(\bx)d\bx$$ is isotropic, radially non-increasing and satisfies either: as $\norm{\bx}_2\rightarrow \infty$
\begin{center}
$\widehat{C}(\bx) = \Theta \left(\norm{\bx}_2^{-2\nu-d}\right)$ \quad \text{or} \quad $\widehat{C}(\bx) = \mathcal{O}\left(\norm{\bx}_2^{-2\lambda-d}\right)$ \quad for any $\lambda > 0$.
\end{center}
\end{assumption}
\noindent Note that in Assumption~\ref{ass:K}, since we assume $C$ is continuous and integrable, its Fourier transform $\hat{C}$ must exists. A widely-used correlation function which satisfies this assumption is the Mat\'ern correlation function (see \citealp{cressie1991statistics} and \citealp{bull2011convergence}). 
\cbl{The second assumption is a bounded assumption on the prior $\pi(\boldsymbol{\theta})$ for the GP length-scale parameters $\boldsymbol{\theta}$.}
\begin{assumption} \label{ass:theta}
We assume that the prior $\pi(\boldsymbol{\theta})$ is bounded and bounded away from $\mathbf{0}$. Thus, given data $\mathcal{D}_n$, if we let $\tilde{\boldsymbol{\theta}}_n$ be the MAP of $\boldsymbol{\theta}$ under prior $\pi(\boldsymbol{\theta})$, it follows that for any $n>q$:
\begin{align}\label{eqn:bound}
 \boldsymbol{\theta}^L\leq \tilde{\boldsymbol{\theta}}_n \leq \boldsymbol{\theta}^U \quad \textrm{for some constants $\boldsymbol{\theta}^L, \boldsymbol{\theta}^U\in\RR^d_+$}.
\end{align}
\end{assumption}
\noindent \cbl{Note that the above equation indicates component-wise inequalities.} This is \cbl{similar to} Definition 2 in \cite{bull2011convergence}.

Under these two regularity assumptions, we can then prove the global optimization convergence of the HEI method (more specifically, for HEI-DSD).
\begin{theorem}
\label{thm:con}
Suppose Assumptions~\ref{ass:K} and~\ref{ass:theta} hold. Further suppose the hyperparameter $a$ is a constant (in $n$) and $b = \Theta(n)$, with basis functions $p_i(x)\in\mathcal{H}_{\theta^U}(\Omega)$. Let $(\bx_i)_{i=1}^\infty$ be the points generated by maximizing $\textup{HEI}_n$ in \eqref{eqn:hei}, with iterative plug-in MAP estimates $\tilde{\boldsymbol{\theta}}_n$. Then, for any $f \in \cH_{\boldsymbol{\theta}^U}(\Omega)$ and any choice of initial points $\{\bx_i\}_{i=1}^{n_{\rm ini}}$, we have:
\begin{align}
 y_n^* - \min_{\bx\in\Omega} f(\bx) = \left\{
\begin{array}{ll}
\cO(n^{-\nu/d}(\log n)^\alpha),& \nu\leq 1,\\
\cO(n^{-1/d}),& \nu> 1.\\
\end{array}
\right.
\end{align}
\end{theorem}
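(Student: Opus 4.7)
The overall plan is to adapt the proof strategy of Bull (2011, Theorem~2) for EI with inflated variance, exploiting the fact that under the DSD specification $b=\Theta(n)$ the hierarchical posterior scale $\tilde{\sigma}_n$ behaves asymptotically like a constant-inflated MLE. More precisely, I would first show that there exist constants $0<c_1\leq c_2<\infty$ (depending on $\kappa = b/n$) such that $c_1\leq \tilde{\sigma}_n\leq c_2$ for all large $n$, using the explicit formula $\tilde{\sigma}_n^2 = (2b+n\hat{\sigma}_n^2)/(2a+n-q) \to 2\kappa + \hat{\sigma}_\infty^2$. Likewise, the degrees of freedom $\nu_n = 2a+n-q\to\infty$, so $m_n=\sqrt{\nu_n/(\nu_n-2)}\to 1$ and $\phi_{\nu_n-2}$, $\Phi_{\nu_n}$ converge uniformly to the standard normal density and c.d.f. on compact sets, with polynomially controlled tails. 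This lets me transfer Bull's bounds on the EI acquisition to HEI, modulo absolute constants.

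Second, I would derive the two-sided bounds
\begin{equation*}
c_3\bigl(\tilde{\sigma}_n s_n(\bx) + I_n(\bx)_+\bigr)\;\leq\;\textup{HEI}_n(\bx)\;\leq\; c_4\bigl(\tilde{\sigma}_n s_n(\bx) + I_n(\bx)_+\bigr),
\end{equation*}
which follow from the closed-form expression in Proposition~\ref{thm:nei} and the boundedness of the t-density/c.d.f. ratio. Combined with Assumption~\ref{ass:theta}, this bounds $s_n(\bx)$ and $\hat{f}_n(\bx)$ uniformly in terms of the kernel with fixed length-scale $\boldsymbol{\theta}^U$: by a standard RKHS argument, $|f(\bx)-\hat{f}_n(\bx)|\leq \|f\|_{\cH_{\boldsymbol{\theta}^U}}\, s_n^{\theta^U}(\bx)$ (using the requirement that the trend basis lies in $\cH_{\boldsymbol{\theta}^U}(\Omega)$ to absorb the drift term), and the power function estimate $s_n^{\theta^U}(\bx)=\cO(h_n^{\min(\nu,1)}(\log h_n^{-1})^\alpha)$ under Assumption~\ref{ass:K}, where $h_n$ is the fill distance of $\{\bx_i\}_{i=1}^n$ in $\Omega$.

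Third, I would use the greedy property of HEI: since $\bx_{n+1}$ maximizes $\textup{HEI}_n$, the inequality $\textup{HEI}_n(\bx_{n+1})\geq \textup{HEI}_n(\bx^*)$ forces $y_n^* - f(\bx^*)$ to be comparable (up to constants) to $\tilde{\sigma}_n s_n(\bx_{n+1})$, which in turn is controlled by the fill distance after sufficiently many iterations. The final step is to show that the sequence $\{\bx_i\}$ produced by HEI is asymptotically space-filling at the minimax-optimal rate $h_n=\cO(n^{-1/d})$: this follows Bull's argument that whenever the sampling ``starves'' a subregion, the HEI there grows large (driven by the $\tilde{\sigma}_n s_n$ term) and the algorithm is forced to probe it. Combining the power function bound with $h_n\asymp n^{-1/d}$ gives the claimed rate, with the saturation at $n^{-1/d}$ for $\nu>1$ reflecting the intrinsic $h_n$ rate limit rather than kernel smoothness.

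The main technical obstacle is step three: showing that the HEI-generated sequence attains the near-optimal fill-distance rate, because HEI is not designed to be space-filling and the trend term $\bp^\top\boldsymbol{\beta}$ complicates the clean ``EI $\geq$ posterior sd'' lower bound used by Bull. I plan to handle this by splitting into cases — iterations where $I_n(\bx_{n+1})$ dominates versus where $\tilde{\sigma}_n s_n(\bx_{n+1})$ dominates — and using Assumption~\ref{ass:theta} to uniformly control the trend-adjusted posterior variance $s_n(\bx)$ across all admissible $\tilde{\boldsymbol{\theta}}_n$. A secondary subtlety is that since $\tilde{\boldsymbol{\theta}}_n$ changes with $n$, I work in the ambient RKHS $\cH_{\boldsymbol{\theta}^U}(\Omega)$ and use the nesting property that RKHS norms and power functions for different length-scale parameters are equivalent on compact parameter sets.
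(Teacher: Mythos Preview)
Your overall setup is on the right track—in particular, the observation that $b=\Theta(n)$ forces $\tilde\sigma_n$ to stay bounded away from zero is exactly the role this hypothesis plays in the paper. But there are two genuine gaps.

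First, the two-sided sandwich $c_3\bigl(\tilde\sigma_n s_n(\bx)+I_n(\bx)_+\bigr)\leq \HEI_n(\bx)$ is false as a global statement: at points where $I_n(\bx)/(\tilde\sigma_n s_n(\bx))\to-\infty$ (which nothing precludes), $\HEI_n(\bx)/(\tilde\sigma_n s_n(\bx))\to 0$. The correct lower bound, and the one the paper uses, is phrased in terms of the \emph{true} improvement $u_n(\bx)=(y_n^*-f(\bx))_+$: via the RKHS error bound $|f-\hat f_n|\le S\,s_n$ one has $I_n(\bx)/(\tilde\sigma_n s_n(\bx))\ge -S/\tilde\sigma_n$ whenever $u_n(\bx)>0$, and this yields $\HEI_n(\bx)\ge \bigl[\tau_n(-S/\tilde\sigma_n)/\tau_n(S/\tilde\sigma_n)\bigr]\,u_n(\bx)$. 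That ratio is bounded precisely because $\tilde\sigma_n$ is bounded below. This is fixable, but your stated sandwich will not go through.

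Second, and more seriously, your step three is the wrong plan. You propose to show that the HEI sequence is space-filling with $h_n=\cO(n^{-1/d})$; this is essentially the content of the paper's Theorem~\ref{thm:stab}, which requires the additional $\gamma$-stability Assumption~\ref{ass:stb} exactly because space-filling cannot be established from acquisition maximization alone. For Theorem~\ref{thm:con} the paper does \emph{not} prove space-filling. Instead it runs a pigeonhole/counting argument: by Lemma~7 of Bull, the event $s_n(\bx_{n+1})\ge C_2 k^{-(\nu\wedge 1)/d}(\log k)^\zeta$ holds for at most $k$ indices $n$; and by telescoping $\sum_n (y_n^*-y_{n+1}^*)\le 2R$, the event $u_n(\bx_{n+1})=y_n^*-y_{n+1}^*\ge 2R/k$ also holds for at most $k$ indices. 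Hence among the first $3k$ steps there is some $n_k$ at which both quantities are small, and then the chain
\[
y_n^*-f(\bx^*)\;\le\; y_{n_k}^*-f(\bx^*)\;\le\; C\,\HEI_{n_k}(\bx^*)\;\le\; C\,\HEI_{n_k}(\bx_{n_k+1})\;\le\; C\bigl[u_{n_k}(\bx_{n_k+1})+C' s_{n_k}(\bx_{n_k+1})\bigr]
\]
gives the claimed rate. Your ``starved subregion forces HEI large'' heuristic and the case-split on whether $I_n$ or $s_n$ dominates does not substitute for this counting argument; in particular, controlling $s_n(\bx_{n+1})$ alone is insufficient because the upper bound on $\HEI_n(\bx_{n+1})$ also contains the term $u_n(\bx_{n+1})$, which you never address.
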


\noindent The proof of this theorem is given in Appendix A.5, \cbl{where its dependence on other parameters (e.g., $\boldsymbol{\theta}^L$ and $\boldsymbol{\theta}^U$) are made explicit}. The key idea is to upper bound the prediction gap $f(\bx)-\hat{f}_n(\bx)$ by the posterior variance term $s^2_n(\bx)$~in \eqref{eqn:post_var}, which is a generalization of the power function used in the function approximation literature (see, e.g., Theorem 11.4 of \citealp{wendland2004scattered}). We then show that the hyperparameter assumption $b = \Theta(n)$ prevents the estimator $\tilde{\sigma}_n^2$ from collapsing to $0$, and allows us to apply approximation bounds on $s^2_n(\bx)$ to obtain the desired global convergence result. This proof is inspired by Theorem 4 of \cite{bull2011convergence}.

Theorem~\ref{thm:con} shows that, for all objective functions $f$ in the RKHS $\cH_{\boldsymbol{\theta}^U}(\Omega)$, the HEI indeed has the desired convergence property for global optimization. This addresses the lack of convergence for the EI from Proposition \ref{prop:ei} and the SEI from Proposition~\ref{prop:sei}. It is worth nothing that, when $C$ is the Mat\'ern correlation with smoothness parameter $\nu$ \citep{cressie1991statistics}, the RKHS $\cH_{\boldsymbol{\theta}}(\Omega)$ consists of functions $f$ with continuous derivatives of order $\nu' < \nu$ \citep{santner2003design}. Hence, for the Mat\'ern correlation, the HEI achieves a global convergence rate of $\cO(n^{-1/d})$ for objective functions $f\in \cH_{\boldsymbol{\theta}}(\Omega)$ with $\nu > 1$, and $\cO(n^{-\nu/d}(\log n)^\alpha)$ for objective functions $f\in \cH_{\boldsymbol{\theta}^U}(\Omega)$ with $\nu \leq 1$.

At first glance, the prior specification in Theorem \ref{thm:con} may appear slightly peculiar, since the hyperparameter $b=\Theta(n)$ depends on the sample size $n$. However, such \textit{data-size-dependent} priors have been studied extensively in the context of high-dimensional Bayesian linear regression, particularly in its connection to optimal minimax estimation (see, e.g., \citealp{castillo2015bayesian}). The data-size-dependent prior in Theorem \ref{thm:con} can be interpreted in a similar way: the hyperparameter condition $b=\Theta(n)$ is sufficient in encouraging exploration in the sequential sampling points, so that HEI converges to a global optimum for all $f$ in the RKHS $\cH_{\boldsymbol{\theta}^U}(\Omega)$.

\cbl{If an additional $\gamma$-stability condition \citep{wynne2020convergence} holds, we can further show that the HEI achieves the minimax convergence rate for Bayesian optimization (we provide discussion on this minimax rate at the end of the section). This condition is stated below:
\begin{condition}
\label{ass:stb}
Let $(\bx_i)_{i=1}^\infty$ be the sequence of points generated by the HEI. We assume that
\begin{align}\label{eq:cons}
    s_n(\bx_{n+1})\geq \gamma \norm{s_n(\bx)}_{\infty} \quad \text{ for all } n = 1, 2, \cdots,
\end{align}
for some constant $\gamma \in (0,1]$.
\end{condition}
}
\noindent In words, this requires that every sequential point $\bx_{n+1}$ has a posterior standard deviation term $s_n(\bx_{n+1})$ (from \eqref{eqn:post_var}) at least as large as $\gamma \norm{s_n(\bx)}_{\infty}$, where $\norm{s_n(\bx)}_{\infty}$ is the maximum posterior standard deviation term over domain $\Omega$. If this condition holds, we can then show a quicker convergence rate for global optimization:

\begin{theorem}\label{thm:stab}
Suppose Assumptions~\ref{ass:K} and \ref{ass:theta} hold, along with Condition \ref{ass:stb}. Further suppose the hyperparameter $a$ is a constant (in $n$) and $b = \Theta(n)$, with basis functions $p_i(x)\in\mathcal{H}_{\theta^U}(\Omega)$. Let $(\bx_i)_{i=1}^\infty$ be the points generated by maximizing $\textup{HEI}_n$ in \eqref{eqn:hei} with iterative plug-in MAP estimates $\tilde{\boldsymbol{\theta}}_n$. Then, for any $f \in \cH_{\boldsymbol{\theta}^U}(\Omega)$ and any initial points, we have:
\begin{align}
 y_n^* - \min_{\bx\in\Omega} f(\bx) = O(n^{-\nu / d }).
\end{align}

\end{theorem}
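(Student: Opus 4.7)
\textbf{Proof proposal for Theorem~\ref{thm:stab}.} The plan is to combine the RKHS-based prediction bound used in the proof of Theorem~\ref{thm:con} with a sharpened decay rate for the posterior standard deviation $\|s_n\|_\infty$ that becomes available once the $\gamma$-stability constraint \eqref{eq:cons} is imposed. At a high level the argument parallels Theorem~2 of \cite{bull2011convergence} and the stability analysis of \cite{wynne2020convergence}, but adapted to the hierarchical $t$-posterior that governs $\textup{HEI}_n$ and to the plug-in MAP kernel from Assumption~\ref{ass:theta}.

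First, I would re-establish the generalized power-function inequality
$|f(\bx) - \hat{f}_n(\bx)| \leq \|f\|_{\cH_{\boldsymbol{\theta}^U}}\, s_n(\bx)$
for every $\bx \in \Omega$ and every $f \in \cH_{\boldsymbol{\theta}^U}(\Omega)$; this is the same RKHS approximation bound invoked in the proof of Theorem~\ref{thm:con} and requires the bases $p_i \in \cH_{\boldsymbol{\theta}^U}(\Omega)$. Assumption~\ref{ass:theta} guarantees that the plug-in kernel $K_{\tilde{\boldsymbol{\theta}}_n}$ is sandwiched between the fixed kernels $K_{\boldsymbol{\theta}^L}$ and $K_{\boldsymbol{\theta}^U}$, so the induced RKHS norms and power functions are comparable up to constants depending only on $(\boldsymbol{\theta}^L,\boldsymbol{\theta}^U)$, and the analysis can be carried out with a fixed kernel after paying these constants.

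Next, I would exploit \eqref{eq:cons} to derive the improved decay
\begin{equation*}
\|s_n\|_\infty \;=\; \cO\!\left(n^{-\nu/d + 1/2}\right).
\end{equation*}
The point is that \eqref{eq:cons} forces the sampled sequence $(\bx_i)$ to behave as a $\gamma$-greedy sampler for the posterior standard deviation, which is precisely the $P$-greedy setup analyzed in kernel-based approximation. Combining this greedy property with the spectral tail decay in Assumption~\ref{ass:K} (which controls the native-space smoothness) and the kernel comparability from Assumption~\ref{ass:theta}, a standard fill-distance-to-power-function argument along the lines of \citealp{wynne2020convergence} yields the displayed rate. This is the main technical obstacle: the cited stability-to-decay machinery is written for simple kriging with a fixed kernel, and must be transferred to the universal kriging posterior with plug-in length-scales. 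The bases $p_i \in \cH_{\boldsymbol{\theta}^U}(\Omega)$ and the two-sided bounds on $\tilde{\boldsymbol{\theta}}_n$ are exactly what make this transfer possible, but verifying that the projection onto the trend subspace does not inflate the power-function rate is where most of the technical care will be required.

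Finally, I would close the argument following the template of the proof of Theorem~\ref{thm:con}. Using HEI optimality $\textup{HEI}_n(\bx_{n+1}) \geq \textup{HEI}_n(\bx^*)$ together with the closed form \eqref{eqn:hei}, one bounds $y_n^* - f(\bx^*)$ by a constant multiple of $\tilde{\sigma}_n\|s_n\|_\infty$. The DSD specification $b = \Theta(n)$, exactly as in Theorem~\ref{thm:con}, prevents $\tilde{\sigma}_n^2 = b_n/a_n$ from collapsing or exploding, so $\tilde{\sigma}_n = \Theta(1)$; moreover $\nu_n = 2a_n \to \infty$ keeps the $t$-pdf/cdf prefactors $\phi_{\nu_n - 2}$ and $m_n = \sqrt{\nu_n/(\nu_n-2)}$ bounded by absolute constants, so that the hierarchical correction factors do not interfere with the rate. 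Substituting the sharpened bound $\|s_n\|_\infty = \cO(n^{-\nu/d + 1/2})$ then produces the advertised near-minimax rate.
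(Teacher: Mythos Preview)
Your plan and the paper's proof share the same core ingredient: the $\gamma$-stability constraint forces the sampled sequence to behave like a weak $P$-greedy sampler for the power function, which makes the point set quasi-uniform and drives $\|s_n\|_\infty$ down at rate $n^{-\nu/d+1/2}$. The paper's argument is terser and more direct than yours: it cites Theorems~14 and~18 of \cite{wenzel2020novel} to get quasi-uniformity (hence fill-distance decay) from Assumption~\ref{ass:stb}, and then invokes Theorem~1 of \cite{wynne2020convergence} to turn fill-distance control into the optimization bound---without ever revisiting the HEI acquisition structure. In particular, once quasi-uniformity is in hand, $y_n^* - f(\bx^*) = \cO(n^{-\nu/d+1/2})$ follows from pure approximation theory (some sampled point lies within fill distance of $\bx^*$, and RKHS smoothness of $f$ finishes), so your Step~3 is unnecessary. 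Your route through the HEI-optimality sandwich of Theorem~\ref{thm:con} is valid but more roundabout; note too that the sandwich yields an extra term $u_n(\bx_{n+1}) = (y_n^* - f(\bx_{n+1}))_+$ on top of $\tilde\sigma_n s_n(\bx_{n+1})$, so ``a constant multiple of $\tilde\sigma_n\|s_n\|_\infty$'' is not quite what drops out---you would still need the pigeonhole step from the proof of Theorem~\ref{thm:con} to dispatch it. The technical obstacle you flag (transferring the stability-to-decay machinery from simple kriging with a fixed kernel to universal kriging with plug-in $\tilde{\boldsymbol\theta}_n$) is real and is glossed over in the paper's two-line proof; your plan to handle it via the hypothesis $p_i\in\cH_{\boldsymbol\theta^U}(\Omega)$ and the kernel sandwich from Assumption~\ref{ass:theta} is the right idea.
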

\noindent Further details on this theorem are provided in Appendix A.6. \begin{figure}[!tbh]
    \centering
    \subfigure[Log-ratio $\log\frac{s_n(\bx_{n+1})}{\norm{s_n(\bx)}_{\infty}}$ for HEI-DSD.]{
\includegraphics[width = 0.45\textwidth]{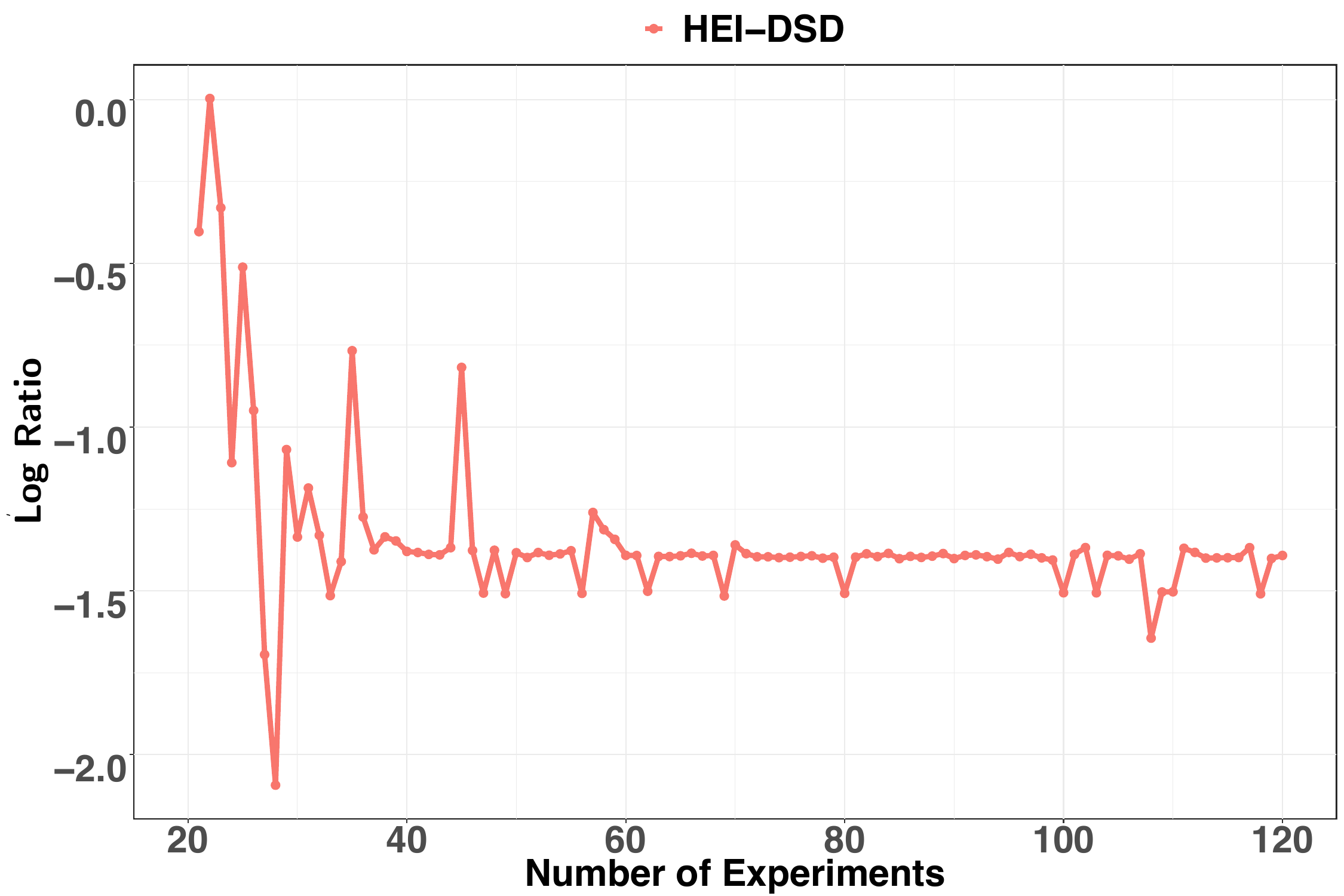}\label{fig:HEI-DSD-branin}
}
 \subfigure[Posterior standard deviation $\tilde{\sigma}_n$ for HEI methods.]{
\includegraphics[width = 0.45\textwidth]{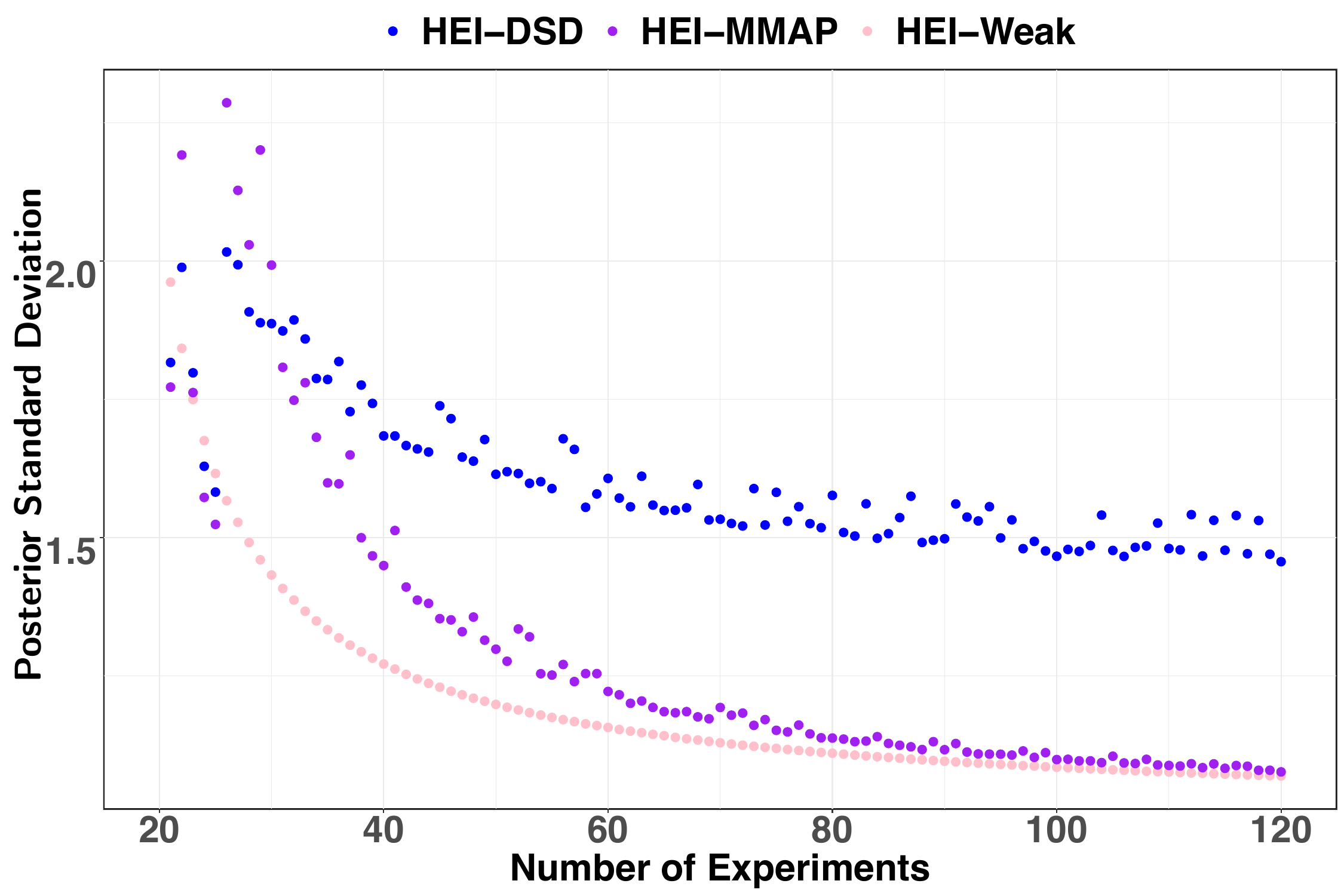}\label{fig:HEI-post-variance}
}
    \caption{{Visualizing select statistics for HEI methods on the Branin numerical experiment.}} 
    \label{fig:ratio}
\end{figure}

{\cbl{Condition \ref{ass:stb} is unfortunately quite difficult to guarantee theoretically, but appears to be satisfied empirically for the recommended HEI-DSD and HEI-MMAP methods in all of our later numerical experiments.} Figure~\ref{fig:HEI-DSD-branin} shows the log-ratio $\log \{{s_n(\bx_{n+1})}/{\norm{s_n(\bx)}_{\infty}}\}$ for HEI-DSD using the Branin function (in our later simulation study). As can be seen, the log ratio appears to be bounded away from zero as $n$ increases, which suggests that the HEI with data-size-dependent hyperparameter specification indeed satisfies Condition \ref{ass:stb} for a sufficiently small $\gamma  > 0$. This is not surprising, since while a sensible optimization algorithm would place more points around the optimum, the hierarchical nature of the HEI encourages further exploration, thus ensuring the sequential points explore the domain from time to time such that no point has overly high predictive variance. The same $\gamma$-stability assumption was used in \cite{wynne2020convergence} for proving convergence of existing BO methods.}

{It is crucial to note that, while these rates provides a reassuring check for HEI convergence, \textit{such asymptotic analysis does not tell the full story on the effectiveness of a Bayesian optimization method}. \cite{bull2011convergence} proved that, of all optimization strategies for minimizing $f \in \mathcal{H}_{\theta}(\Omega)$ under Assumption \ref{ass:K}, the minimax rate for the optimization gap $y_n^* - \min_{\bx\in\Omega} f(\bx)$ is $\mathcal{O}(n^{-\nu/d})$, i.e., there does not exist an optimization strategy with a quicker asymptotic rate. The HEI rate in Theorem \ref{thm:stab}, in this sense, is precisely the minimax rate. However, \cite{bull2011convergence} also showed that the simple (non-adaptive) strategy of optimization via a quasi-uniform sequence (see, e.g., \citealp{niederreiter1992random}) can also achieve this minimax rate! Such a strategy, however, typically performs terribly in practice and is not competitive with existing BO methods \citep{bull2011convergence}, since it is non-adaptive to observed function evaluations. This shows that \textit{such asymptotic analysis, while providing a reassuring check, cannot be used as a sole metric for gauging the practical effectiveness of different methods, particularly given the current setting of limited sample sizes}.}

{We further note that analogous rates to Theorems \ref{thm:con} and \ref{thm:stab} have been also shown for various Bayesian optimization methods in the literature. In particular, similar rates to Theorem \ref{thm:con} were proved for the $\epsilon$-EI \citep{bull2011convergence}, and our results leverage an adaptation of their proof techniques to show convergence for the HEI. Similarly, the $\gamma$-stability assumption (Condition \ref{ass:stb}) was used in \cite{wynne2020convergence} to establish similar optimization rates as Theorem \ref{thm:stab} for certain Bayesian optimization methods. The novelty for the HEI is thus not in terms of improved asymptotic rates over existing methods; such rates primarily serve to provide theoretical footing for our approach. Instead, the key contribution of the HEI is methodological: the proposed hierarchical framework provides a principled exploitation-exploration trade-off via a closed-form acquisition function, which as we show later, allows for improved optimization performance with limited samples.}

\section{Numerical Experiments}\label{sec:exp}

We now investigate the numerical performance of HEI in comparison to existing BO methods, for a suite of test optimization functions. We consider the following five test functions, taken from~\cite{simulationlib}: 

\setlength{\leftmargini}{10pt}
\begin{itemize}
\item \noindent{\bf Branin} (2-dimensional function on domain $\Omega =[0,1]^2$): 
$$\textstyle f(\bx) =(x_2-5.1/(4\pi^2)\cdot x_1^2+5/\pi \cdot x_1-6)^2+10(1-1/(8\pi))\cos(x_1)+10,$$
\item \noindent \textbf{Three-Hump Camel} (2-dimensional function on domain $\Omega =[-2,2]^2$):
$$\textstyle f(\bx) = 2x_1^2-1.05x_1^4+x_1^6/6+x_1x_2+x_2^2,$$
\item \noindent \textbf{Six-Hump Camel} (2-dimensional function on domain $\Omega =[-2,2]^2$):
$$\textstyle f(\bx) = (4-2.1x_1^2+x_1^4/3)x_1^2+x_1x_2+(-4+4x_2^2)x_2^2,$$
\item \noindent{\bf Levy Function} (6-dimensional function on domain $\Omega =[-10,10]^6$):
\vspace{-0.05in}
$$ f(\bx) = \sin^2(\pi \omega_1)+\sum_{i=1}^5(\omega_i-1)^2[1+10\sin^2(\pi\omega_i+1)]+(\omega_6-1)^2[1+\sin^2(2\pi\omega_6)],$$
%\vspace{-0.2in}
where $\omega_i = 1+(x_i-1)/4$ for $i=1,\cdots,6$,
\item \noindent{\bf Ackley Function} (10-dimensional function on domain $\Omega =[-5,5]^{10}$):
\vspace{-0.05in}
$$ f(\bx) = -20 \exp\left(-\frac{0.2}{\sqrt{10}}\norm{\bx}_2\right)-\exp\left(\frac{1}{10}\sum_{i=1}^{10}\cos(2\pi x_i)\right)+20+\exp(1).$$
%\vspace{-0.2in}
\end{itemize}

The simulation set-up is as follows. We compare the proposed HEI method under different hyperparameter specifications (HEI-Weak, HEI-MMAP, HEI-DSD), with the EI method under ordinary kriging (EI-OK) and universal kriging (EI-UK), the Student EI (SEI) method with fixed hyperparameters $(0.2,12)$ as recommended in \cite{benassi2011robust}, the UCB approach under ordinary kriging (UCB-OK, \citealp{Srinivas:2010:GPO:3104322.3104451}) with default exploration parameter $2.96$, the $\epsilon$-greedy EI approach \citep{bull2011convergence} under ordinary kriging ($\epsilon$-EI-OK) and universal kriging ($\epsilon$-EI-UK) with $\epsilon = 0.1$ as suggested in~\cite{sutton2018reinforcement}, and the $\gamma$-stabilized EI method \citep{wynne2020convergence} under universal kriging (Stab-EI-UK) with $\gamma = \min(0.1d, 0.8)$\footnote{Stab-EI-UK requires the next query point $x_{n+1}$ to satisfy $s_n(x_{n+1})\geq \gamma \norm{s_n(x)}_{\infty}$. In our implementation, we randomly sample $10^{d+2}$ points to find $\norm{s_n(x)}_{\infty}$ and set $\gamma =\min(0.1d, 0.8)$.}. For HEI-Weak, the hyperparameters $(a,b)$ are set as $a = b =0.1$; for HEI-MMAP and HEI-DSD, the hyperparameters $(\zeta,\iota)$ are set as $\zeta = \iota = 2$. All methods use the Mat\'ern correlation with smoothness parameter $2.5$, and are run for a total of $T=120$ function evaluations. Here, the kriging model is fitted using the \textsc{R} package \texttt{kergp} \citep{kergp}. {As mentioned in Section \ref{sec:alg}, all methods are initialized using maximin Latin hypercube designs \citep{morris1995exploratory}.} Simulation results are averaged over 20 replications except the Ackley function, due to the heavy computation burden for fitting the high-dimensional GP model.
\begin{figure}[!tbp]
\begin{center} 
\subfigure[Branin (2-d)]{
\includegraphics[width = .4\textwidth]{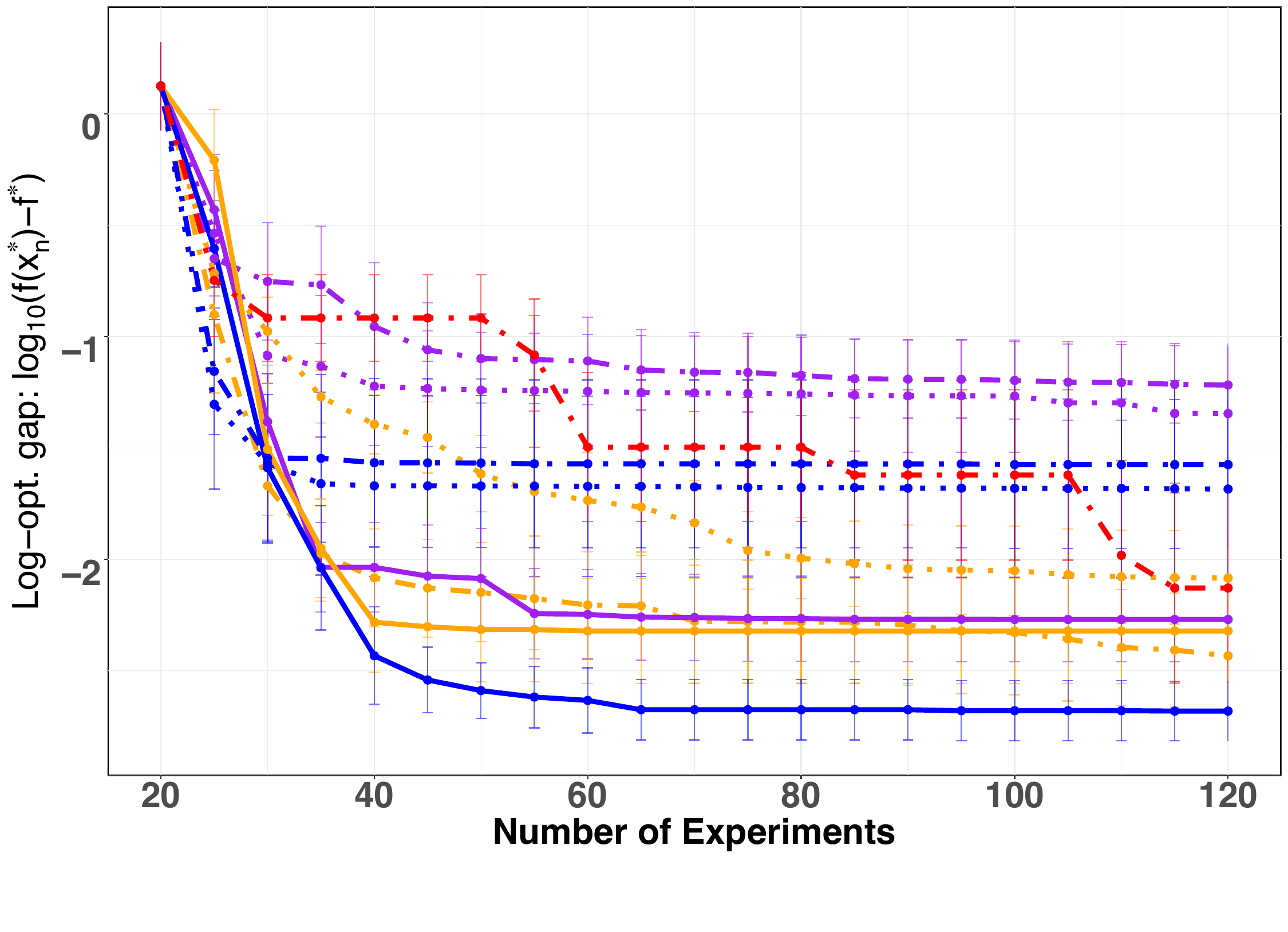}\label{fig:best}
}
\subfigure[A visualization of sampled points]{
\includegraphics[width = .45\textwidth]{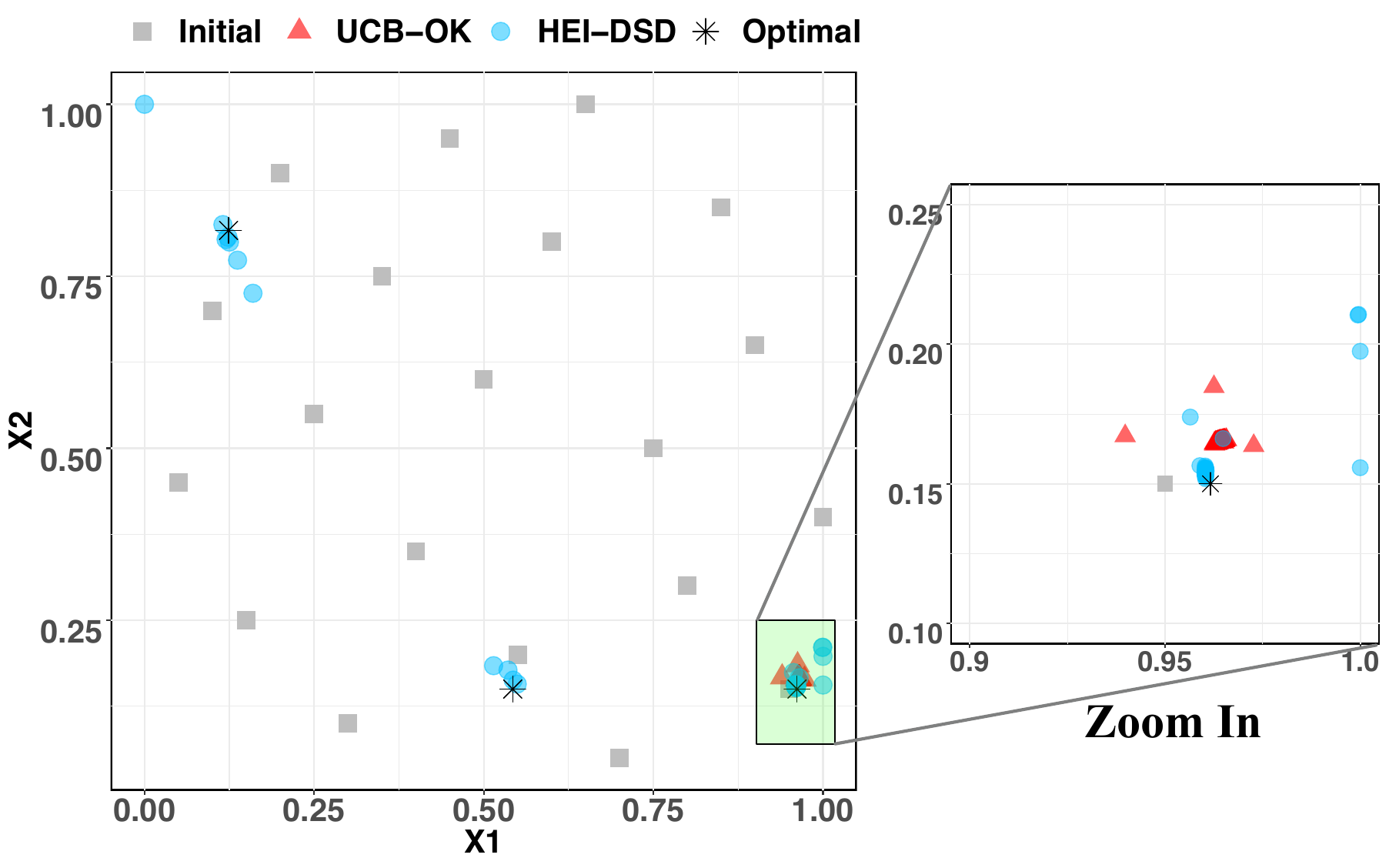}\label{fig:visual}
}\\
\subfigure[Three-Hump Camel (2-d)]
{\includegraphics[width = .4\textwidth]{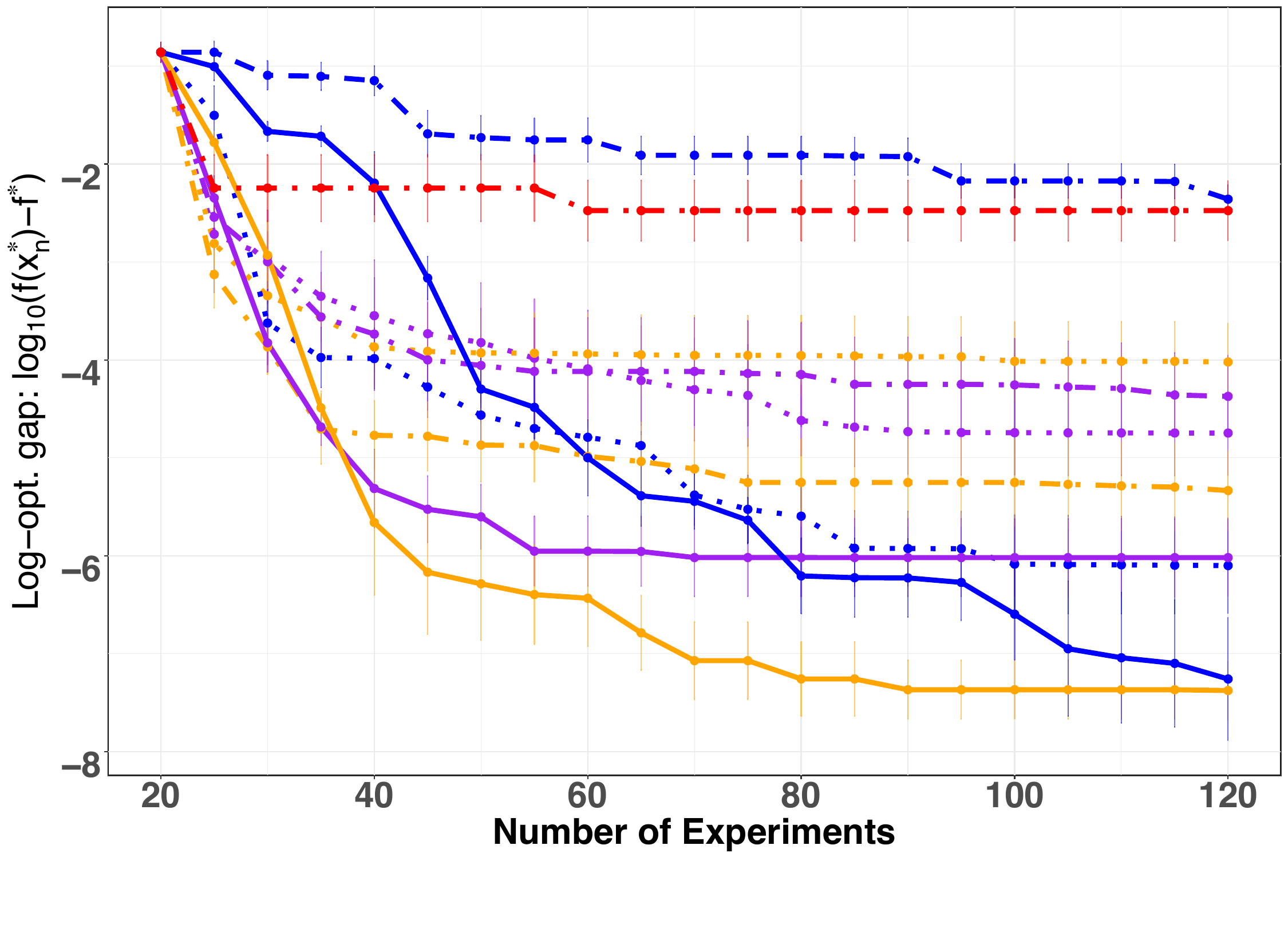}\label{fig:hump3:best}}\hspace{0.44in}
\subfigure[Six-Hump Camel (2-d)]
{\includegraphics[width = .4\textwidth]{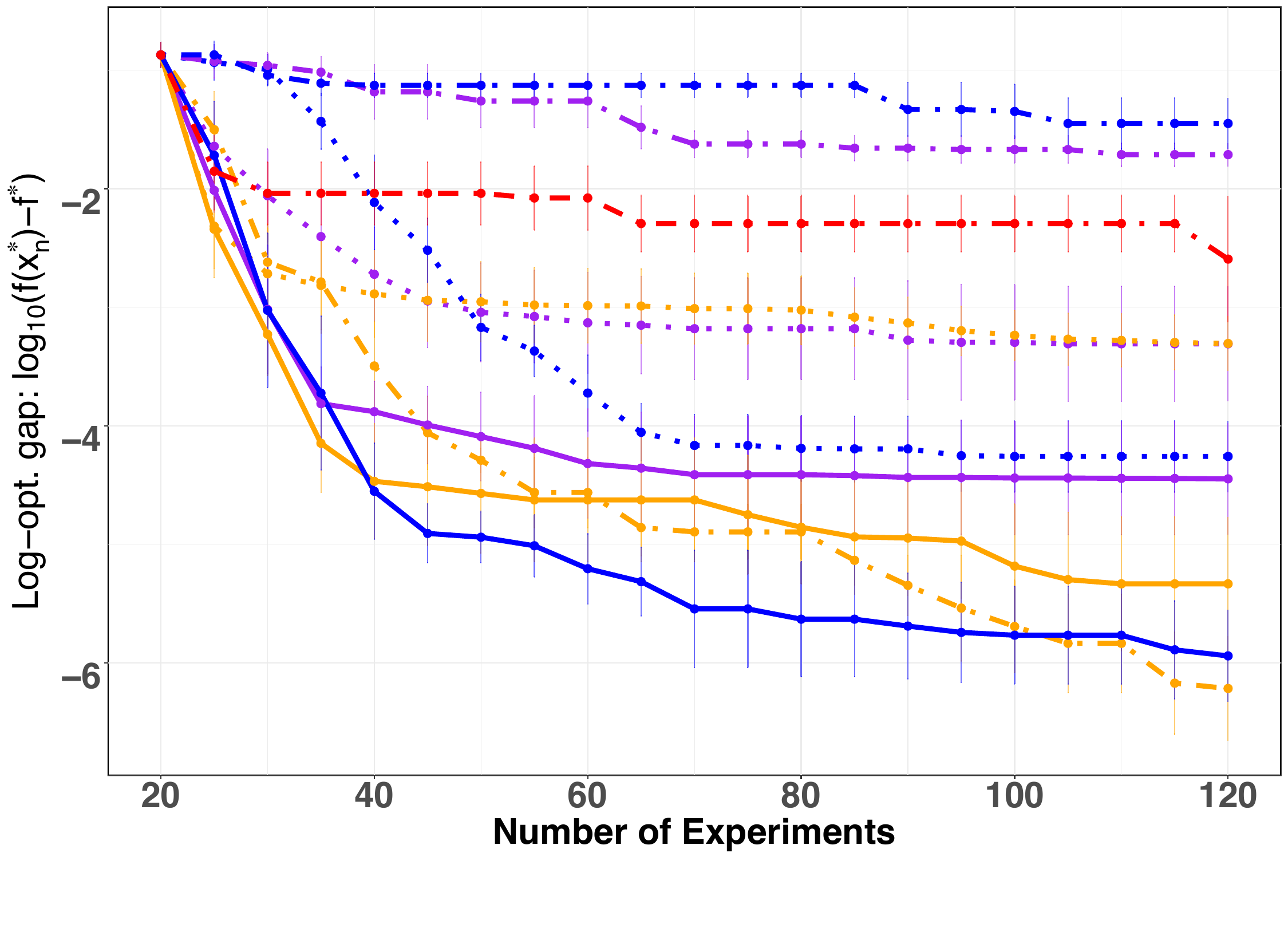}\label{fig:hump6:best}}\\
\subfigure[Levy (6-d)]
{\includegraphics[width = .4\textwidth]{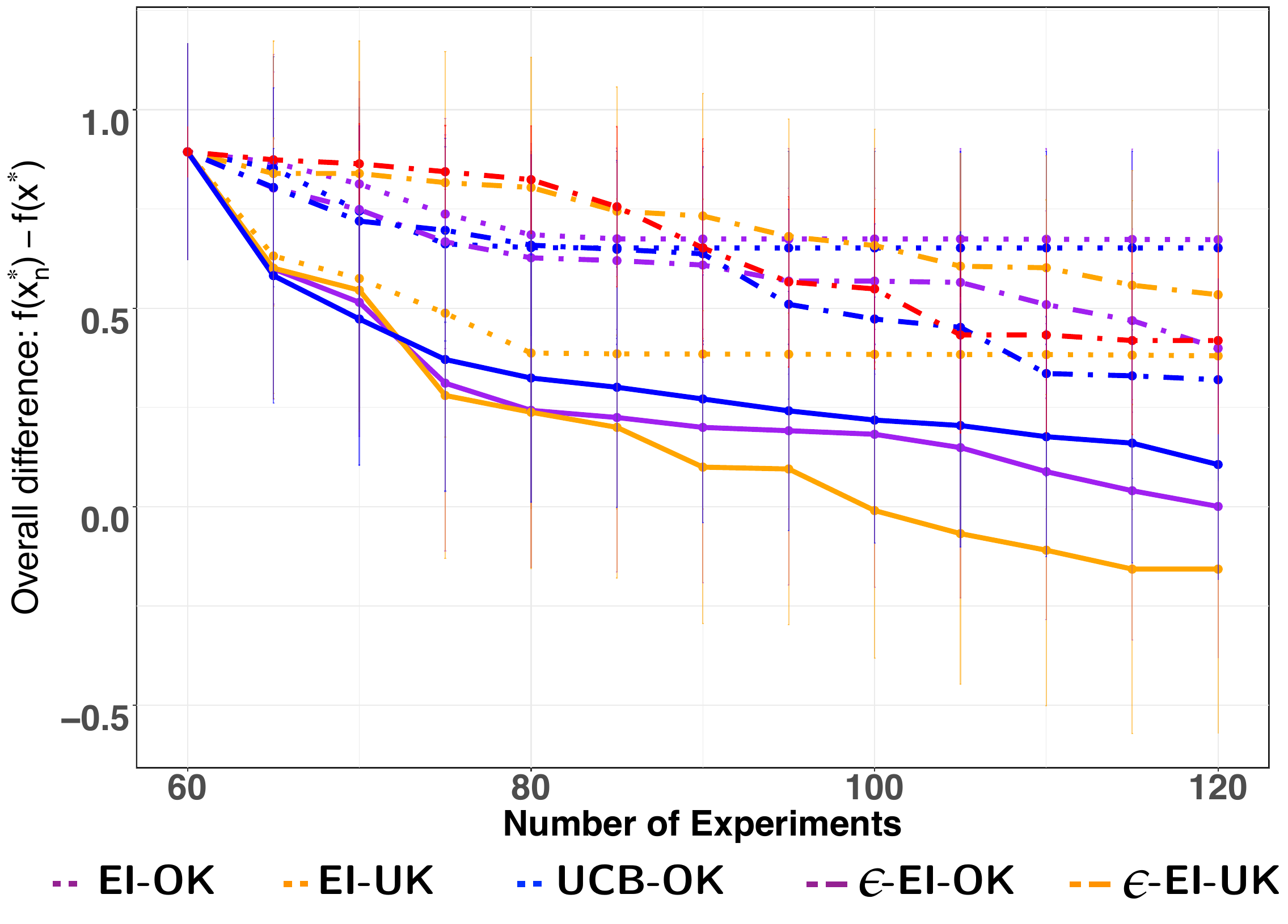}\label{fig:levy}}\hspace{0.34in}
\subfigure[Ackley (10-d)]
{\includegraphics[width = .4\textwidth]{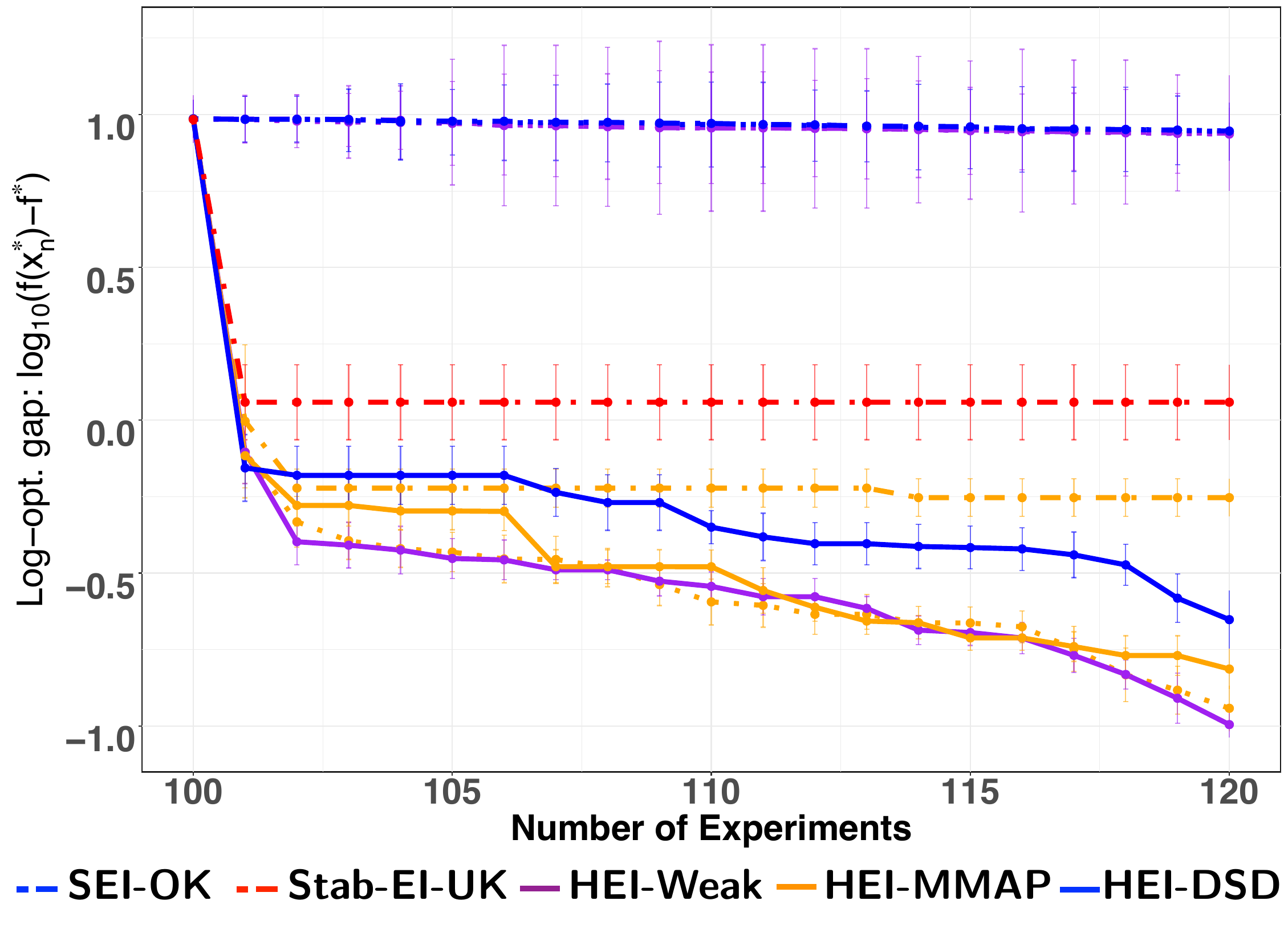}\label{fig:ackley}}
\caption{Numerical results for synthetic functions. (a) and (c)-(f) show the average optimality gap over 10 replications (dotted lines: EI-OK, EI-UK, and UCB-OK; dashed lines: $\epsilon$-EI-OK, $\epsilon$-EI-UK, SEI-OK, Stab-EI-UK; solid lines: HEI-Weak, HEI-MMAP, HEI-DSD). (b) presents a visualization of sampled points for the Branin function (grey squares: initial points, black stars: global optima, red triangles: UCB-OK points, blue circles: HEI-DSD points). }
\label{fig:EI_bra}
\end{center}
\end{figure}

Figures \ref{fig:best}-\ref{fig:hump6:best} and \ref{fig:ackley} show the log-optimality gap $\log_{10} (f(\bx_n^*)-f(\bx^*))$ against the number of samples $n$ for the first three functions, and Figure~\ref{fig:levy} shows the optimality gap $f(\bx_n^*) - f(\bx^*)$ for the Levy function. We see that the three HEI methods outperform the existing Bayesian optimization methods: the optimality gap for the latter methods stagnates for larger sample sizes, whereas the former enjoys steady improvements as $n$ increases. This shows that the proposed method indeed corrects the over-greediness of EI, and provides a more effective correction of this via hierarchical modeling, compared to the $\epsilon$-greedy and Stab-EI-UK methods. Furthermore, of the HEI methods, HEI-MMAP and HEI-DSD appear to greatly outperform HEI-Weak. This is in line with the earlier observation that weakly informative priors may yield poor optimization for HEI; the MMAP and DSD specifications give better performance by mimicking a fully Bayesian optimization procedure. {This is further supported by Figure~\ref{fig:HEI-post-variance}, which shows the posterior estimate $\tilde{\sigma}_n$ of the scale parameter as a function of sample size $n$ for the HEI methods in the Branin experiment. We see that both the HEI-DSD and HEI-MMAP provide larger estimates than HEI-DSD, which shows that the former methods are indeed integrating further uncertainty for exploration.} The steady improvement of HEI-DSD also supports the data-size-dependent prior condition needed for global convergence in Theorems \ref{thm:con} and \ref{thm:stab}. 

Figure~\ref{fig:visual} shows the sampled points from HEI-DSD and UCB-OK for one run of the Branin function. The points for HEI-Weak and HEI-MMAP are quite similar to HEI-DSD, and the points for EI-OK, EI-UK and SEI are quite similar to UCB-OK, so we only plot one of each for easy visualization. We see that HEI indeed encourages more exploration in optimization: it successfully finds \textit{all} three global optima for $f$, whereas existing methods cluster points near only one optimum. The need to identify multiple global optima often arises in multiobjective optimization. For example, a company may wish to offer multiple product lines to suit different customer preferences \citep{mak2019analysis}. For such problems, HEI can provide more practical solutions over existing methods.

\vspace{.05in}

Lastly, we compare the performance of HEI with the SEI method \citep{benassi2011robust}. From Figure~\ref{fig:levy}, we see that the SEI performs quite well for the Levy function: it is slightly worse than HEI methods, but better than the other methods. However, from Figure~\ref{fig:EI_bra}, the SEI achieves only comparable performance with EI-OK for the Branin function (which is in line with the results reported in \cite{benassi2011robust}), and is one of the worst-performing methods. This shows that the performance of SEI can vary greatly for different problems. %The HEI, by (i) incorporating uncertainty on GP non-stationarity and (ii) mimicking a fully Bayesian EI via hyperparameter estimation, appears to provide a more robust and effective black-box optimization method.

{Finally, we note that for higher-dimensional problems, the HEI (and indeed, all GP-based Bayesian optimization methods) will likely require more sophisticated GP models to work well. In particular, such GP models should ideally be able to learn low-dimensional embeddings of the objective function over the high-dimensional domain; see, e.g., recent work in \cite{Seshadri2019, zhang2022gaussian}. Integrating such low-dimensional structure within the HEI will be the topic of future work.}

\section{Semiconductor Manufacturing Optimization}\label{sec:app}

We now investigate the performance of HEI in a process optimization problem in semiconductor wafer manufacturing. In semiconductor manufacturing  \citep{jin2012sequential}, thin silicon wafers undergo a series of refinement stages. Of these, thermal processing is one of the most important stage,
since it facilitates necessary chemical reactions and allows for surface oxidation \citep{singh2000rapid}. Figure~\ref{fig:pro} visualizes a typical thermal processing procedure: a laser beam is moved radially in and out across the wafer, while the wafer itself is rotated at a constant speed. There are two objectives here. First, the wafer should be heated to a target temperature to facilitate the desired chemical reactions. Second, temperature fluctuations over the wafer surface should be made as small as possible, to reduce unwanted strains and improve wafer fabrication~\citep{brunner2013characterization}. The goal is to find an ``optimal'' setting of the manufacturing process which achieves these two objectives.

We consider five control parameters: wafer thickness, rotation speed, laser period, laser radius, and laser power (a full specification is given in Table~\ref{table:range}). The heating is performed over 60 seconds, and a target temperature of $\mathcal{T}^*=600$ F is desired over this timeframe. We use the following objective function:
\begin{align}
    f(\bx): = \sum_{t=1}^{60} \max_{\mathbf{s}\in\mathcal{S}}|\mathcal{T}_t(\mathbf{s};\bx)-\mathcal{T}^*|.
    \label{eqn:fnlaser}
\end{align}
Here, $\mathbf{s}$ denotes a spatial location on the wafer domain $\mathcal{S}$, $t = 1, \cdots, 60$ denotes the heating time (in seconds), and $\mathcal{T}_t(\mathbf{s};\bx)$ denotes the wafer temperature at location $\mathbf{s}$ and time $t$, using control setting $\bx \in \mathbb{R}^5$. Note that $f(\bx)$ captures both objectives of the study: wafer temperatures $\mathcal{T}_t$ close to $\mathcal{T}^*$ results in smaller values of $f(\bx)$, and the same is true when $\mathcal{T}_t(\mathbf{s};\bx)$ is stable over $\mathbf{s} \in \mathcal{S}$.

\begin{figure}[!tbh]
    \centering
\includegraphics[width = 0.45\textwidth]{./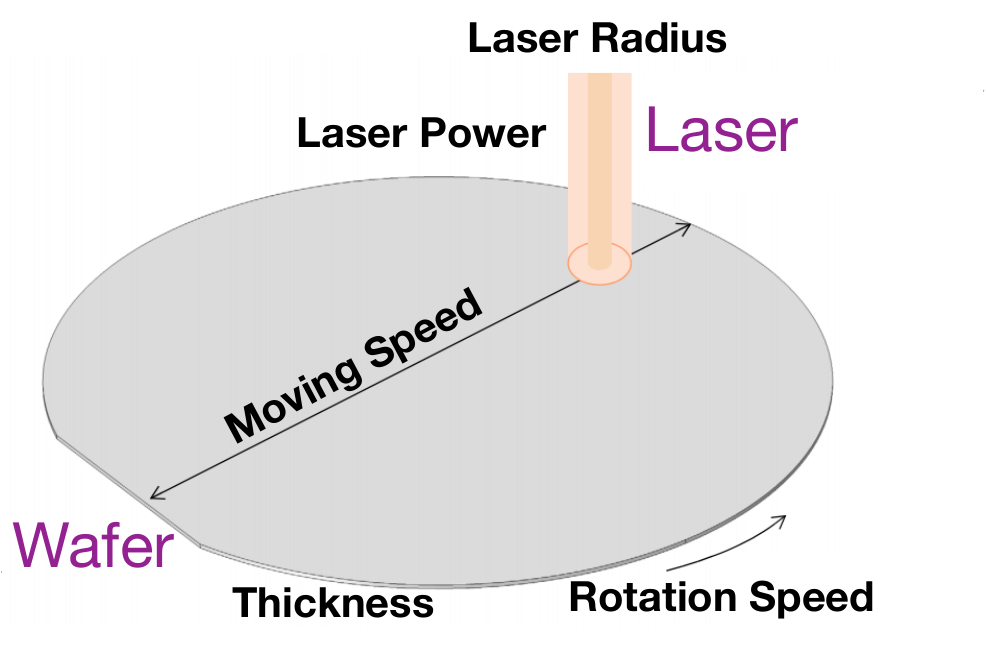}\label{fig:pro}
\includegraphics[width = 0.45\textwidth]{./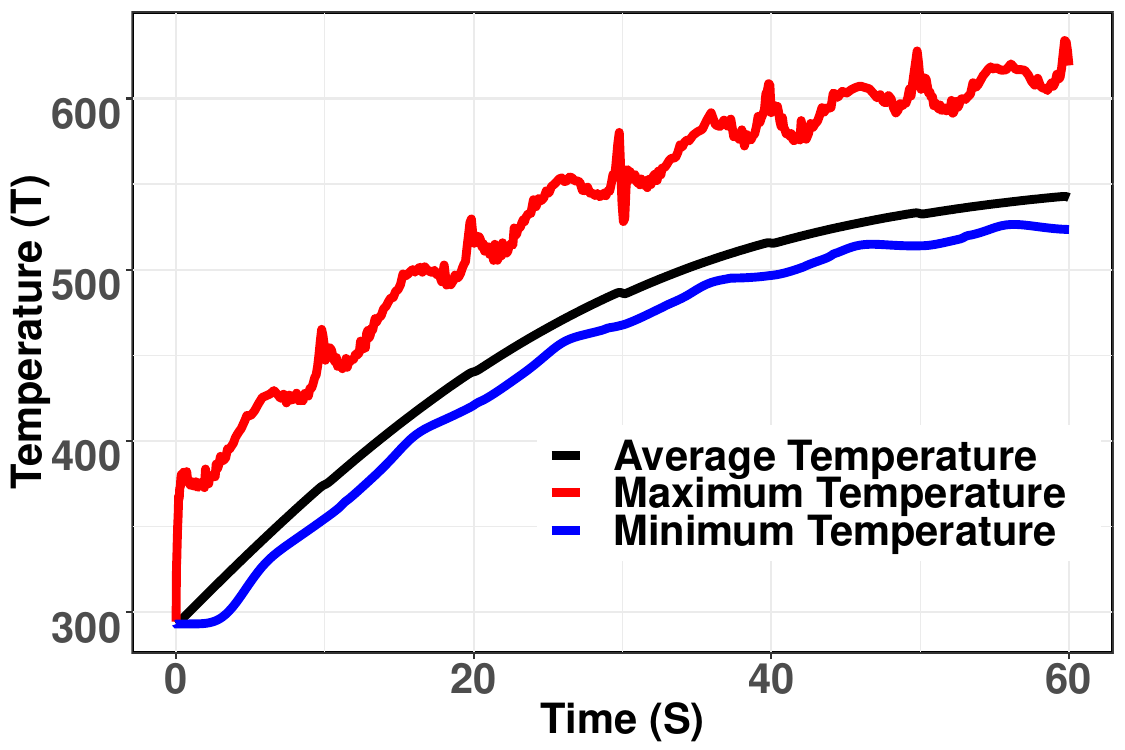}\label{fig:example}
    \caption{A visualization of the laser heating process of a silicon wafer (left), and the simulated temperature profile from COMSOL Multiphysics (right).}
    \label{fig:my_label}
\end{figure}

Clearly, each evaluation of $f(\bx)$ is expensive, since it requires a full run of wafer heating process. We will simulate each run using COMSOL Multiphysics \citep{comsol}, a reliable finite-element analysis software for solving complex systems of partial differential equations (PDEs). COMSOL models the incident heat flux from the moving laser as a spatially distributed heat source on the surface, then computes the transient thermal response by solving the coupled heat transfer and surface-to-ambient radiation PDEs. Figure~\ref{fig:example} visualizes the simulation output from COMSOL: the average, maximum, and minimum temperature over the wafer domain at every time step. Experiments are performed on a desktop computer with quad-core Intel I7-8700K processors, and take around $5$ minutes per run.

\begin{table}[!t]
\begin{center}
\caption{\label{table:range}Design ranges of the five control parameters, where rpm (revolutions per minute) measures the rotation speed of the wafer.}
\vspace{0.05in}
\begin{tabular}{c| c| c| c| c}
\hline
\hline
Thickness & Rotation Speed& Laser Period & Laser Radius & Power \\
\hline
$[160,300]\mu$s & $[2,50]$rpm & $[5,15]$s & $[2,10]$mm & $[10,20]$W\\
\hline
\hline
\end{tabular}
\end{center}
\vspace{-0.15in}
\end{table}

Figure~\ref{fig:laser_bo} shows the best objective values $f(\bx_n^*)$ for HEI-MMAP and HEI-DSD (the best performing HEI methods from simulations), and for the UCB-OK, SEI, and $\epsilon$-greedy EI methods. We see that UCB-OK and SEI perform noticeably poorly, whereas the proposed HEI-MMAP and HEI-DSD methods provide the best optimization performance, with the $\epsilon$-greedy-EI method slightly worse. This again shows that the proposed HEI can provide a principled correction to the over-greediness of EI via hierarchical modeling, which translates to more effective optimization performance over the compared existing methods.

\begin{figure}[!t]
    \centering
    \subfigure[Best objective values]
{\includegraphics[width = .4\textwidth]{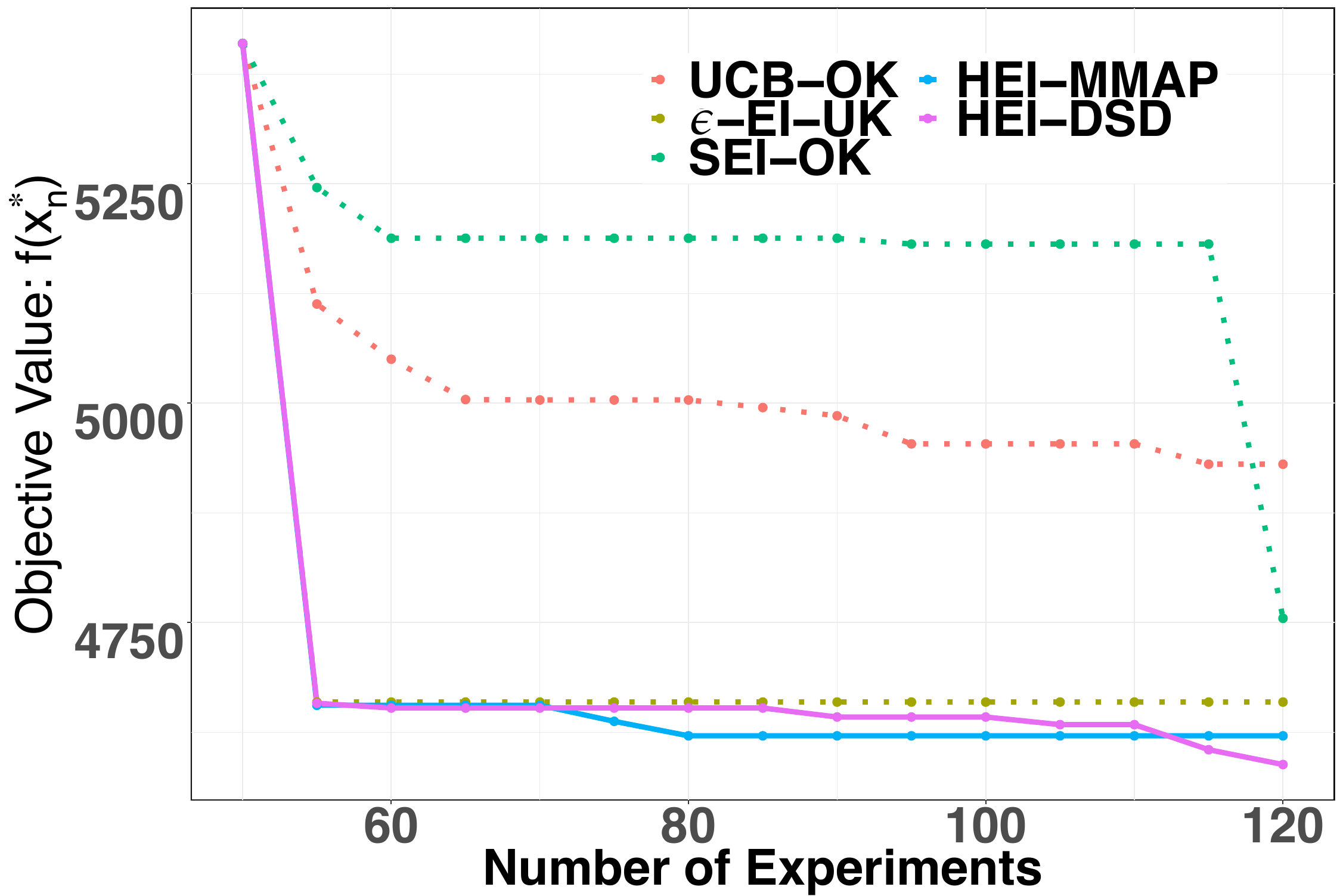}\label{fig:laser_bo}}\hspace{0.2in}
\subfigure[SEI]
{\includegraphics[width = .4\textwidth]{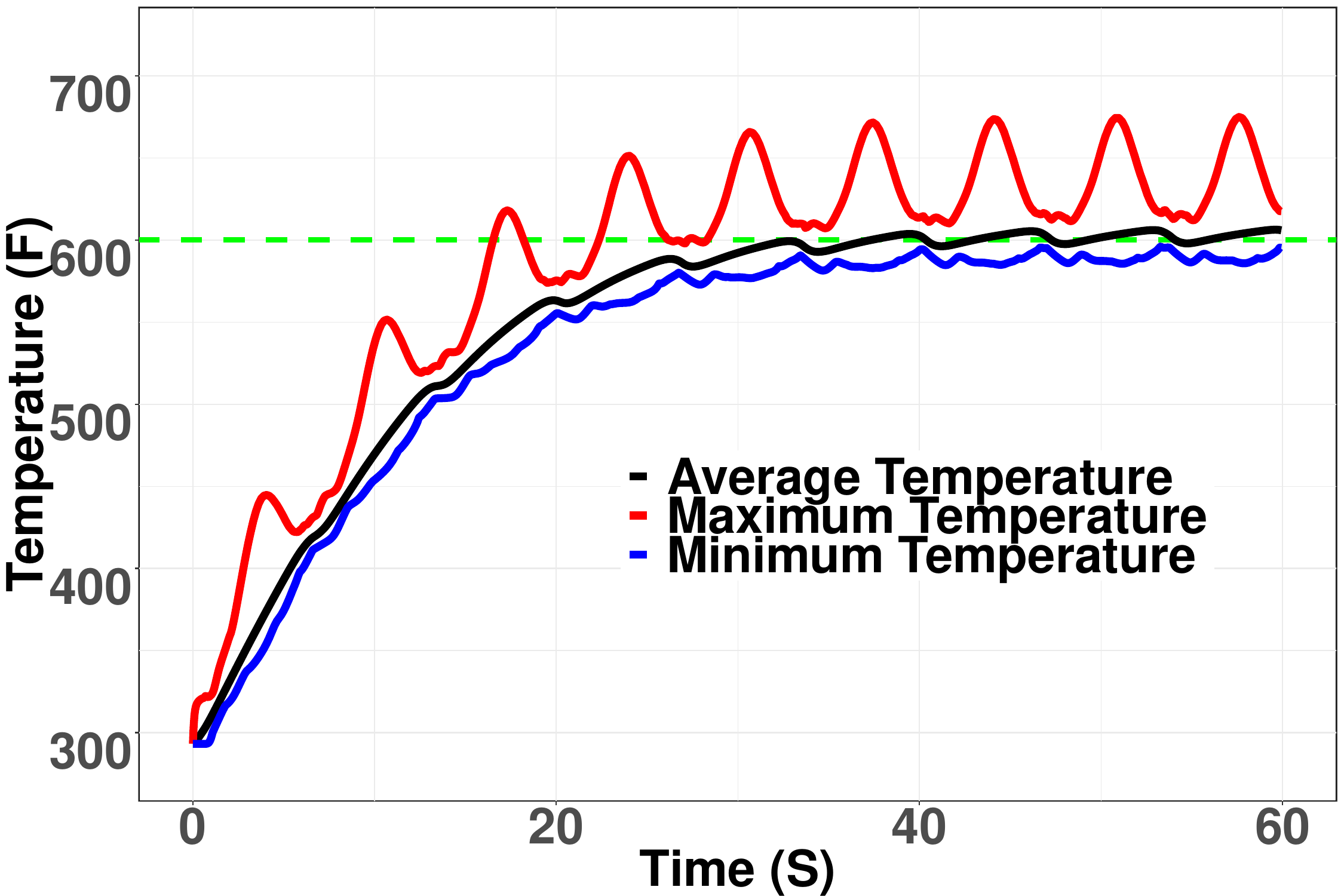}\label{fig:sei}}\\
\subfigure[HEI-MMAP]
{\includegraphics[width = .4\textwidth]{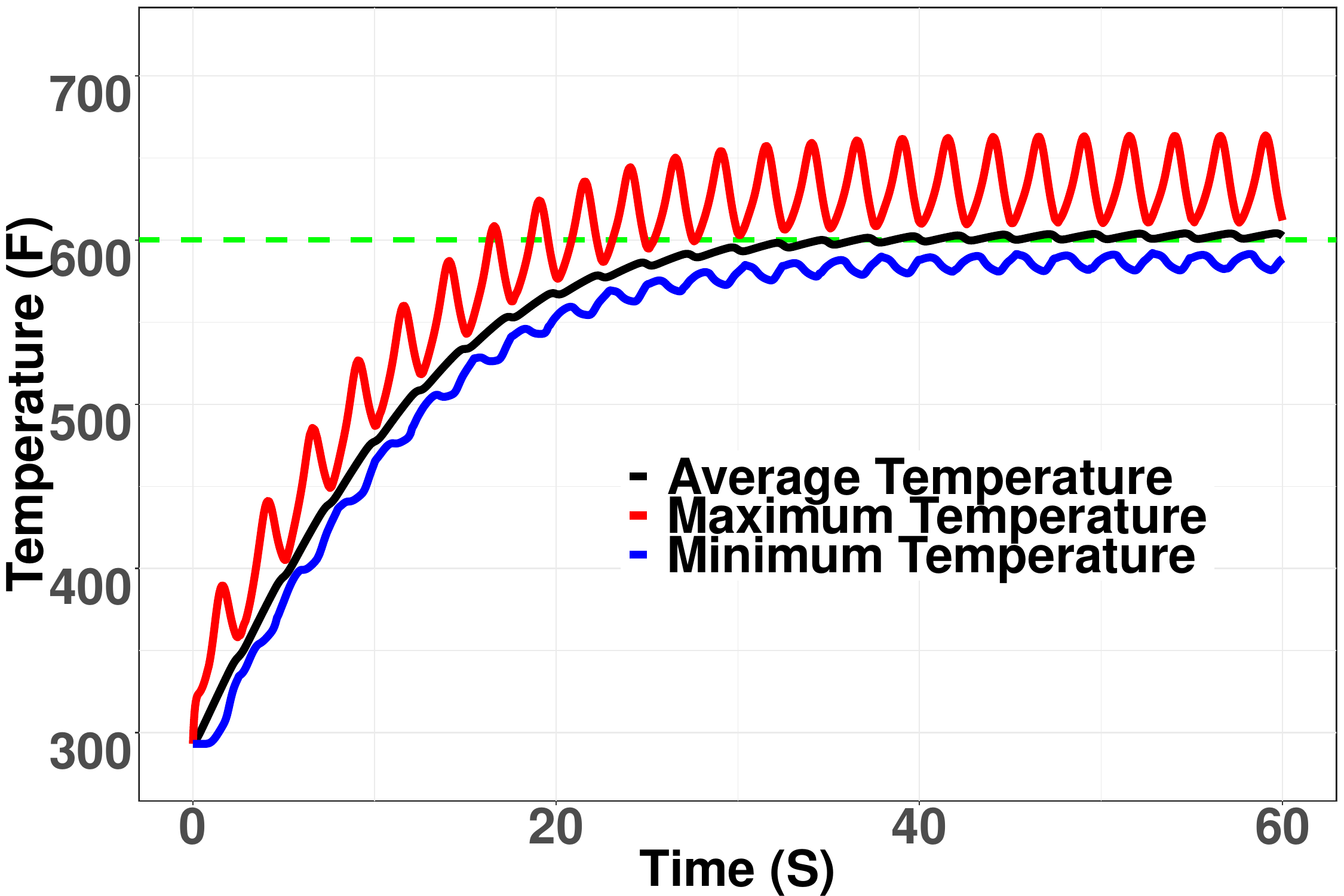}\label{fig:mmap}}\hspace{0.2in}
\subfigure[$\epsilon$-EI-UK]
{\includegraphics[width = .4\textwidth]{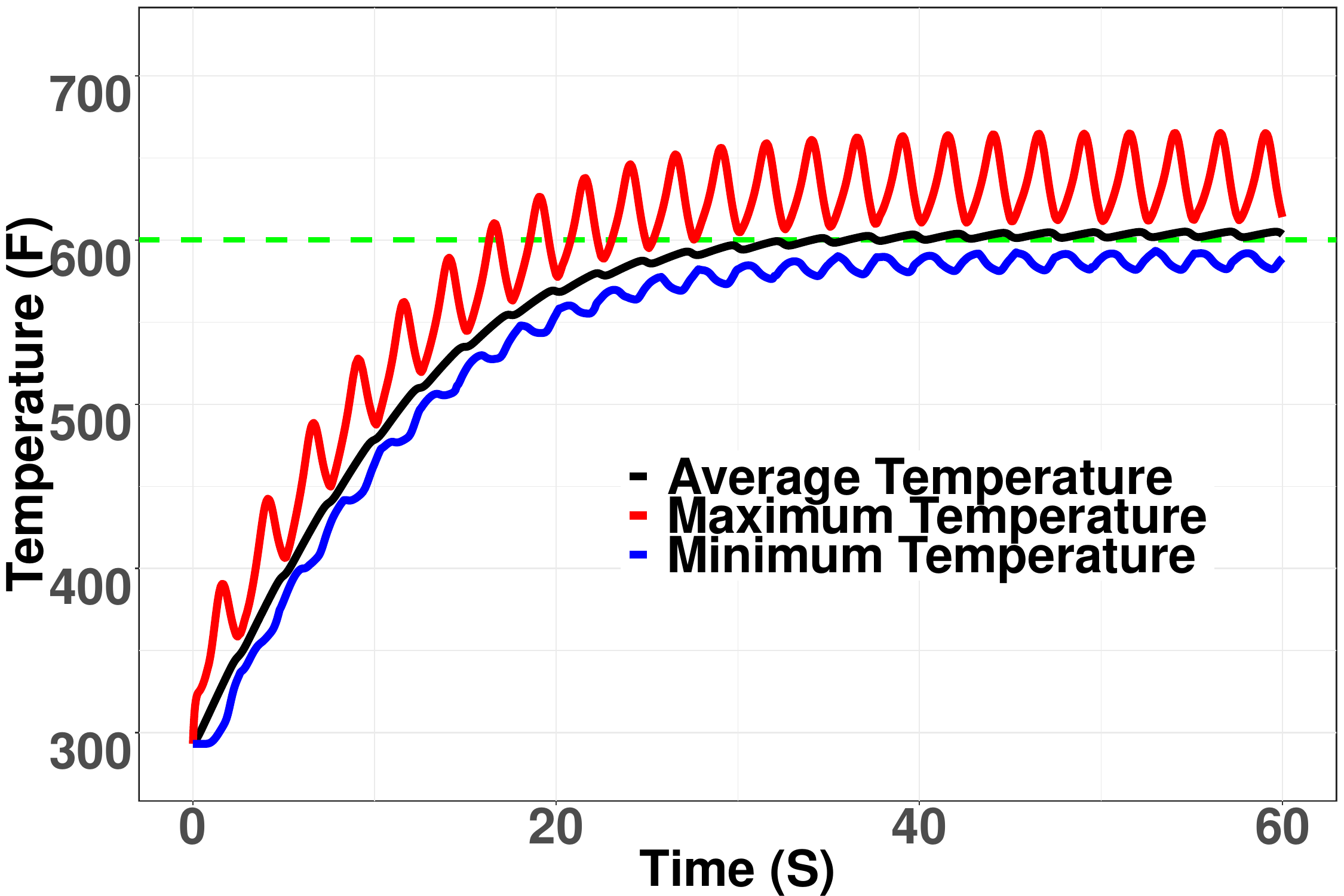}\label{fig:greedy}}\\
\subfigure[HEI-DSD]
{\includegraphics[width = .4\textwidth]{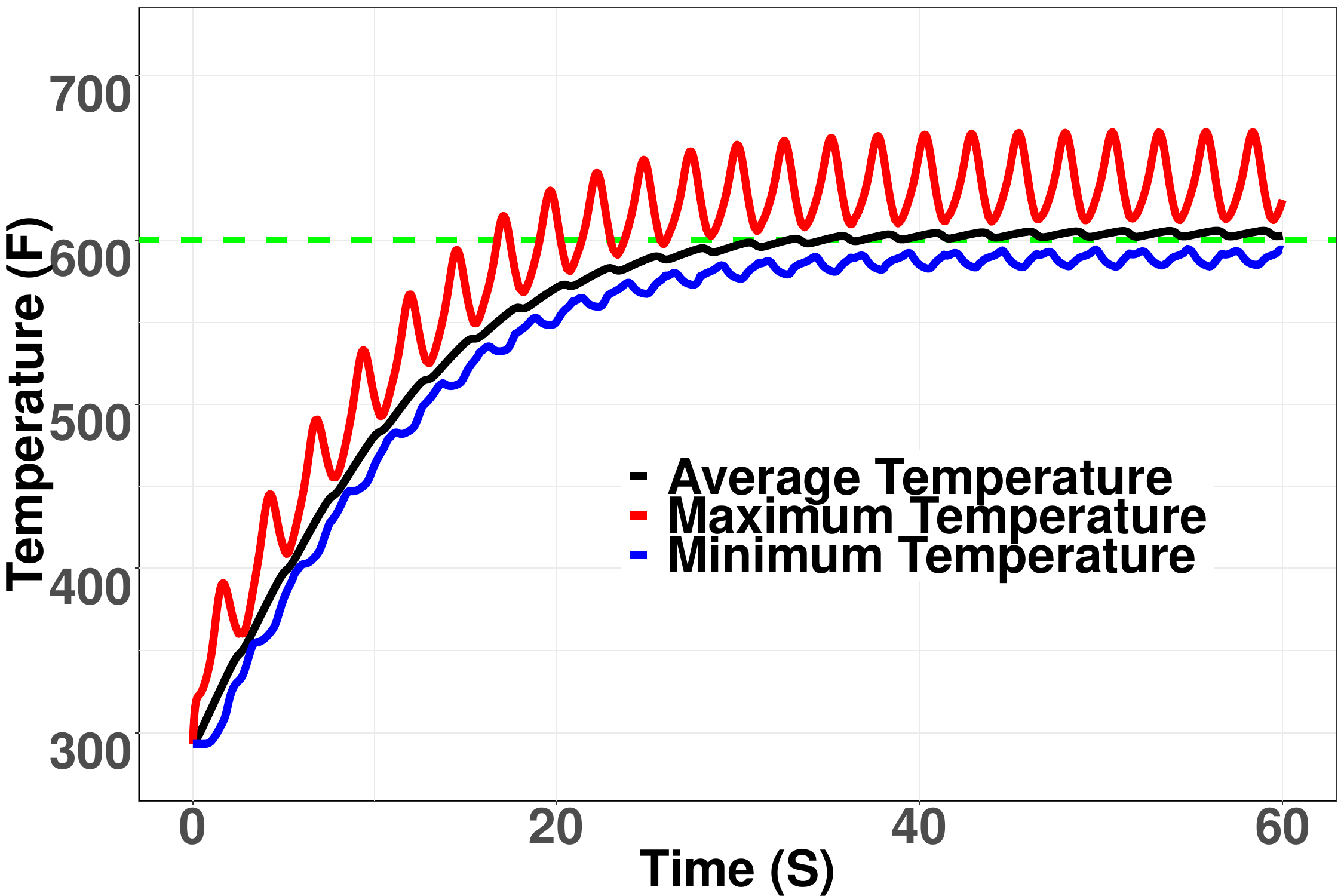}\label{fig:dsd}}\hspace{0.2in}
\subfigure[UCB-OK]
{\includegraphics[width = .4\textwidth]{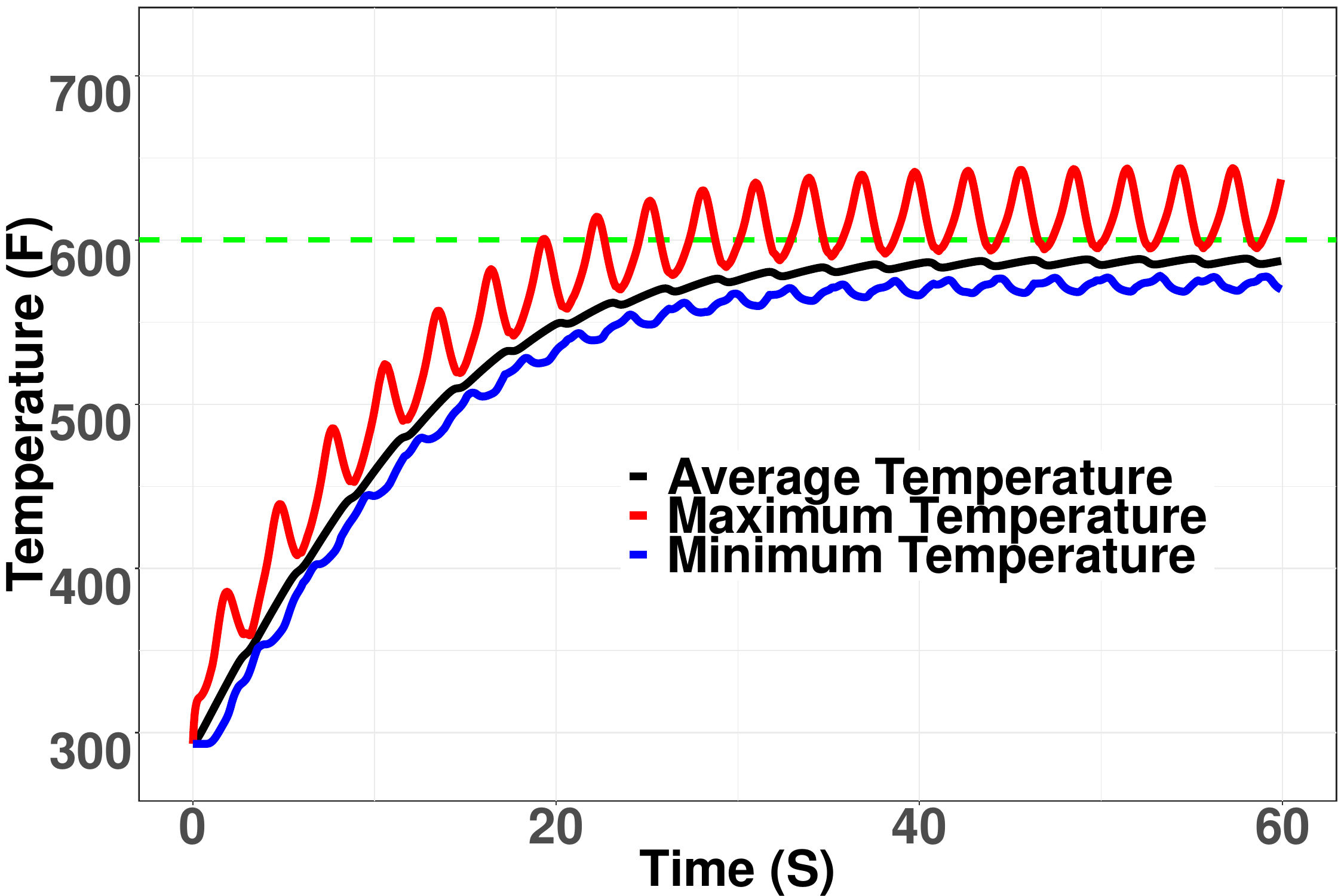}\label{fig:ucb}}
    \caption{(a) shows the best objective value $f(\bx^*_n)$ for the five compared methods. (b)-(f) show the average, maximum, and minimum temperature of the wafer over time, for each of the tested BO methods. The dotted green line marks the target temperature of $T^*=$ 600 F. }
    \label{fig:tem}
\end{figure}
The remaining plots in Figure~\ref{fig:tem} show the average, maximum, and minimum temperature over the wafer surface, as a function of time. For HEI-DSD and HEI-MMAP, the average temperature quickly hits 600 F, with a slight temperature oscillation over the wafer. For SEI, the average temperature reaches the target temperature slowly, but the temperature fluctuation is much higher than for HEI-DSD and HEI-MMAP. For UCB-OK, the average temperature does not even reach the target temperature. The two proposed HEI methods (and $\epsilon$-EI-UK, although its performance is slightly worse) return noticeably improved settings compared to the two earlier methods, thereby providing engineers with an effective and robust wafer heating process for semiconductor manufacturing.

\section{Conclusion}\label{sec:conc}
In this paper, we presented a hierarchical expected improvement (HEI) framework for Bayesian optimization of a black-box objective $f$. HEI aims to correct a key limitation of the expected improvement (EI) method: its over-exploitation of the fitted GP model, which results in a lack of convergence to a global solution even for smooth objective functions. HEI addresses this via a hierarchical GP model, which integrates parameter uncertainty of the fitted model within a closed-form acquisition function. This provides a principled way for correcting over-exploitation by encouraging exploration of the optimization space. We then introduce several hyperparameter specification methods, which allow HEI to efficiently approximate a fully Bayesian optimization procedure. Under certain prior specifications, we prove the global convergence of HEI over a broad function class for $f$, and derive near-minimax convergence rates. In numerical experiments, HEI provides improved optimization performance over existing Bayesian optimization methods, for both simulations and a process optimization problem in semiconductor manufacturing.

{Given these promising results, there are several intriguing avenues for future work. One direction is to explore potential extensions of the HEI for the high-dimensional setting of $d > n$, where the dimension of the problem may exceed the number of function evaluations. This presents interesting challenges for both theory and methodology, and may require more sophisticated GP models which can learn low-dimensional embeddings in high dimensions (see, e.g., \citealp{zhang2022gaussian}). Another interesting direction is to study the dependence of the optimization rates on the constant $\gamma$ in Condition \ref{ass:stb}.}

\bibliographystyle{chicago}
\vspace{-0.1in}
{\setlength{\bibsep}{0.0pt}  \bibliography{ref}}

\begin{appendix}

\section{Proofs}
\subsection{Proof of Proposition 2}\label{pro:thm:nei}

\begin{proof}
By Lemma 1, the posterior distribution follows a non-standardized t-distribution:
$$\Big[ f(\bx)\big| \cD_n \Big] \sim T \Big(2a+n-q, \hat{f}_n(\bx), \tilde{\sigma}_ns_n(\bx)\Big).$$
Let $\nu_n=2a+n-q$. The density function of $\Big[ f(\bx)\big| \cD_n \Big]$ then takes the following form:
$$g(f;\nu_n,\hat{f}_n,\tilde{\sigma}_n,s_n) = \frac{\Gamma((\nu_n+1)/2)}{\tilde{\sigma}_ns_n\sqrt{\nu_n\pi}\cdot\Gamma(\nu_n/2)}\left( 1 + \frac{(f-\hat{f}_n)^2}{\nu_n \tilde{\sigma}_n^2s_n^2}\right)^{-(\nu_n+1)/2}.$$
Using this density function, the HEI criterion can then be simplified as:
\begin{align}\label{eqn:dri}
\textstyle\HEI_n(\bx) & = \EE_{f|\cD_n}(y_n^*- f(\bx))_+  \notag\\
& = \left(y_n^*-\hat{f}_n\right)\Phi_{\nu_n}\left( \frac{y_n^*-\hat{f}_n}{\tilde{\sigma}_ns_n}\right) + \int_{-\infty}^{y_n^*} \left(\hat{f}_n-f\right) g(f) df.
\end{align}
The second term in \eqref{eqn:dri} can be further simplified as:
\begin{align}
&\int_{-\infty}^{y_n^*} \left(\hat{f}_n-f\right) g(f) df\\
=& - \frac{\tilde{\sigma}_ns_n}{2}\int_{-\infty}^{\left(\frac{y_n^*-\hat{f}_n}{\tilde{\sigma}_ns_n}\right)^2}\frac{\Gamma(({\nu_n}+1)/2)}{\sqrt{{\nu_n}\pi}\cdot\Gamma({\nu_n}/2)}\left( 1 + \frac{t}{{\nu_n}}\right)^{-\frac{\nu_n+1}{2}}dt\\ 
=&  \int_{-\infty}^{y_n^*} \left(\hat{f}_n-f\right)\frac{\Gamma((\nu_n+1)/2)}{\tilde{\sigma}_ns_n\sqrt{\nu_n\pi}\cdot\Gamma(\nu_n/2)}\left( 1 + \frac{(f-\hat{f}_n)^2}{\nu_n \tilde{\sigma}_n^2s_n^2}\right)^{-(\nu_n+1)/2}df\\
=&  \frac{\tilde{\sigma}_ns_n{\nu_n}}{{\nu_n}-1} \frac{\Gamma(({\nu_n}+1)/2)}{\sqrt{{\nu_n}\pi}\cdot\Gamma({\nu_n}/2)}\left(1+\frac{(f-\hat{f}_n)^2}{{\nu_n} (\tilde{\sigma}_ns_n)^2}\right)^{-\frac{\nu_n-1}{2}}\Bigg|_{-\inf}^{y_n^*} \\
=&  \frac{\tilde{\sigma}_ns_n{\nu_n}}{{\nu_n}-1} \frac{\Gamma(({\nu_n}+1)/2)}{\sqrt{{\nu_n}\pi}\cdot\Gamma({\nu_n}/2)}\left(1+\frac{(y_n^*-\hat{f}_n)^2}{{\nu_n} (\tilde{\sigma}_ns_n)^2}\right)^{-\frac{\nu_n-1}{2}}\hspace{1in}\textrm{(Since $v_n > 1$)}\\
=& \frac{\sqrt{{\nu_n}}\tilde{\sigma}_ns_n\Gamma(({\nu_n}-1)/2)}{({\nu_n}-2)\sqrt{\pi}\cdot\Gamma(({\nu_n}-2)/2)}\left(1+ \frac{1}{{\nu_n}}\left(\frac{y_n^*-\hat{f}_n}{\tilde{\sigma}_ns_n}\right)^2\right)^{-\frac{\nu_n-1}{2}}\\
=&\sqrt{\frac{{\nu_n}}{{\nu_n}-2}}\tilde{\sigma}_ns_n\phi_{{\nu_n}-2}\left(\frac{y_n^*-\hat{f}_n}{\sqrt{{\nu_n}/({\nu_n}-2)}\tilde{\sigma}_ns_n}\right).
\end{align}
Therefore, we prove the claim.
\end{proof}

\subsection{Proof of Proposition 3}\label{pro:prop:sei}
If necessary, we denote the correlation function by $s_n(\bx;\boldsymbol\theta)$ to highlight the dependence of $\boldsymbol\theta$. If not specified, we use simplified $s_n(\bx)$ to make general arguments. 

The proof uses several Lemmas in \cite{bull2011convergence}. The following lemma provides a lower bound for $s_n(\bx;\boldsymbol\theta)$. 

\begin{lemma}[Lemma 7 in \cite{bull2011convergence}]\label{lem:upper}
Set $\zeta = \alpha$ if $\nu\leq 1$ otherwise 0. Given $\boldsymbol\theta \in \RR^d_+,$ there is a constant $C' > 0$ depending only on $\Omega, K$ and $\boldsymbol\theta$ which satisfies the following:\\
For any $k \in \NN$, and sequences $\bx_n \in \Omega$, $\tilde{\boldsymbol\theta}_n \geq \boldsymbol\theta$, the inequality
$$s_n(\bx_{n+1};\tilde{\boldsymbol\theta}_n) \geq C'k^{-(\nu\land 1)/d}(\log k)^\zeta$$
holds for at most $k$ distinct $n$.
\end{lemma}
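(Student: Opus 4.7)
The plan is to follow Bull's argument: reduce to a fixed kernel via monotonicity, bound the kriging standard deviation by the nearest-neighbor distance, and close with a packing count on $\Omega$. I would proceed in three steps.

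\textbf{Step 1 (Reduction to a fixed kernel).} Since $C$ is a stationary correlation with $C(\mathbf{0})=1$ its maximum, componentwise enlarging $\boldsymbol{\theta}$ strengthens every entry of $\mathbf{K}_n$ and shrinks the Schur complement defining $s_n^2$, giving $s_n(\bx;\tilde{\boldsymbol{\theta}}_n)\leq s_n(\bx;\boldsymbol{\theta})$ for $\tilde{\boldsymbol{\theta}}_n\geq \boldsymbol{\theta}$. Consequently the event $\{s_n(\bx_{n+1};\tilde{\boldsymbol{\theta}}_n)\geq T\}$ is contained in $\{s_n(\bx_{n+1};\boldsymbol{\theta})\geq T\}$, so it suffices to prove the counting bound with the kernel $K_{\boldsymbol{\theta}}$ held fixed.

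\textbf{Step 2 (Power-function bound via fill distance).} Writing $h_n(\bx)=\min_{i\leq n}\|\bx-\bx_i\|$, I would establish an estimate of the form
$$s_n(\bx;\boldsymbol{\theta})\ \leq\ M\,h_n(\bx)^{\nu\wedge 1}\bigl(\log(1/h_n(\bx))\bigr)^{\zeta},$$
valid for $h_n(\bx)$ below some threshold, where $M$ depends only on $\Omega, K, \boldsymbol{\theta}$. This is the standard kriging-variance analogue of the native-space interpolation estimate: Assumption~\ref{ass:K} supplies the Taylor-remainder control for a local Lagrange-type argument while the Fourier-tail condition pins down the correct Sobolev order, and the logarithmic factor emerges precisely when $\nu\leq 1$ and $\alpha>0$ (compare Wendland, Theorem~11.13, and Bull's Lemma 6).

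\textbf{Step 3 (Packing argument).} Let $n_1<\cdots<n_N$ enumerate the distinct indices at which $s_n(\bx_{n+1};\boldsymbol{\theta})\geq C'k^{-(\nu\wedge 1)/d}(\log k)^{\zeta}$. Inverting the estimate of Step~2 yields $h_{n_j}(\bx_{n_j+1})\geq c\,k^{-1/d}$ for a constant $c=c(C',M)$. Because for $i<j$ the earlier chosen point $\bx_{n_i+1}$ lies in the design used at stage $n_j$, the collection $\{\bx_{n_j+1}\}_{j=1}^{N}$ is pairwise $(ck^{-1/d})$-separated in the bounded domain $\Omega$. A volumetric packing count then gives $N\leq C_{\Omega}c^{-d}\,k$, and choosing $C'$ small enough that $C_{\Omega}c^{-d}\leq 1$ forces $N\leq k$, as required.

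The main obstacle is Step~2: one must extract both the sharp exponent $\nu\wedge 1$ and the logarithmic factor of exponent $\zeta$ from the regularity in Assumption~\ref{ass:K}, which requires carefully tracing the kernel's smoothness through the native-space power-function analysis and handling the borderline $\nu\leq 1$ case separately. Once that estimate is in hand, the monotonicity in Step~1 is immediate from a standard Schur-complement inequality, and Step~3 is an elementary packing bound in which $C'$ is tuned to absorb the constants produced in Steps~1 and~2.
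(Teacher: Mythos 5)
The paper does not actually prove this lemma; it is imported verbatim as Lemma 7 of \cite{bull2011convergence} and used as a black box in the proofs of Proposition~\ref{prop:sei} and Theorem~\ref{thm:con}. Your three-step reconstruction (reduce to a fixed kernel, bound $s_n$ by a power of the nearest-neighbor distance, then pack/pigeonhole in $\Omega$) is exactly the architecture of Bull's argument, and Steps 2 and 3 are essentially right. However, your justification of Step 1 has a genuine gap. The claim that enlarging $\boldsymbol{\theta}$ componentwise ``strengthens every entry of $\mathbf{K}_n$ and shrinks the Schur complement'' is not a valid monotonicity argument: in $K(\bx,\bx)-\bk_n^\top(\bx)\mathbf{K}_n^{-1}\bk_n(\bx)$, increasing the entries of $\bk_n(\bx)$ pushes the Schur complement down while increasing $\mathbf{K}_n$ pushes it up, and there is no general ordering. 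What the norm-comparison result (Lemma~\ref{lem:bounded-norm} here, Bull's Lemma 4) actually yields, via the native-space characterization of the power function, is $s_n(\bx;\tilde{\boldsymbol\theta}_n)\le \big(\prod_i \tilde\theta_{n,i}/\theta_i\big)^{1/2} s_n(\bx;\boldsymbol\theta)$ --- a constant that is \emph{unbounded} under the lemma's sole hypothesis $\tilde{\boldsymbol\theta}_n\ge\boldsymbol\theta$, so the clean inequality $s_n(\bx;\tilde{\boldsymbol\theta}_n)\le s_n(\bx;\boldsymbol\theta)$ cannot be obtained this way, and your reduction as stated does not go through.

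The repair is to drop Step 1 and make Step 2 uniform in $\tilde{\boldsymbol\theta}_n\ge\boldsymbol\theta$ directly, which is what Bull does: since adding design points only decreases the power function, $s_n(\bx;\tilde{\boldsymbol\theta}_n)^2$ is dominated by the one-point quantity $1-C\big((\bx-\bx_{i^*})/\tilde{\boldsymbol\theta}_n\big)^2$ at the nearest design point $\bx_{i^*}$, and because $\|\mathbf{u}/\tilde{\boldsymbol\theta}_n\|_2\le\|\mathbf{u}/\boldsymbol\theta\|_2$, the Taylor-remainder control in Assumption~\ref{ass:K} applies at scale $\boldsymbol\theta$ with a constant depending only on $\Omega$, $K$, $\boldsymbol\theta$; the log factor with exponent $\zeta$ appears exactly when $\nu\le 1$. (For the universal-kriging $s_n$ of \eqref{eqn:post_var} one must also absorb the nonnegative trend term $\bh_n^\top\mathbf{G}_n^{-1}\bh_n$, which vanishes at design points; this is the ``generalized power function'' bound the paper alludes to.) Two smaller points: in Step 3 the separation radius $c$ is an \emph{increasing} function of $C'$, so you must take $C'$ \emph{large} enough --- not small enough --- to force $C_\Omega c^{-d}\le 1$ (which is harmless, since the conclusion is monotone in $C'$); and you should verify that the $(\log(1/h))^{\zeta}$ factor inverts consistently against $(\log k)^{\zeta}$ when $h\asymp k^{-1/d}$, which it does up to a factor $d^{\zeta}$.
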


\begin{lemma}[Lemma 9 in \cite{bull2011convergence}]\label{lemma:sup}
Given $\boldsymbol\theta^L,\boldsymbol\theta^U\in\RR^d_+$, pick sequences $\bx_n \in \Omega$, the corresponding posterior of scale parameter $\boldsymbol\theta^L\leq \tilde{\boldsymbol\theta}_n\leq \boldsymbol\theta^U.$ Then for an open $S \subset \Omega,$
\begin{align}
    \sup_{x\in S} s_n(x;\tilde{\boldsymbol\theta}_n) = \boldsymbol\Omega(n^{-\nu/d}),
\end{align}
uniformly in the sequences $x_n, \tilde{\boldsymbol\theta}_n$. Here $\boldsymbol\Omega$ denotes the asymptotic lower bound notation. 
\end{lemma}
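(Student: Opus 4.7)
The plan is to reduce the desired lower bound to a classical fill-distance estimate for the power function in scattered-data approximation, then exploit that $S$ is open to guarantee enough room in $\Omega$ for such a lower bound. The key reduction is that the universal-kriging posterior variance dominates the simple-kriging variance: because $\mathbf{G}_n \succeq 0$, the extra term $\bh_n^\top \mathbf{G}_n^{-1} \bh_n$ in \eqref{eqn:post_var} is non-negative, so
\begin{equation*}
  s_n^2(\bx;\tilde{\boldsymbol\theta}_n) \;\geq\; K_{\tilde{\boldsymbol\theta}_n}(\bx,\bx) - \bk_n^\top(\bx)\mathbf{K}_n^{-1}\bk_n(\bx) \;=\; P^2_{X_n,\tilde{\boldsymbol\theta}_n}(\bx),
\end{equation*}
where $P_{X_n,\tilde{\boldsymbol\theta}_n}$ denotes the power function of the kernel interpolant on the design $X_n = \{\bx_i\}_{i=1}^n$ in the RKHS $\cH_{\tilde{\boldsymbol\theta}_n}(\Omega)$. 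It therefore suffices to produce, for every $n$, a point $\bx_n^\star \in S$ at which $P_{X_n,\tilde{\boldsymbol\theta}_n}(\bx_n^\star) \gtrsim n^{-\nu/d}$ with a constant independent of the sequences.

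Next I would use a packing argument. Since $S$ is open and nonempty, it contains a closed ball $\bar B(\bz_0, r_0)\subset S$ with $r_0>0$. By a standard volume comparison, the $n$ balls $B(\bx_i, h_n/2)$ with $h_n := c_0 r_0 n^{-1/d}$ cover a total volume strictly smaller than that of $\bar B(\bz_0,r_0)$ once $c_0$ is sufficiently small, so there exists $\bx_n^\star \in \bar B(\bz_0, r_0) \subset S$ with $\min_i \|\bx_n^\star - \bx_i\|_2 \geq h_n$. This produces the required separation at the cost of a constant depending only on $d$ and $r_0$.

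The key step is then the lower bound on the power function at $\bx_n^\star$. By the variational characterization,
\begin{equation*}
  P^2_{X_n,\tilde{\boldsymbol\theta}_n}(\bx_n^\star) \;=\; \sup\bigl\{\, g(\bx_n^\star)^2 \,:\, g\in \cH_{\tilde{\boldsymbol\theta}_n}(\Omega),\ \|g\|_{\cH_{\tilde{\boldsymbol\theta}_n}}\leq 1,\ g(\bx_i)=0\ \forall i\bigr\}.
\end{equation*}
Because Assumption~\ref{ass:K} forces the Fourier transform of $C$ to decay as $\|\boldsymbol\xi\|^{-2\nu-d}$ (the second Fourier alternative yields an even stronger bound), the RKHS $\cH_{\tilde{\boldsymbol\theta}}(\Omega)$ is norm-equivalent, up to $\tilde{\boldsymbol\theta}$-dependent constants, to a Sobolev-type space of smoothness $\nu$. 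A standard bump construction (rescale a fixed $C^\infty_c$ template of unit integral to have support in $B(\bx_n^\star, h_n)$) then yields a test function $\psi_n\in \cH_{\tilde{\boldsymbol\theta}_n}(\Omega)$ vanishing at every $\bx_i$ with $\psi_n(\bx_n^\star)\asymp h_n^{\nu}$ and $\|\psi_n\|_{\cH_{\tilde{\boldsymbol\theta}_n}}\leq C_1$. Plugging $\psi_n/\|\psi_n\|$ into the supremum gives $P_{X_n,\tilde{\boldsymbol\theta}_n}(\bx_n^\star) \gtrsim h_n^{\nu} \asymp n^{-\nu/d}$, which is the claim.

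The main obstacle is uniformity in $\tilde{\boldsymbol\theta}_n$: the bump construction and RKHS norm equivalences both depend on the length-scales, but every dependence enters through continuous functions of $\tilde{\boldsymbol\theta}_n$ on the compact box $[\boldsymbol\theta^L,\boldsymbol\theta^U]$, so the implicit constants can be chosen uniformly by taking suprema/infima over this box. This is precisely the reason the two-sided bound on $\tilde{\boldsymbol\theta}_n$ is stated in the hypothesis. Apart from this bookkeeping, the argument is essentially the scattered-data-approximation estimate underlying Lemma~7 above, used now to bound $s_n$ from below at a carefully placed point rather than from above at a sampled point.
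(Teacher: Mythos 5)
The paper gives no proof of this lemma --- it is imported verbatim as Lemma~9 of \cite{bull2011convergence} --- and your argument is essentially a correct reconstruction of Bull's original one: drop the nonnegative trend term $\bh_n^\top\mathbf{G}_n^{-1}\bh_n$ to reduce to the simple-kriging power function, use a volume/packing argument in a ball inside the open set $S$ to find a point separated from all $n$ design points by $\gtrsim n^{-1/d}$, lower-bound the power function there with a rescaled compactly supported bump via the variational characterization, and absorb the length-scale dependence using compactness of $[\boldsymbol\theta^L,\boldsymbol\theta^U]$. One caveat: your parenthetical that the second Fourier alternative ``yields an even stronger bound'' is backwards --- that alternative only gives an \emph{upper} bound on $\widehat{C}$, so it shrinks the RKHS and does not guarantee that a compactly supported bump lies in $\cH_{\tilde{\boldsymbol\theta}_n}(\Omega)$ (it does not for Gaussian-type kernels); the bump argument genuinely requires the two-sided $\Theta(\norm{\boldsymbol{\xi}}_2^{-2\nu-d})$ decay, which is harmless here since the lemma is only meaningful and only invoked for $\nu<\infty$. (Also trivially, a point outside all balls $B(\bx_i,h_n/2)$ is separated by $h_n/2$, not $h_n$.)
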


\begin{proof}
This is a constructive proof. The key idea is that given the initial points $\bx_1,\cdots,\bx_k$ independent of the objective function, we construct two functions $h(\bx)$ and $\tilde{h}(\bx)$ such that SEI cannot distinguish these two functions and misses the global optimal of $\tilde{h}$.

First, given a random initial strategy $F$, we can use a union of small open subsets, denoted by $\mathcal{X}_0$, such that $k$ initial points generated by the strategy satisfies that $P_F(\bx_1,...,\bx_k\in\mathcal{X}_0)\leq \epsilon$. Then we partition the domain $\Omega$ as shown in Figure~\ref{fig:partition}. Here $\mathcal{X}_0$ and $\mathcal{Y}_0$ are two disjoint non-empty interior domains. 
\begin{figure}[!t]
    \centering
    \includegraphics[width = 0.5\textwidth]{./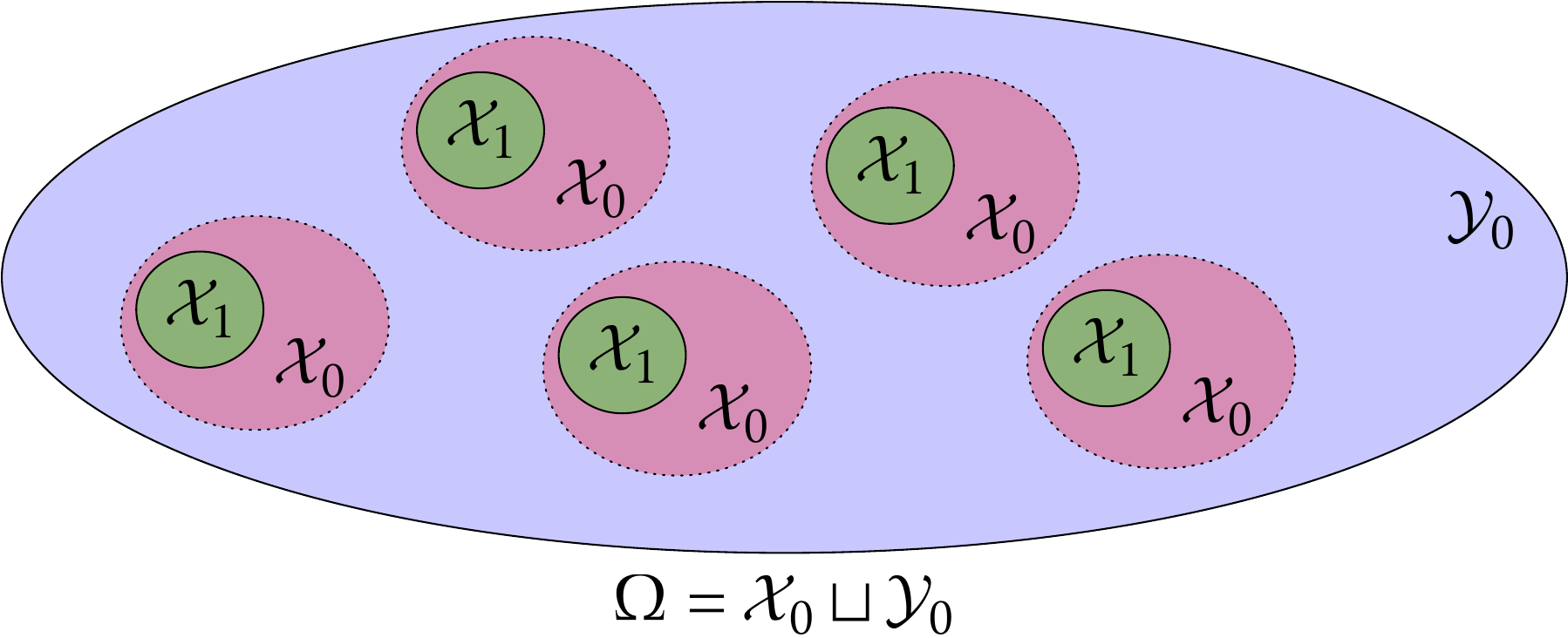}
    \caption{An illustrative example of how domain $\Omega$ is partitioned to two disjoint parts: an open set $\mathcal{X}_0$ and a closed set $\mathcal{Y}_0$. $\mathcal{X}_1$ is a closed subset of $\mathcal{X}_0$ (each green circle is a subset of $\mathcal{X}_1$ and each pink circle is a subset of $\mathcal{X}_0$). Under the distribution $F$ over $\Omega$, the probability of choosing $\mathcal{X}_0$ is less than $\epsilon$.}
    \label{fig:partition}
\end{figure}
Then we first construct a function $h$ as follows: 
\begin{align}
h(\bx) := \left\{
\begin{array}{cc}
    0 & \bx\in \mathcal{Y}_0 \\
    1 & \bx\in\mathcal{X}_1\\
    h_1(\bx) & \bx\in \mathcal{X}_0\backslash\mathcal{X}_1
\end{array}
\right..
\end{align}
Here $h_1(\bx)\geq 0$ is a function that ensures $h(\bx)\in C^\infty(\Omega)$. Thus, it is easy to verify that $h(\bx) \in \mathcal{H}_{\boldsymbol\theta}(\Omega)$, {since $\Omega$ is compact}. We denote $\norm{h}_{\mathcal{H}_{\boldsymbol\theta}(\Omega)} = R^2$. With such a function $h(\bx)$, if there is one $i\in\{1,2,...,k\}$ such that $\bx_i\in\mathcal{Y}_0$, then $y_k^* = 0$. Therefore, the probability of $y_k^* = 0$ is at least $1-\epsilon$. Then we show that conditioning on $y_k^*=0$, denoted by event $A_0$, SEI cannot visit $\mathcal{X}_1$ infinitely often. We prove this by contradiction. By Lemma~\ref{lem:upper}, we know that as $n\rightarrow \infty$, $s_n(\bx_{n+1};\tilde{\boldsymbol\theta}_n)\rightarrow 0.$ Suppose $\bx_{n+1}\in\mathcal{X}_1$. If $s_n(\bx_{n+1};\tilde{\boldsymbol\theta}_n)=0$, then $\bx_{n+1}\in\{\bx_1,...,\bx_n\}$ and $\textrm{SEI}_n(\bx_{n+1})=0$. Therefore, we can find a $\bz_{n+1}\in \mathcal{Y}_0$ such that $s_n(\bz_{n+1})\neq 0$ and $\textrm{SEI}_n(\bx_{n+1})=0< \textrm{SEI}_n(\bz_{n+1})$. 
If $s_n(\bx_{n+1};\tilde{\boldsymbol\theta}_n)>0$, then we have $$T_n(\bx_{n+1}):=\frac{y_n^*-h(\bx_{n+1})}{s_n(\bx_{n+1};\tilde{\boldsymbol\theta}_n)} = -s_n(\bx_{n+1};\tilde{\boldsymbol\theta}_n)^{-1} \rightarrow -\infty.$$
SEI~\citep{benassi2011robust} has the following form:
\begin{align}
\label{eqn:sei_pro}
    \textrm{SEI}_n(\bx) =  \tilde{\sigma}_ns_n(\bx)\Big(\frac{I_n(\bx)}{\tilde{\sigma}_ns_n(\bx)}\Phi_n\big(\frac{I_n(\bx)}{\tilde{\sigma}_ns_n(\bx)}\big)+\frac{\nu_n+(I_n(\bx)/\tilde{\sigma}_ns_n(\bx))^2}{\nu_n-1}\phi_n\big(\frac{I_n(\bx)}{\tilde{\sigma}_ns_n(\bx)}\big)\Big).
\end{align}
Moreover, $T_n$ and $I_n$ are connected by Lemma \ref{lem:err}, i.e., $$I_n(\bx)/s_n(\bx)\leq T_n(\bx)+\norm{h}_{\mathcal{H}_{\tilde{\boldsymbol\theta}_n}(\Omega)}\leq T_n(\bx)+R\sqrt{\prod_{i=1}^d\theta_i^U/\theta_i^L}=:T_n(\bx)+S.$$ The last inequality holds because of Lemma~\ref{lem:bounded-norm}. Then we establish an upper bound for the part in parentheses in \eqref{eqn:sei_pro}:
\begin{align}
\eta \Phi_n(\eta)+\frac{\nu_n+\eta^2}{\nu_n-1}\phi_n(\eta) &= \Theta\left( -\nu_n^{-(\nu_n+1)/2}\frac{\eta^{-\nu_n-1}}{\nu_n+2} + 2  \nu_n^{-(\nu_n+1)/2}\frac{|\eta|^{-\nu_n+1}}{\nu_n} \right)\notag\\
&= \Theta(|\eta|^{-\nu_n+1}) \quad\textrm{ as $\eta \rightarrow -\infty$.}
\end{align}
The first equality holds because of the fact $(1+\eta^2/\nu_n)\geq \frac{x^2}{\nu_n}.$
On the other hand, by Lemma~\ref{lemma:sup}, there exists a $\bz_{n+1}\in \mathcal{Y}_0$ such that $h(\bz_{n+1}) = 0$ and $s_{n}(\bz_{n+1};\tilde{\boldsymbol\theta}_n)=\boldsymbol\Omega (n^{-\nu/d}).$ Then we have $$\textrm{SEI}_n(\bz_{n+1};\tilde{\boldsymbol\theta}_n)=\tilde{\sigma}_ns_n(\bz_{n+1};\tilde{\boldsymbol\theta}_n)\left(\frac{\nu_n}{\nu_n-1} \frac{\Gamma((\nu_n+1)/2)}{\sqrt{\nu_n\pi}\Gamma(\nu_n/2)}\right).$$
Here $\Gamma$ denotes the gamma function. Therefore, we obtain
\begin{align}
    \frac{\textrm{SEI}_n(\bx_{n+1};\tilde{\boldsymbol\theta}_n)}{\textrm{SEI}_n(\bz_{n+1};\tilde{\boldsymbol\theta}_n)}\leq\mathcal{O}\left(\frac{((T_n(\bx)+S)/\tilde{\sigma}_n)^{-\nu_n+1}}{n^{-\nu/d}}\right) = o(1).
\end{align}
The last equality holds because $T_n\rightarrow -\infty$ and $\nu_n$ is the same order as $n$ under fixed hyperparameters. {Then, given any positive value $A$, there exists an integer $n_A$ such that for any $n > n_A$, $T_n < -A$. Moreover, since $\nu_n$ is of the same order as $n$, then $\big|\frac{((T_n(\bx)+S)/\tilde{\sigma}_n)^{-\nu_n+1}}{n^{-\nu/d}}\big| < \frac{A^{-n}}{n^{-\nu/d}} \rightarrow 0$ as $n \rightarrow \infty$}. Therefore, by contradiction, we know that on $A_0$, there is a random variable $w$, for all $n > w$, $x_n \not\in \mathcal{X}_1$. Hence there is a constant $t \in \NN$, depending on $\epsilon$, such that the event $A_1 = A_0\cap \{w \leq  t\}$ has probability at least $1-2\epsilon$ under the SEI strategy. Then we can further select an open set $\mathcal{W}\subset \mathcal{X}_1$ such that the event $A_2 = A_1\cap \{\bx_n\not\in \mathcal{W}:n<t \}$ has probability at least $1-3\epsilon$. 

Finally, we can construct another smooth function  $g(\bx)$ like $h(\bx)$, which equals $0$ when $\bx\not\in\mathcal{W}$ and has minimum $-2$.  Then we construct $\tilde{h}(\bx):=h(\bx) + g(\bx)$, which has minimum $-1$. However, SEI cannot distinguish the difference between these two functions and $\inf_n \tilde{h}(\bx_n)\geq 0$. Thus, for $\delta=1$, we have
$$\PP_F\left( \inf_{n}\tilde{h}(\bx_n^*) - \min_{\bx\in\Omega}\tilde{h}(\bx)\geq \delta\right)\geq1-3\epsilon.$$
We obtain the desired result.
\end{proof}

\subsection{Proof of Proposition 4}\label{pro:prop:eb}
\begin{proof}
The marginal likelihood can be obtained by integrating out the parameters $\boldsymbol{\beta}$ and $\sigma^2$ in the hierarchical model :
\begin{align}%\label{eqn:margin}
&\textstyle p(\by_n;a, b)\notag\\
=& \int \frac{\exp\{-(\by_n-\mathbf{P}_n \boldsymbol{\beta})^\top\mathbf{K}_n^{-1}(\by_n-\mathbf{P}_n \boldsymbol{\beta})/(2\sigma^2)\}}{\sqrt{2\pi \det(\sigma^2 \mathbf{K}_n)}} \frac{(b/\sigma^2)^{a} }{\sigma^2\Gamma(a)} \exp\left(-\frac{b}{\sigma^2}\right)d \boldsymbol{\beta} d \sigma^2\notag\\
=& \int \sqrt{\frac{\det\left(\sigma^2 \mathbf{G}_n^{-1}\right)}{\det\left(\sigma^2 \mathbf{K}_n\right)}} \exp\left\{-\frac{\by_n^\top \mathbf{K}_n^{-1} \by_n - \hat{\boldsymbol{\beta}}_n^\top \mathbf{G}_n \hat{\boldsymbol{\beta}}_n}{2\sigma^2}\right\}\frac{(b/\sigma^2)^{a} }{\sigma^2\Gamma(a)} \exp\left(-\frac{b}{\sigma^2}\right) d \sigma^2\notag\\
=&  \sqrt{\frac{\det(\mathbf{G}_n^{-1})}{\det( \mathbf{K}_n)}} \frac{b^a}{\Gamma(a)}\int (\sigma^2)^{-(a+\frac{n-q}{2})-1}\exp\left\{-\frac{\by_n^\top \mathbf{K}_n^{-1} \by_n - \hat{\boldsymbol{\beta}}_n^\top \mathbf{G}_n \hat{\boldsymbol{\beta}}_n+2b}{2\sigma^2}\right\}d \sigma^2 \notag\\
=&    \sqrt{\frac{\det(\mathbf{G}_n^{-1})}{\det( \mathbf{K}_n)}} \frac{b^a}{\Gamma(a)} \frac{\Gamma(a+(n-q)/2)}{(b+(\by_n^\top \mathbf{K}_n^{-1} \by_n - \hat{\boldsymbol{\beta}}_n^\top \mathbf{G}_n \hat{\boldsymbol{\beta}}_n)/2)^{a+\frac{n-q}{2}}}.
\end{align}

Consider next the optimization of the marginal likelihood (17). Since the first term $\sqrt{\det(\mathbf{G}_n^{-1})/ \det(\mathbf{K}_n)}$ does not involve $a$ and $b$, we consider only the remaining terms in (17), and denote it as $p(\by_n;a, b)$. The partial derivative of $p(\by_n;a, b)$ in $b$ is:
\begin{align}
\textstyle \nabla_b \; p(\by_n;a, b) = \frac{\Gamma(a+(n-q)/2)}{\Gamma(a)}\frac{b^{a-1}(a(\by_n^\top \mathbf{K}_n^{-1} \by_n -\hat{\boldsymbol{\beta}}_n^\top \mathbf{G}_n \hat{\boldsymbol{\beta}}_n)/2- b(n-q)/2)}{(b+(\by_n^\top \mathbf{K}_n^{-1} \by_n - \hat{\boldsymbol{\beta}}_n^\top \mathbf{G}_n \hat{\boldsymbol{\beta}}_n)/2)^{a+(n-q)/2+1}}.
\end{align}
Setting this to zero and solving for $b$, we get the profile maximizer:
\begin{align}\label{eqn:opti}
\textstyle b^* (a)= a\left(\by_n^\top \mathbf{K}_n^{-1} \by_n -\hat{\boldsymbol{\beta}}_n^\top \mathbf{G}_n \hat{\boldsymbol{\beta}}_n\right)/(n-q).
\end{align} 

Now, let $w = \by_n^\top \mathbf{K}_n^{-1} \by_n - \hat{\boldsymbol{\beta}}_n^\top \mathbf{G}_n \hat{\boldsymbol{\beta}}_n$, in which case $b^*(a) = a \cdot w/(n-q)$. With this, the (rescaled) marginal likelihood can be written as a function of only $a$:
$$p( \by_n;a,b^*(a)) = \frac{(aw)^a\Gamma(a+(n-q)/2)}{\Gamma(a)(n-q)^a (a w/(n-q)+w/2)^{a+\frac{n-q}{2}}}.$$
Taking the gradient of $p( \by_n;a,b^*(a))$ in $a$, we get:
\begin{align}
\textstyle\nabla_a \; p( \by_n;a,b^*(a)) &= \frac{(aw)^a\Gamma(a+\frac{n-q}{2}) }{\Gamma(a) (a w/(n-q)+w/2)^{a+(n-q)/2}(n-q)^a}\notag \\
&\cdot \left(\Psi\left(a+\frac{n-q}{2}\right)-\Psi(a)-\log\left(1+\frac{n-q}{2a}\right)\right),
\end{align}
where $\Psi(x) = \frac{\Gamma'(x)}{\Gamma(x)}$ satisfies 
%\begin{align}%\label{eqn:property}
$\textstyle\Psi(a+1) = \frac{1}{a}+\Psi(a).$ 
Therefore, for even values of $n-q$, we have $$\Psi\left(a+\frac{n-q}{2}\right)-\Psi(a)= \sum_{i = 0}^{(n-q)/2-1}\frac{1}{a+i}\geq \log\left(1+\frac{n-q}{2a}\right),$$
while for odd values of $n-q$, we have  
\begin{align}
\Psi\left(a+\frac{n-q}{2}\right)-\Psi(a)&\geq \sum_{i = 0}^{(n-q-1)/2-1}\frac{1}{a+i}+\frac{1}{2a+n-q-1}\notag\\
&\geq \log\left(1+\frac{n-q-1}{2a}\right)+\frac{1}{2a+n-q-1}\notag\\
&\geq \log\left(1+\frac{n-q}{2a}\right).
\end{align}
Hence, $p(\by_n;a, b^*(a))$ is a monotonically increasing function in $a$, and it follows that there are no finite maximizer for the marginal likelihood $p(\by_n;a,b)$ over $(a,b) \in \mathbb{R}^2_+$.
%Then $a^* = \argmax_a L(a,b^*(a)\big| \by_n)$ must be one of the solutions for 
%\begin{align}\label{eqn:opti}
%\Psi(a^*+\frac{n-q}{2})-\Psi(a^*)=\log(1+\frac{n-q}{2a^*}).
%\end{align}

\subsection{Proof of Proposition 5}\label{pro:prop:finite}
With the hyperpriors $[a]\sim \Gamma(\zeta,\iota)$ and $[b]\propto \mathbf{1}$, the profile maximizer \eqref{eqn:opti} still holds. MMAP then aims to maximize 
\begin{align}\label{eqn:opti:1}
 \tilde{p}( \by_n;a,b^*(a)) = \frac{(a w)^a\Gamma(a+\frac{n-q}{2})(n-q)^{-a} }{\Gamma(a)(a w/(n-q)+w/2)^{a+\frac{n-q}{2}}}\frac{\iota^{\zeta}a^{\zeta-1}\exp(-\iota\cdot a)}{\Gamma(c)}.
\end{align}
Calculating the derivative of logarithm \eqref{eqn:opti:1}, we obtain
$$
\textstyle\nabla_a \log \tilde{p}( \by_n;a,b^*(a)) = \psi(a+\frac{n-q}{2}) - \psi(a) - \log(1+\frac{n-q}{2a}) + \frac{\zeta-1}{a}-\iota,
$$
which is a decreasing function with $\lim_{a\rightarrow \infty} \nabla_a \log  \tilde{p}( \by_n;a,b^*(a))<0$. This guarantees a finite solution for the MMAP optimization problem over $(a,b) \in \mathbb{R}^2_+$.
\end{proof}

\subsection{Proof of Theorem 1}\label{pro:thm:con}
The proof of Theorem 1 requires the following three lemmas. The first lemma provides an upper bound for the RKHS norm of function $f$ for changing scale parameters:

\begin{lemma}[Lemma~4 in \cite{bull2011convergence}]\label{lem:bounded-norm}
If $f \in \mathcal{H}_{\boldsymbol{\theta}}(\Omega)$, then $f \in \mathcal{H}_{\boldsymbol{\theta}'}(\Omega)$ 
  for all $\mathbf{0} < \boldsymbol{\theta}' \le \boldsymbol{\theta}$, and 
  \[
  \textstyle \norm{f}_{\mathcal{H}_{\boldsymbol{\theta}'}(\Omega)}^2 
  \le \left(\prod_{i=1}^d \theta_i/\theta'_i\right) 
  \norm{f}_{\mathcal{H}_{\boldsymbol{\theta}}(\Omega)}^2.
  \]
\end{lemma}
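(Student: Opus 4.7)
My plan is to reduce the statement to a pointwise inequality between the spectral densities of $K_{\boldsymbol{\theta}}$ and $K_{\boldsymbol{\theta}'}$, exploiting the isotropy/radial-monotonicity of $\widehat{C}$ guaranteed by Assumption~\ref{ass:K}. First I would work on $\mathbb{R}^d$. Since $C$ is continuous and integrable, by Bochner's theorem together with the Fourier characterization of translation-invariant RKHS norms (Wendland 2004, Thm.~10.12), any $g\in\mathcal{H}_{\boldsymbol{\theta}}(\mathbb{R}^d)$ satisfies
\begin{equation*}
\|g\|_{\mathcal{H}_{\boldsymbol{\theta}}(\mathbb{R}^d)}^2 = \int_{\mathbb{R}^d}\frac{|\widehat{g}(\boldsymbol{\xi})|^2}{\widehat{K}_{\boldsymbol{\theta}}(\boldsymbol{\xi})}\,d\boldsymbol{\xi},
\end{equation*}
and a routine change of variables in the definition of $K_{\boldsymbol{\theta}}$ yields $\widehat{K}_{\boldsymbol{\theta}}(\boldsymbol{\xi}) = \bigl(\prod_{i=1}^d \theta_i\bigr)\,\widehat{C}(\theta_1\xi_1,\ldots,\theta_d\xi_d)$.

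The core of the argument is a pointwise comparison. For $\mathbf{0}<\boldsymbol{\theta}'\le\boldsymbol{\theta}$ componentwise, $\|(\theta_1\xi_1,\ldots,\theta_d\xi_d)\|_2 \ge \|(\theta'_1\xi_1,\ldots,\theta'_d\xi_d)\|_2$ for every $\boldsymbol{\xi}$. By Assumption~\ref{ass:K}, $\widehat{C}$ is isotropic and radially non-increasing, so $\widehat{C}(\theta_1\xi_1,\ldots,\theta_d\xi_d) \le \widehat{C}(\theta'_1\xi_1,\ldots,\theta'_d\xi_d)$ pointwise. Dividing by the appropriate volume factors and rearranging gives
\begin{equation*}
\frac{1}{\widehat{K}_{\boldsymbol{\theta}'}(\boldsymbol{\xi})} \;\le\; \Bigl(\prod_{i=1}^d \theta_i/\theta'_i\Bigr)\,\frac{1}{\widehat{K}_{\boldsymbol{\theta}}(\boldsymbol{\xi})}.
\end{equation*}
Multiplying by $|\widehat{g}(\boldsymbol{\xi})|^2$ and integrating gives $\|g\|_{\mathcal{H}_{\boldsymbol{\theta}'}(\mathbb{R}^d)}^2 \le \bigl(\prod_i \theta_i/\theta'_i\bigr)\|g\|_{\mathcal{H}_{\boldsymbol{\theta}}(\mathbb{R}^d)}^2$, which is the claim on $\mathbb{R}^d$.

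Next I would lift this to the bounded domain $\Omega$ via the standard restriction characterization of RKHS norms on subsets (Wendland 2004, Thm.~10.46): $\|f\|_{\mathcal{H}_{\boldsymbol{\theta}}(\Omega)} = \inf\{\|g\|_{\mathcal{H}_{\boldsymbol{\theta}}(\mathbb{R}^d)} : g|_{\Omega}=f\}$. Given $f\in\mathcal{H}_{\boldsymbol{\theta}}(\Omega)$ and $\epsilon>0$, pick an extension $g$ with $\|g\|_{\mathcal{H}_{\boldsymbol{\theta}}(\mathbb{R}^d)}^2 \le \|f\|_{\mathcal{H}_{\boldsymbol{\theta}}(\Omega)}^2 + \epsilon$. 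Applying the $\mathbb{R}^d$ bound places $g\in\mathcal{H}_{\boldsymbol{\theta}'}(\mathbb{R}^d)$, and since $g|_{\Omega}=f$, the infimum definition yields
\begin{equation*}
\|f\|_{\mathcal{H}_{\boldsymbol{\theta}'}(\Omega)}^2 \;\le\; \|g\|_{\mathcal{H}_{\boldsymbol{\theta}'}(\mathbb{R}^d)}^2 \;\le\; \Bigl(\prod_{i=1}^d \theta_i/\theta'_i\Bigr)\bigl(\|f\|_{\mathcal{H}_{\boldsymbol{\theta}}(\Omega)}^2 + \epsilon\bigr).
\end{equation*}
Letting $\epsilon\downarrow 0$ closes the argument.

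The main obstacle is the monotonicity step: without the isotropy and radial non-increasingness of $\widehat{C}$ from Assumption~\ref{ass:K}, componentwise $\boldsymbol{\theta}'\le\boldsymbol{\theta}$ would not translate into a pointwise inequality on the spectral densities, and the volume prefactor $\prod_i \theta_i/\theta'_i$ would not be the correct constant. The extension-and-restriction step is formally routine but must be invoked carefully so that the infimum definition of the norm on $\Omega$ transfers both the inclusion $\mathcal{H}_{\boldsymbol{\theta}}(\Omega) \subset \mathcal{H}_{\boldsymbol{\theta}'}(\Omega)$ and the multiplicative constant cleanly.
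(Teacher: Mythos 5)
The paper does not prove this lemma at all: it is quoted verbatim from Bull (2011, Lemma~4) and used as a black box in the proof of Theorem~\ref{thm:con}. Your argument is correct and is essentially the standard proof from that reference: the change of variables giving $\widehat{K}_{\boldsymbol{\theta}}(\boldsymbol{\xi}) = \bigl(\prod_i\theta_i\bigr)\widehat{C}(\theta_1\xi_1,\ldots,\theta_d\xi_d)$, the pointwise spectral comparison via isotropy and radial monotonicity of $\widehat{C}$ (which is exactly what Assumption~\ref{ass:K} supplies), and the lift from $\mathbb{R}^d$ to $\Omega$ through the restriction characterization of the RKHS norm are all the right ingredients, correctly assembled.
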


The following two lemmas describe the posterior distribution of $f$ with trend in terms of $\mathcal{H}_{\boldsymbol{\theta}}(\Omega)$. For simplicity, we denote $k_{\bx_i} = K_{\boldsymbol{\theta}}(\bx_i,\cdot)\in\cH_{\boldsymbol{\theta}}(\Omega)$ for $i = 1,...,n$.
\begin{lemma}\label{lem:repre}
  Suppose $f \in \mathcal{H}_{\boldsymbol{\theta}}(\Omega)$, and $\bp(\bx)\in \mathcal{H}_{\boldsymbol{\theta}}(\Omega)$. Let $g(\bx)=f(\bx) -\bp(\bx)^\top \boldsymbol{\beta}$. Then the estimates $\hat{f}_n(\bx)$ and $\hat{\boldsymbol{\beta}}_n$ in Lemma 1 solves the following optimization problem:
      \begin{align}\label{eqn:equ}
          &\min_{\boldsymbol{\beta}, g(\bx)} \norm{g}_{\mathcal{H}_{\boldsymbol{\theta}}(\Omega)}^2\\ 
      \text{subject to}& \quad  \bp(\bx_i)^\top\boldsymbol{\beta}+ g(\bx_i) = f(\bx_i), \quad i = 1, \cdots, n,\notag
      \end{align}
     with minimum value $n \hat{\sigma}_n^2 $.
\end{lemma}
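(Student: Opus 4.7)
The plan is to view this as a standard representer-theorem characterization of the kriging predictor, and to carry out the minimization in two stages: first minimize over $g$ with $\boldsymbol{\beta}$ fixed, then minimize over $\boldsymbol{\beta}$. Feasibility is immediate from the assumptions $f,\bp\in\mathcal{H}_{\boldsymbol{\theta}}(\Omega)$: for any $\boldsymbol{\beta}$, the choice $g = f - \bp^\top\boldsymbol{\beta}$ lies in $\mathcal{H}_{\boldsymbol{\theta}}(\Omega)$ and satisfies the constraints.

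For the inner minimization, I would invoke the orthogonal decomposition argument in the RKHS. Let $V_n = \mathrm{span}\{k_{\bx_1},\ldots,k_{\bx_n}\}\subset\mathcal{H}_{\boldsymbol{\theta}}(\Omega)$ and write $g = g_\parallel + g_\perp$ with $g_\parallel\in V_n$ and $g_\perp\in V_n^{\perp}$. By the reproducing property, $g_\perp(\bx_i)=\langle g_\perp,k_{\bx_i}\rangle_{\mathcal{H}_{\boldsymbol{\theta}}} = 0$, so the interpolation constraints depend only on $g_\parallel$, while $\|g\|^2_{\mathcal{H}_{\boldsymbol{\theta}}} = \|g_\parallel\|^2_{\mathcal{H}_{\boldsymbol{\theta}}} + \|g_\perp\|^2_{\mathcal{H}_{\boldsymbol{\theta}}}$. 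Hence the minimizer lies in $V_n$ and can be written $g_\parallel(\bx)=\sum_i\alpha_i k_{\bx_i}(\bx)=\bk_n^\top(\bx)\boldsymbol{\alpha}$. Plugging in the constraints gives $\mathbf{K}_n\boldsymbol{\alpha}=\by_n-\mathbf{P}_n\boldsymbol{\beta}$, hence $\boldsymbol{\alpha}=\mathbf{K}_n^{-1}(\by_n-\mathbf{P}_n\boldsymbol{\beta})$, and the bilinear form from the definition of $\langle\cdot,\cdot\rangle_{K_{\boldsymbol{\theta}}}$ yields $\|g_\parallel\|^2_{\mathcal{H}_{\boldsymbol{\theta}}}=\boldsymbol{\alpha}^\top\mathbf{K}_n\boldsymbol{\alpha}=(\by_n-\mathbf{P}_n\boldsymbol{\beta})^\top\mathbf{K}_n^{-1}(\by_n-\mathbf{P}_n\boldsymbol{\beta})$.

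For the outer minimization over $\boldsymbol{\beta}$, the profiled objective is a strictly convex quadratic (since $\mathbf{K}_n^{-1}\succ 0$ and $\mathbf{P}_n$ has full column rank for $n>q$), so setting its gradient to zero gives the normal equation $\mathbf{P}_n^\top\mathbf{K}_n^{-1}(\by_n-\mathbf{P}_n\boldsymbol{\beta})=\mathbf{0}$, whose unique solution is exactly $\hat{\boldsymbol{\beta}}_n=\mathbf{G}_n^{-1}\mathbf{P}_n^\top\mathbf{K}_n^{-1}\by_n$ as stated in Lemma~\ref{thm:post}. Reassembling the optimal function yields $\bp(\bx)^\top\hat{\boldsymbol{\beta}}_n + \bk_n^\top(\bx)\mathbf{K}_n^{-1}(\by_n-\mathbf{P}_n\hat{\boldsymbol{\beta}}_n)$, which matches $\hat{f}_n(\bx)$ in~\eqref{eqn:hat}, and the optimal value becomes $(\by_n-\mathbf{P}_n\hat{\boldsymbol{\beta}}_n)^\top\mathbf{K}_n^{-1}(\by_n-\mathbf{P}_n\hat{\boldsymbol{\beta}}_n)=n\hat{\sigma}_n^2$ by the MLE formula stated just before~\eqref{eqn:EI}.

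The proof is essentially routine once the RKHS orthogonal-decomposition step is in place; there is no deep obstacle. The only step requiring some care is justifying the decomposition $g = g_\parallel + g_\perp$ and the fact that $g_\perp$ automatically vanishes at the design points, which hinges on the reproducing property and the finite-dimensionality of $V_n$ (so that the projection onto $V_n$ is well-defined). Once this is secured, the remaining algebra is purely linear and the identification with $\hat{f}_n$, $\hat{\boldsymbol{\beta}}_n$, and $n\hat{\sigma}_n^2$ follows by direct comparison with the formulas in Lemma~\ref{thm:post}.
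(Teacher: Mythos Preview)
Your proposal is correct and follows essentially the same approach as the paper: both use the orthogonal decomposition $g=g_\parallel+g_\perp$ in the RKHS with respect to $\mathrm{span}\{k_{\bx_1},\ldots,k_{\bx_n}\}$, invoke the reproducing property to show $g_\perp$ is irrelevant, and reduce to the finite-dimensional quadratic program $\min_{\boldsymbol{\upsilon},\boldsymbol{\beta}}\boldsymbol{\upsilon}^\top\mathbf{K}_n\boldsymbol{\upsilon}$ subject to $\mathbf{P}_n\boldsymbol{\beta}+\mathbf{K}_n\boldsymbol{\upsilon}=\by_n$. Your write-up is in fact more explicit than the paper's, which stops at the reduced quadratic program and simply asserts that its solution matches the estimates, whereas you carry out the normal-equation step for $\boldsymbol{\beta}$ and verify the optimal value $n\hat{\sigma}_n^2$ directly.
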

\begin{proof}
    Since $\Omega$ is a compact domain, $g(\bx) = f(\bx) - \bp(\bx)^\top \boldsymbol{\beta}$, still belongs to the space $\cH_{\boldsymbol{\theta}}(\Omega)$. Let $W = \textrm{Span}(k_{\bx_1}, \dots, k_{\bx_n})$, and decompose $g = 
      g^{\parallel} + g^\perp$, where $g^{\parallel} \in W$, 
      $g^\perp \in W^\perp$, the orthogonal complement space of $W$. It follows that $g^\perp(\bx_i) = \langle 
      g^\perp, k_{
      \bx_i} \rangle = 0$, Since $g^\perp$ affects the 
      optimization only through $\norm{g}_{\mathcal{H}_{\boldsymbol{\theta}}(\Omega)}$, the minimizer must satisfy $g^\perp = 0$.
      
      We can now represent $g$ as $g = \sum_{i=1}^n \upsilon_i k_{\bx_i}$, for some $\upsilon_i \in\RR $, $i = 1,...,n$.
      The optimization problem~\eqref{eqn:equ} then becomes
      \[\min_{\boldsymbol{\upsilon},\boldsymbol{\beta}} \boldsymbol{\upsilon}^\top \mathbf{K}_n \boldsymbol{\upsilon} \quad \text{subject to } \quad
      \mathbf{P}_n\boldsymbol{\beta} + \mathbf{K}_n\boldsymbol{\upsilon} = \by_n,\]
which gives the estimates in Theorem 1.
\end{proof}

The third lemma gives a useful upper bound on the difference between the true function $f$ and the GP predictor $\hat{f}_n$:
\begin{lemma}[Theorem 11.4 in \cite{wendland2004scattered}]\label{lem:err}
For $f\in\cH_{\boldsymbol{\theta}}(\Omega)$, the GP predictor $\hat{f}_n$ has the following pointwise error bound:
\begin{align}
|f(\bx) - \hat{f}_n (\bx) |\leq s_n(\bx) \norm{f}_{\cH_{\boldsymbol{\theta}}(\Omega)}.
\end{align}
\end{lemma}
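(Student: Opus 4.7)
The plan is to prove this pointwise error bound via a reproducing-kernel-Hilbert-space (RKHS) representation of the universal kriging predictor, followed by Cauchy--Schwarz. First I would fix $\bx \in \Omega$ and use the reproducing property to write $f(\bx) = \langle f, k_\bx\rangle_{\cH_{\boldsymbol{\theta}}}$ with $k_\bx := K_{\boldsymbol{\theta}}(\bx,\cdot)$. Next, reading off \eqref{eqn:hat} after substituting $\hat{\boldsymbol{\beta}}_n = \mathbf{G}_n^{-1}\mathbf{P}_n^\top \mathbf{K}_n^{-1}\by_n$, I can express the predictor as a purely linear functional of $\by_n$, namely $\hat{f}_n(\bx) = c_n(\bx)^\top \by_n$ with
\begin{equation*}
c_n(\bx)^\top \;=\; \bk_n^\top(\bx)\mathbf{K}_n^{-1} \;+\; \bh_n^\top(\bx)\mathbf{G}_n^{-1}\mathbf{P}_n^\top \mathbf{K}_n^{-1}.
\end{equation*}
Because $y_i = f(\bx_i) = \langle f, k_{\bx_i}\rangle_{\cH_{\boldsymbol{\theta}}}$, this yields the RKHS representation $\hat{f}_n(\bx) = \langle f, \eta_\bx\rangle_{\cH_{\boldsymbol{\theta}}}$, where $\eta_\bx := \sum_{i=1}^n c_{n,i}(\bx)\, k_{\bx_i}$ lies in $\cH_{\boldsymbol{\theta}}(\Omega)$ (this is why the setting implicitly needs $\bp_j \in \cH_{\boldsymbol{\theta}}(\Omega)$, as assumed throughout Section~\ref{sec:the}).

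Given the representation, Cauchy--Schwarz immediately gives
\begin{equation*}
|f(\bx) - \hat{f}_n(\bx)| \;=\; |\langle f,\, k_\bx - \eta_\bx\rangle_{\cH_{\boldsymbol{\theta}}}| \;\le\; \|f\|_{\cH_{\boldsymbol{\theta}}}\,\|k_\bx - \eta_\bx\|_{\cH_{\boldsymbol{\theta}}},
\end{equation*}
so the remaining task is to verify that $\|k_\bx - \eta_\bx\|_{\cH_{\boldsymbol{\theta}}}^2$ equals the universal kriging variance $s_n^2(\bx)$ in \eqref{eqn:post_var}. Expanding the squared norm by bilinearity and the reproducing property yields
\begin{equation*}
\|k_\bx - \eta_\bx\|^2 \;=\; K(\bx,\bx) \;-\; 2\,c_n(\bx)^\top \bk_n(\bx) \;+\; c_n(\bx)^\top \mathbf{K}_n\, c_n(\bx).
\end{equation*}
Substituting the explicit form of $c_n(\bx)$, I would repeatedly invoke the identities $\mathbf{G}_n = \mathbf{P}_n^\top \mathbf{K}_n^{-1}\mathbf{P}_n$ and $\mathbf{P}_n^\top \mathbf{K}_n^{-1}\bk_n(\bx) = \bp(\bx) - \bh_n(\bx)$ to collapse the cross terms; after cancellation, what survives is precisely $K(\bx,\bx) - \bk_n^\top(\bx)\mathbf{K}_n^{-1}\bk_n(\bx) + \bh_n^\top(\bx)\mathbf{G}_n^{-1}\bh_n(\bx) = s_n^2(\bx)$. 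Combining the two displays proves the lemma.

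The main obstacle is the bookkeeping in that last algebraic reduction: multiple $\bh_n^\top\mathbf{G}_n^{-1}\bp$ cross terms appear and must cancel exactly, and one must use the defining relation of $\bh_n$ at the right moment to collapse $\mathbf{P}_n^\top \mathbf{K}_n^{-1}\bk_n$ into $\bp - \bh_n$. A cleaner alternative would be to go through Lemma~\ref{lem:repre}: since that lemma characterizes $(\hat{\boldsymbol{\beta}}_n,\hat{g}_n)$ as the minimum-norm solution to the interpolation problem, one can express $k_\bx - \eta_\bx$ as the orthogonal projection of $k_\bx$ onto the subspace $W^\perp \cap \{p_j\}^\perp$, so that $\|k_\bx - \eta_\bx\|^2$ is simply the squared projection length --- this is essentially how Wendland's Theorem~11.4 is proved in the no-trend case. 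Either route delivers the bound $|f(\bx)-\hat{f}_n(\bx)| \le s_n(\bx)\|f\|_{\cH_{\boldsymbol{\theta}}}$ claimed in the statement.
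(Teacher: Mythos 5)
Your proof is correct: the paper itself does not prove this lemma but simply cites Wendland's Theorem 11.4, and your argument---writing $\hat f_n(\bx)=\langle f,\eta_\bx\rangle_{\cH_{\boldsymbol{\theta}}}$ with $\eta_\bx=\sum_i c_{n,i}(\bx)k_{\bx_i}$, applying Cauchy--Schwarz, and using $\mathbf{G}_n=\mathbf{P}_n^\top\mathbf{K}_n^{-1}\mathbf{P}_n$ together with $\mathbf{P}_n^\top\mathbf{K}_n^{-1}\bk_n(\bx)=\bp(\bx)-\bh_n(\bx)$ to verify $\norm{k_\bx-\eta_\bx}^2_{\cH_{\boldsymbol{\theta}}}=s_n^2(\bx)$ (the cross terms do cancel as you claim)---is precisely the standard power-function proof behind that citation, adapted to the universal-kriging weights. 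One minor quibble: the assumption $p_j\in\cH_{\boldsymbol{\theta}}(\Omega)$ is not actually needed for this lemma, since $\eta_\bx$ is a finite linear combination of the kernel sections $k_{\bx_i}$ with scalar coefficients $c_{n,i}(\bx)$ regardless of where the basis functions live; that hypothesis is only required elsewhere (e.g., in Lemma~\ref{lem:repre}, where $f-\bp^\top\boldsymbol{\beta}$ must remain in the RKHS).
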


With these lemmas in hand, we now proceed with the proof of Theorem 1:
\begin{proof}
Recall, we denote $I_n(\bx) = y_n^*-\hat{f}_n(\bx)$. For simplicity, we further denote $u_n(\bx) = (y_n^* - f(\bx))_+$ and $$\tau_n(x) = x\Phi_{\nu_n}(x)+\sqrt{\frac{\nu_n}{\nu_n-2}}\cdot \phi_{\nu_n-2}\left(\frac{x}{\sqrt{\nu_n/(\nu_n-2)}}\right).$$
The HEI criterion can then be written as:
\begin{align}
\label{eqn:heipf}
\textstyle\HEI_n(\bx) = \tilde{\sigma}_ns_n(\bx)\tau_n\left(\frac{I_n(\bx)}{\tilde{\sigma}_ns_n(\bx)}\right). 
\end{align}
Since $\tau_n'(x) = \Phi_{\nu_n}(x)\geq 0$, $\tau_n(x)$ must be non-decreasing in $x$. Moreover, we denote $R = \norm{f}_{\cH_{\boldsymbol{\theta}^U}(\Omega)}$, then by Lemma~\ref{lem:bounded-norm}, we have $\norm{f}_{\mathcal{H}_{\tilde{\boldsymbol{\theta}}_n}(\Omega)}\leq R\sqrt{\prod_{i=1}^d\frac{\theta^U_i}{\theta^L_i}}:=S$, and by Lemma~\ref{lem:err}, if $u_n(\bx)>0$, then $|u_n(\bx)-I_n(\bx)|\leq s_n(\bx) S$. Thus, 
\begin{align}\label{eqn:EI1}
\textstyle\HEI_n(\bx)\geq \tilde{\sigma}_ns_n(\bx)\tau_n\left(\frac{u_n(\bx)-Ss_n(\bx)}{\tilde{\sigma}_ns_n(\bx)}\right)\geq \tilde{\sigma}_ns_n(\bx)\tau_n\left(\frac{-S}{\tilde{\sigma}_n}\right).
\end{align}
Note that $\tau_n(x) = x - x\Phi_{\nu_n}(-x)+ \sqrt{\nu_n/(\nu_n-2)} \phi_{\nu_n-2}(x/\sqrt{\nu_n/(\nu_n-2)}) = x + \tau_n(-x).$ Therefore,
\begin{align}\label{eqn:EI2}
\textstyle \HEI_n(\bx) \geq \tilde{\sigma}_ns_n(\bx) \tau_n\left(\frac{u_n(\bx)-Ss_n(\bx)}{\tilde{\sigma}_ns_n(\bx)}\right)\geq u_n(\bx)-Ss_n(\bx).
\end{align}
On the one hand, by inequalities~\eqref{eqn:EI1} and \eqref{eqn:EI2}, we have the following lower bound on $\textstyle\HEI_n(\bx)$:
\begin{align}\label{eqn:bounded}
\textstyle\HEI_n(\bx)\geq \frac{\tilde{\sigma}_n \tau_n(-S/\tilde{\sigma}_n)}{S+\tilde{\sigma}_n \tau_n(-S/\tilde{\sigma}_n)}u_n(\bx) \geq \frac{\tau_n(-S/\tilde{\sigma}_n)}{\tau_n(S/\tilde{\sigma}_n)}u_n(\bx).
\end{align}
Note the above inequality also holds for $u_n(\bx) =0$. Therefore it holds for any $u_n(\bx).$

On the other hand, note that $\tau_n(x) = x+\tau_n(-x)\leq x+\tau_n(0)$ for any $x\geq 0$. Moreover $\tau_n(0) = \sqrt{\nu_n/(\nu_n-2)}\phi_{\nu_n-2}(0)\leq 2$, since $n>q+1$, $\nu_n/(\nu_n-2)\leq 3$ and $\phi_{\nu_n}(0)\leq \phi(0)\leq 2/5$. Thus, $\tau(x)\leq x+2$ for $x\geq 0$. Plugging this into \eqref{eqn:heipf}, we get the following upper bound on $\textstyle\HEI_n(\bx)$:
\begin{align}\label{eqn:boundup}
\textstyle\HEI_n(\bx)\leq \tilde{\sigma}_ns_n(\bx)\tau_n\left(\frac{u_n(\bx)+Ss_n(\bx)}{\tilde{\sigma}_ns_n(\bx)}\right)\leq u_n(\bx)+\left(S +2\tilde{\sigma}_n \right)s_n(\bx).
\end{align}

By Lemma 7 of Bull (2011), we know that there exists a constant $C_2$, depending on $\Omega$, $K$, and $\boldsymbol{\theta}^L$ such that for any sequence $\bx_n\in\Omega$ and $k\in \mathbb{N}$, the inequality $$   \textstyle s_n(\bx_{n+1})\geq C_2 k ^{-(\nu \wedge 1)/d} (\log k)^\zeta$$ holds at most $k$ times.

Furthermore, since $\norm{f}_{\mathcal{H}_{\boldsymbol{\theta}^U}(\Omega)}= R$, we have
$$\textstyle\sum_{i=1}^{n} (y_i^*-y_{i+1}^*) = y_1^*-y_{n+1}^*\leq y_1^* - \min_{\bx\in\Omega} f(\bx)\leq 2\norm{f}_{\infty}\leq 2R.$$

Therefore, by $y_n^*-y_{n+1}^*\geq 0$, it follows that $y_n^*-y_{n+1}^*\geq 2Rk^{-1}$ holds at most $k$ times. Otherwise, the above inequality does not hold. Furthermore, by $y_{n+1}^*\leq f(\bx_{n+1})$, we have $y_n^*- f(\bx_{n+1})\geq 2Rk^{-1}$ holds at most $k$ times. Thus, there exists an $n_k\in\mathbb{N}$, with $k\leq n_k\leq 3k$, for which $$\textstyle s_{n_k}(\bx_{n_k+1})\leq C_2 k ^{-(\nu \wedge 1)/d} (\log k)^\zeta\quad \textrm{and}\quad u_{n_k}(\bx_{n_k+1})\leq 2Rk^{-1}.$$ 
Since $y_n^*$ is non-increasing in $n$, for $3k\leq n < 3(k + 1)$, we further have
\begin{align}
    y_n^* -f(\bx^*) 
    &\le y_{n_k}^* - f(\bx^*)\\ %~~~~~~~~~~~~~~~~~~~~~~~~~~~~~~~~~~~~~~~~~~~~~~~~~~~~~~~~~~~~~~~~~~~~~~~~\textrm{(Since $n\geq n_k$)}\\
    &\le \frac{\tau_{n_k}(S/\tilde{\sigma}_{n_k})}{\tau_{n_k}(-S/\tilde{\sigma}_{n_k})} \mathrm{HEI}_{n_k}(\bx^*)\\%~~~~~~~~~~~~~~~~~~~~~~~~~~~~~~~~~~~~~~~~~~~~~~~~~~~~~~~~~~~~~~~~\textrm{(By \eqref{eqn:bounded})}\notag\\
    &\le \frac{\tau_{n_k}(S\tilde{\sigma}_{n_k})}{\tau_{n_k}(-S\tilde{\sigma}_{n_k})} \mathrm{HEI}_{n_k}(\bx_{n_k+1})\\%~~~~~~~~~~~~~~~~~~~~~~~~~~~~~~~~~~~~~~~~~~~~~~~~~~\textrm{(By optimality)}\\
   &\le \frac{\tau_{n_k}(S/\tilde{\sigma}_{n_k})}{\tau_{n_k}(-S/\tilde{\sigma}_{n_k})} \left( 2Rk^{-1} + C_2\left(S +2\tilde{\sigma}_{n_k} \right)k^{-(\nu \wedge 1)/d}(\log k)^{\zeta}\right)\\
   &\le \frac{\tau_{n_k}(S/c)}{\tau_{n_k}(-S/c)} \left( 2Rk^{-1} + C_3Sk^{-(\nu \wedge 1)/d}(\log k)^{\zeta}\right), 
 \end{align}
where the second last inequality holds from Lemma~\ref{lem:repre} and the last inequality holds from Lemma~\ref{lem:bounded-norm} since $\tilde{\sigma}_{n_k}$ is based on the MAP estimate $\tilde{\boldsymbol{\theta}}_n$. From this, we obtain the desired result
\begin{align}
\textstyle f(\bx_n^*) - \min_{\bx} f(\bx) = \left\{
\begin{array}{ll}
\mathcal{O}(n^{-\nu/d}(\log n)^\alpha),& \nu\leq 1,\\
\mathcal{O}(n^{-1/d}),& \nu> 1.\\
\end{array}
\right.
\end{align}
\end{proof}

\subsection{Proof of Theorem 2}\label{pro:thm:stab}
With Condition 1, the picked points are quasi-uniform distributed by Theorems 14 and 18 in \cite{wenzel2020novel}. Therefore, with large enough $n$, the fill distance can be small enough. Then by Theorem 1 in \cite{wynne2020convergence}, we obtain the desired results. 

\end{appendix}

\end{document}